\def\url@leostyle{%
  \@ifundefined{selectfont}{\def\UrlFont{\small\sf}}{\def\UrlFont{\small\sf}}}
\begin{document}

\theoremstyle{definition}
\newtheorem{defn}{Definition}[section]
\newtheorem{experiment}[defn]{Experiment}
\newtheorem{proc}[defn]{Procedure}

\newcommand{\RR}{\mathbb{R}}
\newcommand{\PP}{\mathbb{P}}
\newcommand{\EE}{\mathbb{E}}

\newcommand{\cA}{\mathcal{A}}
\newcommand{\cD}{\mathcal{D}}
\newcommand{\bD}{\mathbf{D}}
\newcommand{\bX}{\mathbf{X}}
\newcommand{\bY}{\mathbf{Y}}
\newcommand{\cM}{\mathcal{M}}
\newcommand{\bbE}{\mathbb{E}}
\newcommand{\cN}{\mathcal{N}}
\newcommand{\cX}{\mathcal{X}}
\newcommand{\cY}{\mathcal{Y}}
\newcommand{\zb}{\mathbf{z}}
\newcommand{\ub}{\mathbf{u}}
\newcommand{\bxi}{\bm{\xi}}
\newcommand{\Ib}{\mathbf{I}}
\newcommand{\bgamma}{\bm{\gamma}}

\newtheorem{theorem}{Theorem}[section]
\newtheorem{corollary}[theorem]{Corollary}
\newtheorem{lemma}[theorem]{Lemma}
\newtheorem{proposition}[theorem]{Proposition}
\newtheorem{definition}[theorem]{Definition}

\newcommand{\opnorms}[1]{%
  \left|\mkern-1.5mu\left|\mkern-1.5mu\left|
   #1
  \right|\mkern-1.5mu\right|\mkern-1.5mu\right|
}
\makeatletter
\newcommand*{\rom}[1]{\expandafter\@slowromancap\romannumeral #1@}
\makeatother

\renewcommand{\baselinestretch}{1.04}
\renewcommand{\floatpagefraction}{.8}

\newcommand{\bfdataset}[1]{{\small{\textsf{#1}}}}
\newcommand{\bfattack}[1]{{\small{\textsf{#1}}}}

\newcommand{\bnote}[1]{\textcolor{purple}{\bf \emph{Bargav: #1}}}

\newcommand{\dnote}[1]{\textcolor{blue}{\bf \emph{Dave: #1}}}

\newcommand{\gnote}[1]{\textcolor{blue}{\bf \emph{Gu: #1}}}

\newcommand{\lnote}[1]{\textcolor{green}{\bf \emph{Lingxiao: #1}}}

\newcommand{\todo}[1]{\textcolor{red}{\bf \emph{TODO: #1}}}

\newcommand\shortsection[1]{\vspace{6pt}{\noindent\bf #1.}}

\newcommand\shortdsection[1]{\vspace{3pt}{\noindent -- #1:}}

\title{Revisiting Membership Inference \\ Under Realistic Assumptions}

\iffalse
\author{%
  Bargav Jayaraman \\
  Department of Computer Science\\
  University of Virginia\\
  \And Lingxiao Wang \\ 
  Department of Computer Science\\
  University of California Los Angeles\\
  \And Katherine Knipmeyer \\
  Department of Computer Science\\
  University of Virginia\\
  \And Quanquan Gu \\ 
  Department of Computer Science\\
  University of California Los Angeles\\
  \And David Evans \\
  Department of Computer Science\\
  University of Virginia\\
}
\fi

\author{%
  Bargav Jayaraman$^\vee$, 
  Lingxiao Wang$^\gamma$, 
Katherine Knipmeyer$^\vee$, \\
{\bf Quanquan Gu}$^\gamma${\bf ,}
{\bf David Evans}$^\vee$ \\[1.0ex]
  \emph{University of Virginia} ($\vee$) and
  \emph{University of California Los Angeles} ($\gamma$)
  \\
}

\maketitle

\begin{abstract}\noindent
We study membership inference in settings where some of the assumptions typically used in previous research are relaxed. First, we consider skewed priors, to cover cases such as when only a small fraction of the candidate pool targeted by the adversary are actually members and develop a PPV-based metric suitable for this setting. This setting is more realistic than the balanced prior setting typically considered by researchers. Second, we consider adversaries that select inference thresholds according to their attack goals and develop a threshold selection procedure that improves inference attacks. Since previous inference attacks fail in imbalanced prior setting, we develop a new inference attack based on the intuition that inputs corresponding to training set members will be near a local minimum in the loss function, and show that an attack that combines this with thresholds on the per-instance loss can achieve high PPV even in settings where other attacks appear to be ineffective.
\end{abstract}

\section{Introduction}
Differential privacy has become the gold standard for performing any privacy-preserving statistical analysis over sensitive data. 
Its privacy-utility tradeoff is controlled by the privacy loss budget parameter $\epsilon$ (and failure probability $\delta$). 
While it is a well known fact that larger privacy loss budgets lead to more leakage, it is still an open question how low privacy loss budgets should be to provide meaningful privacy in practice. 

Although differential privacy provides strong bounds on the worst-case privacy loss, it does not elucidate what privacy attacks could be realized in practice. Attacks, on the other hand, provide an empirical lower bound on privacy leakage for a particular setting. Many attacks on machine learning algorithms have been proposed that aim to infer private information about the model or the training data. These attacks include membership inference~\citep{shokri2017membership, long2017towards, salem2019ml, yeom2018privacy}, attribute inference~\citep{fredrikson2014privacy, fredrikson2015model, yeom2018privacy}, property inference~\citep{ateniese2015hacking, ganju2018property}, model stealing~\citep{lowd2005adversarial, tramer2016stealing} and hyperparameter stealing~\citep{wang2018stealing, yan2018cache}. Of these, membership inference attacks are most directly connected to the differential privacy definition, and thus are a good basis for evaluating the privacy leakage of differentially private mechanisms. Given a small enough privacy loss budget, a differentially private mechanism should provide a defense against these attacks. But, in practice it is rarely possible to obtain a model with enough utility without increasing the privacy loss budget beyond the minimum needed to establish such guarantees. Instead, models are tested using empirical methods using simulated attacks to understand how much an adversary would be able to infer. Previous works on membership inference attacks only consider balanced priors, however, leading to a skewed understanding of inference risk in cases where models are likely to face adversaries with imbalanced priors. In this work, we develop a metric based on positive predictive value that captures the inference risk even in scenarios where the priors are skewed, and introduce a new attack strategy that shows models are vulnerable to inference attacks even in settings where previous attacks would be unable to infer anything useful.

\shortsection{Theoretical Contributions}
Motivated by recent results~\citep{jayaraman2019evaluating, liu2019investigating}, we aim to develop more useful privacy metrics. Similarly to \cite{liu2019investigating}, we adopt a hypothesis testing perspective on differential privacy in which the adversary uses hypothesis testing on the differentially private mechanism's output to make inferences about its private training data. We use the recently proposed $f$-differential privacy notion (see Section~\ref{sec:fdp}) to bound the privacy leakage of the mechanism. Using this hypothesis testing framework, we tighten the theoretical bound on the advantage metric (Section~\ref{sec:advantage_metric}). Then, we show that this metric alone does not suffice in most realistic scenarios since it does not consider the prior probability of the data distribution from which the adversary chooses records.
We propose using positive predictive value (PPV) in conjunction with the advantage metric as it captures this notion, and provide a theoretical analysis of this metric (Section~\ref{sec:ppv_metric}). 

\shortsection{Empirical Contributions}
We provide a threshold selection procedure that can be used to improve any threshold-based inference attack to better capture how an adversary with a particular goal would use the attack (Section~\ref{sec:threshold_selection}). We use this procedure for the loss-based attack of \cite{yeom2018privacy} and the confidence-based attack of \cite{shokri2017membership}, as well as for two new attacks. We propose a novel inference attack strategy that samples points around the candidate input to gauge if it is near a local minimum in the loss function (Section~\ref{sec:merlin}). The \bfattack{Merlin} attack uses this strategy to decide if an input is a member based on a threshold on the ratio of samples where the loss value increases. Our \bfattack{Morgan} attack (Section~\ref{sec:morgan}) combines this with thresholds on the per-instance loss value. Finally, we use these attacks to empirically evaluate the privacy leakage of neural networks trained both with and without differential privacy on four multi-class data sets considering balanced and imbalanced prior data distribution (Section~\ref{sec:threshold_selection_results}). 
Our main empirical findings include:
\begin{itemize}
    \item Non-private models are vulnerable to high-confidence membership inference attacks in both balanced and imbalanced prior settings.
    \item PPV changes with the prior and hence it is a more reliable metric in imbalanced prior settings.
    \item The \bfattack{Morgan} attack achieves higher PPV than \bfattack{Merlin}, which already outperforms previous attacks.
    \item Private models can be vulnerable to our attacks, but only when privacy loss budgets are well above the theoretical guarantees.
\end{itemize}
\section{Related Work}\label{sec:related_works}

While statistical membership inference attacks were demonstrated on genomic data in the late 2000s~\citep{homer2008resolving, sankararaman2009genomic}, the first membership inference attacks against machine learning models were performed by \cite{shokri2017membership}. In these attacks, the attacker exploits the model confidence reflecting overfitting to infer membership. \cite{shokri2017membership} consider the balanced prior setting and evaluate the attack success with an accuracy metric. The attacker trains shadow models similar to the target model, and uses these shadow models to train a membership inference model. 
\cite{yeom2018privacy} proposed a simpler, but usually more effective, attack based on per-instance loss and proposed using membership advantage metric for attack evaluation as it has theoretical interpretation with differential privacy. 

Yeom et al.'s membership advantage metric is useful for balanced prior settings, but not representative of true privacy leakage in realistic scenarios (as we demonstrate in Section~\ref{sec:privacy_metrics}).
\cite{rahman2018membership} evaluate differentially private mechanisms against membership inference attacks and use accuracy and F-score as privacy leakage metrics. But they do not specify the theoretical relationship between their privacy leakage metrics and the privacy loss budgets (i.e., how the metric would scale with increasing privacy loss budget) necessary to gain insight as to what privacy loss budgets are safe even in the worst case scenarios. \cite{jayaraman2019evaluating} evaluate the private mechanisms against both membership inference and attribute inference attacks using the advantage privacy leakage metric of \cite{yeom2018privacy}. All the above works consider a balanced prior data distribution probability and hence are not applicable to settings where the prior probability is skewed. 

\cite{liu2019investigating} theoretically evaluate differentially private mechanisms using a hypothesis testing framework using precision, recall and F-score metrics. They give a theoretical relationship connecting these metrics to the differential privacy parameters ($\epsilon$ and $\delta$) and give some insights for choosing the parameter values based on the background knowledge of the adversary. Recently, \cite{balle2019hypothesis} provided hypothesis testing framework for analysing the relaxed variants of differential privacy that use R\'{e}nyi divergence. However, neither of the above works provide empirical evaluation of privacy leakage of the private mechanisms. In another recent work, \cite{farokhi2020modelling} propose using conditional mutual information as the privacy leakage metric and derive its upper bound based on Kullback–Leibler divergence. Although they provide a relationship between this upper bound and the privacy loss budget, they do not evaluate the empirical privacy leakage in terms of the proposed metric.  We provide a theoretical analysis of privacy leakage metrics and perform membership inference attacks under the more realistic assumptions of different prior data distribution probabilities and an adversary that can adaptively pick inference thresholds based on specific attack goals.
\section{Differential Privacy}

This section provides background on the differential privacy notions we use. Table~\ref{tab:notations} summarizes the notations we use throughout.

\begin{table}[tb]
    \centering\small%\RaggedRight
    \begin{tabular}{cp{11.5cm}}
        Notation & \multicolumn{1}{c}{Description} \\ \toprule
        $\cD = \cX \times \cY$ &  Distribution of records with features sampled form $\cX$ and labels sampled from $\cY$\\
        $S \sim \cD^n$ & Data set $S$ consisting of $n$ records, sampled i.i.d. from distribution $\cD$ \\
        $\zb \sim S$ & Record $\zb$ is picked uniformly random from data set $S$ \\
        $\zb \sim \cD$ & Record $\zb$ is chosen according to the distribution $\cD$ \\
        $\cM_S$ & Model obtained by using a learning algorithm $\cM$ over data set $S$ \\
        $\cA$ & Membership inference adversary \\
        $p$ & Probability of sampling a record from training set \\
        $\gamma$ & Test-to-train set ratio, $(1 - p)/p$\\
        $\epsilon$ & Privacy loss budget of DP  mechanism \\
        $\delta$ & Failure probability of DP mechanism \\
        $\alpha$ & False positive rate of inference adversary \\
        $\beta$ & False negative rate of inference adversary \\
        $\phi$ & Decision threshold of inference adversary; also called rejection rule in hypothesis testing \\
        $\Phi$ & Cumulative distribution function of standard normal distribution \\ \bottomrule
    \end{tabular}
    \caption{Notation}
    \label{tab:notations}
\end{table}

\citet{dwork2006calibrating} introduced a formal notion of privacy that provides a probabilistic information-theoretic security guarantee:
\begin{definition}[Differential Privacy]
A randomized algorithm $\cM$ is \emph{$(\epsilon, \delta)$-differentially private} if for any pair of neighbouring data sets $S, S'$ that differ by one record, and any set of outputs $O$,
\begin{equation*}
    Pr[\cM(S) \in O] \le e^\epsilon Pr[\cM(S') \in O] + \delta.
\end{equation*}
\end{definition}
Thus, the ratio of output probabilities across neighbouring data sets is bounded by the $\epsilon$ and $\delta$ parameters. The intuition behind this definition is to make any pairs of neighbouring data sets indistinguishable to the adversary given the information released.

From a hypothesis testing perspective \citep{wasserman2010statistical,kairouz2017composition,liu2019investigating,balle2019hypothesis,dong2019gaussian}, the adversary can be viewed as  
performing the following hypothesis testing problem given the ouput of either $\cM(S)$ or $\cM(S')$:
\begin{align*}
    H_0 &: \text{the underlying data set is $S$}, \\ 
    H_1 &: \text{the underlying data set is $S'$}.
\end{align*}
According to the definition of differential privacy, given the information released by the private algorithm $\cM$, the hardness of this hypothesis testing problem for the adversary is measured by the worst-case likelihood ratio between the distributions of the outputs $\cM(S)$ and $\cM(S')$. Following \cite{wasserman2010statistical}, a more natural way to characterize the hardness of this hypothesis testing problem is its type I and type II errors and can be formulated in terms of finding a rejection rule $\phi$ which  trades off between type I and type II errors in an optimal way. In other words, for a fixed type I error $\alpha$, the adversary tries to find a rejection rule $\phi$ that minimizes the type II error $\beta$. More specifically, recalling the definition of trade-off function from \cite{dong2019gaussian}:
\begin{definition}[Trade-off Function]\label{def:trade-off}
For any two probability distributions $P$ and $Q$ on the same space, the \emph{trade-off function} $T(P, Q) : [0, 1] \rightarrow [0, 1]$ is defined as: 
\begin{equation*}
    T(P, Q)(\alpha) = \inf \{\beta_\phi : \alpha_\phi \le \alpha\},
\end{equation*}
where the infimum is taken over all (measurable) rejection rules, and $\alpha_\phi$ and $\beta_\phi$ are the type I and type II errors for the rejection rule $\phi$.
\end{definition}
This definition suggests that the larger the trade-off function is, the harder the hypothesis testing problem will be. It has been established in \cite{dong2019gaussian} that a function $f : [0, 1] \rightarrow [0, 1]$ is a trade-off function if and only if it is convex, continuous, non-increasing, and $f(x) \le 1 - x$ for $x \in [0,1]$. Thus, differential privacy can be reformulated as finding the trade-off function $f$ that limits the adversary's hypothesis testing power, i.e., it maximizes the adversary's type II error for any given type I error.

\subsection{\emph{f}-Differential Privacy}\label{sec:fdp}

The hypothesis testing formulation of differential privacy described above leads to the notion of $f$-differential privacy~\citep{dong2019gaussian} ($f$-DP)
which aims to find the optimal trade-off between type I and type II errors and will be used to derive the theoretical upper bounds of our proposed metrics for the privacy leakage.
\begin{definition}[$f$-Differential Privacy]
Let $f$ be a trade-off function. A mechanism $\cM$ is \emph{$f$-differentially private} if for all neighbouring data sets $S$ and $S'$:
\begin{equation*}
    T(\cM(S), \cM(S')) \ge f.
\end{equation*}
\end{definition}

Note that in the above definition, we abuse the notations of $\cM(S)$ and $\cM(S')$ to represent their corresponding distributions. For an $(\epsilon, \delta)$-differentially private algorithm, the trade-off function $f_{\epsilon, \delta}$ is given by Lemma~\ref{lemma:trade_off_eps_del} (proved by \cite{wasserman2010statistical} and \cite{kairouz2017composition}): 

\begin{lemma}[\cite{wasserman2010statistical,kairouz2017composition}]\label{lemma:trade_off_eps_del}
Suppose $\cM$ is an $(\epsilon, \delta)$-differentially private algorithm, then for a false positive rate of $\alpha$, the trade-off function is:
\begin{equation*}
    f_{\epsilon, \delta}(\alpha) = \max\{0, 1 - \delta - e^\epsilon \alpha, e^{-\epsilon}(1 - \delta - \alpha)\}.
\end{equation*}
\end{lemma}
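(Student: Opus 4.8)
The plan is to establish the lower-bound direction, namely that any $(\epsilon,\delta)$-differentially private $\cM$ forces $T(\cM(S),\cM(S'))(\alpha)\ge f_{\epsilon,\delta}(\alpha)$, and then to note why this bound is the operative trade-off function. Write $P=\cM(S)$ and $Q=\cM(S')$ for the two output distributions, and let the adversary's rejection rule be a (possibly randomized) measurable map $\phi$ into $[0,1]$, so that the type I error is $\alpha_\phi=\EE_P[\phi]$ and the type II error is $\beta_\phi=1-\EE_Q[\phi]$. By Definition~\ref{def:trade-off} it suffices to show that every $\phi$ with $\alpha_\phi\le\alpha$ satisfies $\beta_\phi\ge f_{\epsilon,\delta}(\alpha)$, since the trade-off function is the infimum of $\beta_\phi$ over all such rules.

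The first step is to lift the set-based differential privacy inequality to arbitrary $[0,1]$-valued test functions. Using the layer-cake identity $\EE_P[\phi]=\int_0^1 P(\phi>t)\,dt$ together with $P(\phi>t)\le e^\epsilon Q(\phi>t)+\delta$ for every threshold $t$ (which is differential privacy applied to the event $\{\phi>t\}$), and integrating over $t\in[0,1]$, gives
\begin{equation*}
\EE_P[\phi]\le e^\epsilon\,\EE_Q[\phi]+\delta;
\end{equation*}
the additive $\delta$ survives intact because $\phi\in[0,1]$ makes the $t$-integral range over a unit interval. Since the neighbouring relation is symmetric, $\cM$ is also $(\epsilon,\delta)$-private with the roles of $S$ and $S'$ exchanged, so the same argument yields $\EE_Q[\phi]\le e^\epsilon\,\EE_P[\phi]+\delta$.

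To finish, I would apply these two inequalities to $\phi$ and to $1-\phi$. Applying the second inequality to $\phi$ gives $1-\beta_\phi\le e^\epsilon\alpha_\phi+\delta$, i.e.\ $\beta_\phi\ge 1-\delta-e^\epsilon\alpha_\phi$; applying the first inequality to $1-\phi$ gives $1-\alpha_\phi\le e^\epsilon\beta_\phi+\delta$, i.e.\ $\beta_\phi\ge e^{-\epsilon}(1-\delta-\alpha_\phi)$; and trivially $\beta_\phi\ge 0$. Combining these three bounds and using $\alpha_\phi\le\alpha$ with the monotonicity of the right-hand sides in $\alpha_\phi$ yields $\beta_\phi\ge\max\{0,\,1-\delta-e^\epsilon\alpha,\,e^{-\epsilon}(1-\delta-\alpha)\}=f_{\epsilon,\delta}(\alpha)$, and taking the infimum over admissible $\phi$ gives the claimed lower bound. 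The main obstacle is the first step: correctly transferring the definition from measurable sets to randomized rejection rules while keeping the $+\delta$ term exact, which is precisely where the restriction $\phi\in[0,1]$ (and hence the unit range of the layer-cake integral) is used. For the stronger claim that $f_{\epsilon,\delta}$ is exactly the trade-off function rather than merely a lower bound, I would exhibit a canonical worst-case pair $P,Q$ (a two-atom construction saturating both privacy inequalities simultaneously) to show the bound is attained and hence tight.
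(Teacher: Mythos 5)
The paper does not actually prove this lemma --- it is imported directly from \cite{wasserman2010statistical} and \cite{kairouz2017composition} --- so there is no in-paper argument to compare against; I can only assess your proposal on its own terms. Your derivation of the inequality $T(\cM(S),\cM(S'))(\alpha)\ge f_{\epsilon,\delta}(\alpha)$ is correct and is essentially the standard argument from those references: the layer-cake extension of the set-level privacy inequality to $[0,1]$-valued tests, $\EE_P[\phi]\le e^{\epsilon}\EE_Q[\phi]+\delta$, is exactly the right lemma (the unit range of $\phi$ is indeed what keeps the additive $\delta$ intact), and applying the two symmetric inequalities to $\phi$ and to $1-\phi$ yields the three branches of the maximum. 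This inequality direction is also the only content the paper uses --- Theorems~\ref{thm:adv} and~\ref{thm:ppv} invoke the lemma solely as $1-TPR\ge f_{\epsilon,\delta}(\alpha)$ --- and the lemma is really more accurately read as a lower bound on the trade-off function than as an equality for an arbitrary $(\epsilon,\delta)$-DP mechanism. The one genuine slip is in your closing remark about tightness: a two-atom pair cannot witness exactness when $\delta>0$, because any pair of mutually absolutely continuous two-point distributions has $T(P,Q)(0)=1$, whereas $f_{\epsilon,\delta}(0)=1-\delta$; saturating the bound requires an atom carrying $P$-mass $\delta$ with $Q$-mass $0$ (and its mirror image), i.e.\ the standard four-point construction of \cite{kairouz2017composition}. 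Since that part is needed only for exactness and not for the bound the paper relies on, your proposal is sound where it matters.
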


This lemma suggests that higher values of $f_{\epsilon, \delta}(\alpha)$ correspond to more privacy and perfect privacy would require $f_{\epsilon,\delta}(\alpha) = 1 - \alpha$. In addition, increasing $\epsilon$ and $\delta$ decreases  $f_{\epsilon, \delta}(\alpha)$, reflecting the expected reduction in privacy.

\subsection{Gaussian Differential Privacy}\label{sec:gdp}
The Gaussian mechanism is one of the most popular and fundamental approaches for achieving differential privacy, especially for differentially private deep learning~\citep{abadi2016deep}.
Noisy stochastic gradient descent (SGD) and noisy Adam \citep{tf-privacy}, i.e., adding Gaussian noise (Gaussian mechanism) to SGD and Adam, are often used as the underlying private optimizers for training neural networks with privacy guarantees.  Precisely characterizing the privacy loss of the composition of Gaussian mechanisms and deriving its sub-sampling amplification results, leads to the notion of Gaussian differential privacy~\citep{dong2019gaussian}, which belongs to the family of $f$-DP with a single parameter $\mu$ that defines the mean of the Gaussian distribution.

\begin{definition}[$\mu$-Gaussian Differential Privacy]
A mechanism $\cM$ is \emph{$\mu$-Gaussian differentially private} if for all neighbouring data sets $S$ and $S'$:
\begin{equation*}
    T(\cM(S), \cM(S')) \ge G_\mu,
\end{equation*}
where $G_\mu = T(\cN(0, 1), \cN(\mu, 1))$.
\end{definition}
In this definition, $G_\mu$ is a trade-off function and hence $\mu$-GDP is identical to $f$-DP where $f = G_\mu$. Lemma~\ref{lemma:trade_off_gdp}, which is established in \cite{dong2019gaussian}, gives the equation for computing $G_\mu$:
\begin{lemma}\label{lemma:trade_off_gdp}
Given that $\cM$ is a $\mu$-Gaussian differentially private algorithm, then for a false positive rate of $\alpha$, the trade-off function is given as:
\begin{equation*}
    G_\mu(\alpha) = \Phi(\Phi^{-1}(1 - \alpha) - \mu),
\end{equation*}
where $\Phi$ is the cumulative distribution function of standard normal distribution.
\end{lemma}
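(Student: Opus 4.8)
The plan is to reduce the computation of $G_\mu(\alpha) = T(\cN(0,1), \cN(\mu,1))(\alpha)$ to an explicit one-dimensional threshold test by invoking the Neyman--Pearson lemma. By Definition~\ref{def:trade-off}, $G_\mu(\alpha)$ is the infimum of the type II error $\beta_\phi$ over all rejection rules $\phi$ with type I error $\alpha_\phi \le \alpha$, where I identify $H_0$ with $P = \cN(0,1)$ and $H_1$ with $Q = \cN(\mu,1)$. The Neyman--Pearson lemma guarantees that for each level $\alpha$ this infimum is attained by a likelihood ratio test that rejects $H_0$ once the ratio of densities $q(x)/p(x)$ exceeds a suitable threshold.

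First I would write out the likelihood ratio for the two Gaussians, $q(x)/p(x) = \exp(\mu x - \mu^2/2)$, and observe that (for $\mu > 0$) it is strictly increasing in $x$. This monotone likelihood ratio property lets me replace the test ``$q(x)/p(x) > t$'' by an equivalent threshold test ``$x > c$'' for a corresponding cutoff $c$, so the optimal rejection rule is simply $\phi(x) = \mathbf{1}\{x > c\}$.

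Next I would calibrate $c$ using the type I error constraint. Under $P = \cN(0,1)$ the false positive rate of this test is $\alpha_\phi = \PP_{X \sim \cN(0,1)}(X > c) = 1 - \Phi(c)$; setting this equal to $\alpha$ and inverting gives $c = \Phi^{-1}(1 - \alpha)$. I would then evaluate the type II error under $Q = \cN(\mu,1)$: since $X - \mu \sim \cN(0,1)$ we get $\beta_\phi = \PP_{X \sim \cN(\mu,1)}(X \le c) = \Phi(c - \mu)$. Substituting the calibrated threshold yields $G_\mu(\alpha) = \Phi(\Phi^{-1}(1-\alpha) - \mu)$, as claimed.

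The one step needing care, rather than the routine Gaussian integrals, is the appeal to Neyman--Pearson, where I must argue both feasibility and optimality. Feasibility is straightforward here because the standard normal CDF is continuous and strictly increasing, so the threshold test achieves type I error exactly equal to $\alpha$ with no need for randomization; optimality is precisely the content of the lemma, which certifies that no rule with $\alpha_\phi \le \alpha$ attains a smaller $\beta_\phi$. I would also note that the boundary cases $\alpha = 0$ and $\alpha = 1$ are handled by the conventions $\Phi^{-1}(1) = +\infty$ and $\Phi^{-1}(0) = -\infty$, which recover the correct endpoint values of the trade-off function.
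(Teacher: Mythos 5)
Your proof is correct. The paper itself does not prove this lemma---it simply imports it from \cite{dong2019gaussian}---but your Neyman--Pearson derivation is the canonical argument behind it: the likelihood ratio $\exp(\mu x - \mu^2/2)$ between $\cN(\mu,1)$ and $\cN(0,1)$ is monotone in $x$, so the optimal rejection rule is the threshold test $x > \Phi^{-1}(1-\alpha)$, whose type II error is $\Phi(\Phi^{-1}(1-\alpha)-\mu)$. You correctly flag the two points that need care (attainment of the exact level $\alpha$ without randomization, thanks to the continuity and strict monotonicity of $\Phi$, and the optimality guarantee over all rules of size at most $\alpha$), and the boundary conventions at $\alpha \in \{0,1\}$ are handled sensibly; the only cosmetic remark is that your monotonicity step assumes $\mu > 0$, while the case $\mu = 0$ is trivial since the formula then reduces to $1-\alpha$.
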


The Gaussian mechanism for $\mu$-GDP is given by the following theorem \citep{dong2019gaussian}.
\begin{theorem}\label{th:gdp_noise}
A mechanism $\cM$ operating on a statistic $\theta$ is $\mu$-GDP if $\cM(S) = \theta(S) + \zeta$, where $\zeta \sim \cN(0, \nabla(\theta)^2 / \mu^2)$ and $\nabla(\theta)$ is the global sensitivity of $\theta$.
\end{theorem}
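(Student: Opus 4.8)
The plan is to reduce the statement to the base case of testing one Gaussian against another, and then invoke the monotonicity of the family $G_\mu$ together with the definition of global sensitivity. Fix any pair of neighbouring data sets $S, S'$ and write $\sigma = \nabla(\theta)/\mu$. Since $\cM(S) = \theta(S) + \zeta$ with $\zeta \sim \cN(0, \sigma^2)$, the output distributions are $\cM(S) \sim \cN(\theta(S), \sigma^2)$ and $\cM(S') \sim \cN(\theta(S'), \sigma^2)$ --- two Gaussians with a common variance whose means are separated by $\Delta := \|\theta(S') - \theta(S)\|$. By the definition of $\mu$-GDP it suffices to show $T(\cM(S), \cM(S')) \ge G_\mu$, and since $S, S'$ were arbitrary this yields the result.

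First I would show that the trade-off function between two equal-variance Gaussians depends only on the standardized mean gap. The optimal rejection rules are likelihood-ratio tests (Neyman--Pearson), and the log-likelihood ratio for $\cN(\theta(S), \sigma^2)$ versus $\cN(\theta(S'), \sigma^2)$ is an affine function of the observation; rescaling and translating the sample space (a bijection that preserves type I and type II errors, hence the trade-off function) maps this problem exactly onto testing $\cN(0,1)$ against $\cN(\Delta/\sigma, 1)$. Consequently $T(\cM(S), \cM(S')) = G_{\Delta/\sigma}$, where $G_{\Delta/\sigma} = T(\cN(0,1), \cN(\Delta/\sigma, 1))$. In the vector-valued case the same reduction works after an orthogonal rotation aligning the mean-difference vector with a coordinate axis, which leaves the isotropic noise invariant; the effective signal is then the $\ell_2$ gap $\|\theta(S')-\theta(S)\|_2$.

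Next I would bound the standardized gap. By the definition of global sensitivity, $\Delta = \|\theta(S') - \theta(S)\| \le \nabla(\theta)$ over all neighbouring pairs, so $\Delta/\sigma \le \nabla(\theta)/\sigma = \mu$. Finally, reading off the closed form $G_\nu(\alpha) = \Phi(\Phi^{-1}(1-\alpha) - \nu)$ from Lemma~\ref{lemma:trade_off_gdp}, the map $\nu \mapsto G_\nu(\alpha)$ is non-increasing for every fixed $\alpha$ because $\Phi$ is increasing; hence $\Delta/\sigma \le \mu$ implies $G_{\Delta/\sigma} \ge G_\mu$ pointwise. Combining the pieces, $T(\cM(S), \cM(S')) = G_{\Delta/\sigma} \ge G_\mu$, which is exactly $\mu$-GDP.

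I expect the main obstacle to be making the reduction in the second step fully rigorous: justifying that an injective transformation of the sample space leaves the trade-off function unchanged, and that the likelihood-ratio family between two equal-covariance Gaussians is precisely the optimal testing family, so that no rejection rule can beat the standardized one-dimensional problem. Once this invariance is established, the sensitivity bound and the monotonicity of $G_\mu$ are routine.
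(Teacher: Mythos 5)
The paper does not prove this statement at all: Theorem~\ref{th:gdp_noise} is imported verbatim from \cite{dong2019gaussian}, so there is no in-paper proof to compare against. Your argument is correct and is essentially the standard proof of the cited result: reduce to testing $\cN(\theta(S),\sigma^2)$ against $\cN(\theta(S'),\sigma^2)$, use the affine (or rotational, in the vector case) invariance of the trade-off function together with Neyman--Pearson to identify it with $G_{\Delta/\sigma}$, bound $\Delta$ by the global sensitivity, and finish with the pointwise monotonicity of $\nu \mapsto G_\nu(\alpha)$, which follows immediately from the closed form in Lemma~\ref{lemma:trade_off_gdp}. The only detail worth making explicit is the degenerate case $\theta(S)=\theta(S')$, where the trade-off function is $\mathit{Id}(\alpha)=1-\alpha=G_0(\alpha)\ge G_\mu(\alpha)$, but this is already subsumed by your monotonicity step.
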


We also have the relationship between $\mu$-GDP and $(\epsilon, \delta)$-DP as follows (Corollary 2.13 in \cite{dong2019gaussian} and Theorem 8 in \cite{balle2018improving}):

\begin{proposition}\label{prop:gdp_primal_to_dual}
A mechanism is $\mu$-GDP if and only if it is $(\epsilon, \delta(\epsilon))$-DP for all $\epsilon \ge 0$, where
\[\delta(\epsilon) = \Phi\bigg(-\frac{\epsilon}{\mu} + \frac{\mu}{2}\bigg) - e^\epsilon \Phi\bigg(-\frac{\epsilon}{\mu} - \frac{\mu}{2}\bigg).\]
\end{proposition}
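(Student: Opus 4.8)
The plan is to stay entirely on the primal side of $f$-DP, identifying each privacy guarantee with a trade-off function. By the definition of $f$-DP together with Lemma~\ref{lemma:trade_off_eps_del}, a mechanism $\cM$ is $(\epsilon, \delta)$-DP exactly when $T(\cM(S), \cM(S')) \ge f_{\epsilon, \delta}$ for every neighbouring pair, where $f_{\epsilon,\delta}(\alpha) = \max\{0, 1 - \delta - e^\epsilon\alpha, e^{-\epsilon}(1 - \delta - \alpha)\}$; and by Lemma~\ref{lemma:trade_off_gdp}, $\cM$ is $\mu$-GDP exactly when $T(\cM(S), \cM(S')) \ge G_\mu$ with $G_\mu(\alpha) = \Phi(\Phi^{-1}(1 - \alpha) - \mu)$. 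Since a trade-off function dominates a supremum of functions if and only if it dominates each member, both directions of the claimed equivalence reduce to the single pointwise identity
\begin{equation*}
 G_\mu(\alpha) = \sup_{\epsilon \ge 0} f_{\epsilon, \delta(\epsilon)}(\alpha), \qquad \alpha \in [0, 1],
\end{equation*}
with $\delta(\epsilon)$ as in the statement. Indeed, if this holds then $T \ge G_\mu$ is equivalent to $T \ge f_{\epsilon, \delta(\epsilon)}$ for all $\epsilon \ge 0$, which is precisely $\mu$-GDP on the one side and $(\epsilon, \delta(\epsilon))$-DP for all $\epsilon \ge 0$ on the other.

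First I would verify that the non-trivial affine piece $\ell_\epsilon(\alpha) = 1 - \delta(\epsilon) - e^\epsilon\alpha$ of slope $-e^\epsilon$ is exactly \emph{tangent} to the convex curve $G_\mu$, which is the Legendre-duality computation underlying $\delta(\epsilon)$. Differentiating and using $\Phi'(t - \mu)/\Phi'(t) = \exp(\mu t - \mu^2/2)$ (with $\Phi'$ the standard normal density) gives $G_\mu'(\alpha) = -\exp\!\big(\mu\,\Phi^{-1}(1 - \alpha) - \mu^2/2\big)$, a strictly decreasing bijection from $(0, 1)$ onto $(-\infty, 0)$. Solving $G_\mu'(\alpha^\star) = -e^\epsilon$ yields $\Phi^{-1}(1 - \alpha^\star) = \epsilon/\mu + \mu/2$, hence $\alpha^\star = \Phi(-\epsilon/\mu - \mu/2)$ and $G_\mu(\alpha^\star) = \Phi(\epsilon/\mu - \mu/2)$. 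Reading off the tangent line $\ell(\alpha) = G_\mu(\alpha^\star) - e^\epsilon(\alpha - \alpha^\star)$, its intercept is $G_\mu(\alpha^\star) + e^\epsilon\alpha^\star$, and equating it to $1 - \delta(\epsilon)$ produces exactly $\delta(\epsilon) = \Phi(-\epsilon/\mu + \mu/2) - e^\epsilon\Phi(-\epsilon/\mu - \mu/2)$, matching the claim.

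With tangency established, convexity of $G_\mu$ closes the argument, since a convex function equals the supremum of its tangent lines. Because both $G_\mu$ and $f_{\epsilon,\delta(\epsilon)}$ are symmetric (invariant under $\alpha \leftrightarrow \beta$), each $f_{\epsilon,\delta(\epsilon)}$ is a \emph{double} tangent touching $G_\mu$ at the two symmetric points where the slopes are $-e^\epsilon$ and $-e^{-\epsilon}$; as $\epsilon$ ranges over $[0, \infty)$ these slopes sweep out all of $(-\infty, -1]$ and $[-1, 0)$, i.e.\ every slope attained by $G_\mu$, so each tangent of $G_\mu$ is realized and the supremum reconstructs $G_\mu$ (the clamp at $0$ is harmless since $G_\mu \ge 0$ on $[0,1]$). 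The main obstacle is precisely this last convex-analytic bookkeeping: rigorously checking that the two-piece envelopes $f_{\epsilon,\delta(\epsilon)}$ lie below $G_\mu$ (needed for the ``only if'' direction) and that their supremum recovers all of $G_\mu$ across both symmetric branches and the endpoints (needed for the ``if'' direction), rather than merely carrying out the single tangent computation, which is routine Gaussian calculus.
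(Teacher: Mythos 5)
The paper does not actually prove this proposition: it is imported by citation (Corollary~2.13 of Dong et al.\ and Theorem~8 of Balle and Wang), so there is no in-paper argument to compare against. Your reconstruction is the standard primal--dual proof from that literature, and it checks out: the derivative computation $G_\mu'(\alpha) = -\exp(\mu\,\Phi^{-1}(1-\alpha) - \mu^2/2)$ is correct, the tangency point $\alpha^\star = \Phi(-\epsilon/\mu - \mu/2)$ with $G_\mu(\alpha^\star) = \Phi(\epsilon/\mu - \mu/2)$ gives exactly the stated $\delta(\epsilon)$, and the symmetry of $G_\mu$ (it is an involution) correctly identifies the second affine piece $e^{-\epsilon}(1-\delta-\alpha)$ as the reflected tangent of slope $-e^{-\epsilon}$, so the two branches together sweep all slopes in $(-\infty,0)$ and the supremum of the envelopes recovers the convex curve $G_\mu$. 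The one point worth flagging is that your reduction uses the \emph{biconditional} ``$\cM$ is $(\epsilon,\delta)$-DP iff $T(\cM(S),\cM(S')) \ge f_{\epsilon,\delta}$,'' whereas Lemma~\ref{lemma:trade_off_eps_del} as stated in the paper gives only the forward implication; the converse is true (it is the content of the $f$-DP/$(\epsilon,\delta)$-DP duality in Dong et al.) and is needed for the ``only if'' direction of the proposition, so it should be invoked explicitly rather than read off the lemma. With that caveat, and granting the convex-analytic bookkeeping you yourself identify as the remaining routine work (tangent lines of a convex function lie below it, hence each $f_{\epsilon,\delta(\epsilon)} \le G_\mu$, and the supremum of all tangents of a differentiable convex function on an open interval equals the function), the argument is complete and is essentially the proof the cited sources give.
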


Gaussian differential privacy supports lossless composition of mechanisms (Corollary 3.3 in \cite{dong2019gaussian}) and privacy amplification due to sub-sampling (Theorem 4.2 in \cite{dong2019gaussian}).  
\begin{theorem}[Composition]\label{th:gdp_composition}
The $T$-fold composition of $\mu_i$-GDP mechanisms is $\sqrt{\mu_1^2 + \cdots + \mu_T^2}-GDP$.
\end{theorem}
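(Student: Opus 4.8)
The plan is to reduce the statement to an algebraic identity about trade-off functions under composition. First I would invoke the general $f$-DP composition principle: if $\cM_1$ is $f_1$-DP and $\cM_2$ is $f_2$-DP, then the joint release $(\cM_1, \cM_2)$ is $(f_1 \otimes f_2)$-DP, where the tensor product of trade-off functions $f_i = T(P_i, Q_i)$ is characterized by $f_1 \otimes f_2 = T(P_1 \times P_2, Q_1 \times Q_2)$. Granting this, the $T$-fold composition of $\mu_i$-GDP mechanisms is $(G_{\mu_1} \otimes \cdots \otimes G_{\mu_T})$-DP, so it suffices to prove the identity $G_{\mu_1} \otimes \cdots \otimes G_{\mu_T} = G_{\sqrt{\mu_1^2 + \cdots + \mu_T^2}}$.

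The heart of the argument is the two-fold case $G_{\mu_1} \otimes G_{\mu_2} = G_{\sqrt{\mu_1^2 + \mu_2^2}}$, with the general case following by induction. Unwinding $G_\mu = T(\cN(0,1), \cN(\mu,1))$, the tensor product becomes $T(\cN(0, \Ib_2), \cN(\bm{\mu}, \Ib_2))$ with $\bm{\mu} = (\mu_1, \mu_2)$. I would evaluate this trade-off function using the Neyman--Pearson lemma: the most powerful test between two Gaussians with common covariance $\Ib_2$ thresholds the log-likelihood ratio, which is an affine function of $\langle \bm{\mu}, \zb \rangle$. Under $H_0$ this statistic is $\cN(0, \|\bm{\mu}\|_2^2)$ and under $H_1$ it is $\cN(\|\bm{\mu}\|_2^2, \|\bm{\mu}\|_2^2)$, so after dividing by the standard deviation $\|\bm{\mu}\|_2$ the effective mean separation is exactly $\|\bm{\mu}\|_2 = \sqrt{\mu_1^2 + \mu_2^2}$. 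Equivalently, by rotational invariance of $\cN(0,\Ib_2)$ one rotates $\bm{\mu}$ onto the first coordinate axis, collapsing the problem to testing $\cN(0,1)$ against $\cN(\|\bm{\mu}\|_2, 1)$, whose trade-off function is $G_{\|\bm{\mu}\|_2}$ by definition, matching Lemma~\ref{lemma:trade_off_gdp}.

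Iterating, $G_{\mu_1} \otimes \cdots \otimes G_{\mu_T} = T(\cN(0,\Ib_T), \cN(\bm{\mu}, \Ib_T))$ with $\|\bm{\mu}\|_2 = \sqrt{\sum_i \mu_i^2}$, and the same Neyman--Pearson and rotation argument yields $G_{\sqrt{\mu_1^2 + \cdots + \mu_T^2}}$, completing the proof. I expect the main obstacle to be justifying the composition step itself rather than the Gaussian calculation: one must show that the trade-off function of the composed mechanism, taken over its worst-case neighbouring pair, is lower bounded by the tensor product of the individual trade-off functions uniformly over all neighbouring pairs and all choices of the second mechanism. Handling adaptivity---where $\cM_2$ may depend on the output of $\cM_1$---is the delicate point, and is where I would lean most heavily on the machinery of \cite{dong2019gaussian}; the reduction to $\|\bm{\mu}\|_2$ via rotational symmetry is comparatively routine once the tensor-product formulation is in hand.
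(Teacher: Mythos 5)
The paper gives no proof of this theorem; it is imported verbatim as Corollary 3.3 of \cite{dong2019gaussian}. Your argument correctly reconstructs the standard proof from that reference --- adaptive composition bounds the composed trade-off function by the tensor product $G_{\mu_1}\otimes\cdots\otimes G_{\mu_T}=T(\cN(0,\Ib),\cN(\bm{\mu},\Ib))$, which collapses to $G_{\|\bm{\mu}\|_2}$ by the Neyman--Pearson/rotational-invariance reduction --- and you rightly identify adaptivity in the composition step as the only part that genuinely requires the machinery of \cite{dong2019gaussian} rather than the product-measure formula alone.
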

\begin{theorem}[Sub-sampling]\label{th:gdp_subsampling}
Suppose $\cM$ is $f$-DP on $\cD^m$, if we apply $\cM$ to a subset of samples with sampling ratio $ \tau = m/n\in[0,1]$, then $\cM$ is $\min \{f_\tau, f_\tau^{-1}\}^{**}$-DP, where $f_\tau = \tau f + (1 - \tau)Id$. 
\end{theorem}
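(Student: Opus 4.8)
The plan is to follow the $f$-DP subsampling argument underlying Theorem~4.2 of \cite{dong2019gaussian}. Because the conclusion must hold for \emph{every} pair of neighbouring data sets and for \emph{both} orderings of the underlying hypothesis test, the first step is to fix neighbours $S$ and $S' = S \cup \{x_0\}$ (the add/remove relation) and to track how the single extra record $x_0$ can enter the sub-sample. Writing $\mathrm{Sample}_\tau$ for the operation that draws $m$ of the available records, so that $x_0$ is selected with probability $\tau$, I would condition on the event that $x_0$ is drawn.

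Conditioning yields a mixture decomposition. When $x_0$ is \emph{not} drawn, the sub-sample is distributed exactly as a sub-sample of $S$, so that component coincides with $P := \cM(\mathrm{Sample}_\tau(S))$; when $x_0$ \emph{is} drawn we obtain a second distribution $P_1$ in which $\cM$ sees $x_0$ together with a random sub-sample of $S$. Hence $Q := \cM(\mathrm{Sample}_\tau(S')) = (1-\tau)P + \tau P_1$. The key auxiliary fact to establish is $T(P, P_1) \ge f$: for each fixed choice of the remaining sampled indices, $P$ and $P_1$ reduce to $\cM$ evaluated on a replace-one neighbouring pair, so $f$-DP gives the bound pointwise, and joint convexity of the trade-off function lifts it to the mixtures.

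With the decomposition in hand, the forward direction is a one-line computation. For any rejection rule $\phi$ with type~I error $\alpha = \EE_P[\phi]$, the type~II error against $Q$ is $\EE_Q[1-\phi] = (1-\tau)\EE_P[1-\phi] + \tau\EE_{P_1}[1-\phi] \ge (1-\tau)(1-\alpha) + \tau f(\alpha)$, where the last step uses $T(P,P_1)\ge f$; taking the infimum over $\phi$ gives $T(P,Q) \ge (1-\tau)\mathrm{Id} + \tau f = f_\tau$. For the reverse ordering I would invoke the duality $T(Q,P) = T(P,Q)^{-1}$ for trade-off functions, which, inversion being order-preserving, yields $T(Q,P) \ge f_\tau^{-1}$.

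Finally, since a single trade-off function must lower bound the test difficulty in both orderings, it must lie below $\min\{f_\tau, f_\tau^{-1}\}$; because this pointwise minimum need not be convex whereas every trade-off function is convex, I would pass to its greatest convex minorant, the biconjugate $\min\{f_\tau, f_\tau^{-1}\}^{**}$, which is the stated guarantee. I expect the main obstacle to be the second step: making the mixture decomposition exact under sampling without replacement (including the combinatorial bookkeeping that relates $\tau$ to $m/n$) and rigorously upgrading the pointwise $f$-DP bound to $T(P,P_1)\ge f$ via joint convexity. The two directional inequalities and the final convexification are then comparatively routine, relying only on the standard properties of trade-off functions from \cite{dong2019gaussian}.
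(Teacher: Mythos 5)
This statement is not proved in the paper at all: it is imported verbatim as Theorem~4.2 of \cite{dong2019gaussian}, so there is no in-paper argument to compare against. Measured against the original source, your sketch is the standard proof and is correct in outline: the conditioning on whether the distinguished record enters the sub-sample, the resulting mixture decomposition $Q=(1-\tau)P+\tau P_1$, the one-line bound $T(P,Q)\ge (1-\tau)\,Id+\tau f=f_\tau$, the passage to the reverse ordering via $T(Q,P)=T(P,Q)^{-1}$ together with order preservation of inversion, and the final convexification by the double conjugate are exactly the steps in Dong et al.'s argument.

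Two points deserve sharpening. First, the bookkeeping you flag is real but benign once the neighbouring convention is fixed the right way around: if the distinguished record $x_0$ belongs to the \emph{larger} data set of size $n$ and the sub-sample has size $m$, then $x_0$ is selected with probability exactly $m/n=\tau$, and conditioning on its exclusion gives precisely a uniform $m$-subset of the smaller data set; with your orientation ($S'=S\cup\{x_0\}$ of size $n+1$) the mixture weight would come out as $m/(n+1)$, so the roles of $S$ and $S'$ should be swapped. Second, the step $T(P,P_1)\ge f$ does not literally reduce to a single replace-one pair ``for each fixed choice of the remaining sampled indices,'' since $P$ ranges over $m$-subsets while $P_1$ ranges over sets of the form $\{x_0\}\cup U$ with $|U|=m-1$; one must pair each $\cM(U\cup\{x_0\})$ against the family $\{\cM(U\cup\{y\}):y\notin U\}$ and then invoke the mixture lemma, namely that $T\bigl(\sum_i w_iP_i,\sum_i w_iQ_i\bigr)\ge f$ whenever $T(P_i,Q_i)\ge f$ for all $i$. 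That lemma follows from Jensen's inequality applied to the convex function $f$ (given a rejection rule with type~I error $\alpha=\sum_i w_i\alpha_i$, the type~II error is at least $\sum_i w_if(\alpha_i)\ge f(\alpha)$), so the ``main obstacle'' you identify is genuinely closable and is exactly the auxiliary lemma Dong et al.\ prove. With those two repairs the argument goes through.
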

The function $Id$ is the identity function defined as $Id(x) = 1 - x$, and $\min \{f_\tau, f_\tau^{-1}\}^{**}$ is the double conjugate of $\min \{f_\tau, f_\tau^{-1}\}$ function. 
Theorems~\ref{th:gdp_noise}, \ref{th:gdp_composition} and \ref{th:gdp_subsampling} can be combined to achieve GDP for gradient perturbation based machine learning algorithms such as noisy SGD and noisy Adam. For instance, adding standard Gaussian noise with standard deviation $\sigma$ to each batch of gradients sampled with probability $\tau$ would lead to $\tau\sqrt{T(e^{1/\sigma^2} - 1)}$-GDP for $T$ compositions~\citep{bu2019deep}. In our experiments, we use such result to characterize the privacy loss of our private optimizers for training differentially private neural networks, and use Proposition \ref{prop:gdp_primal_to_dual} to convert it into $(\epsilon,\delta)$-DP for the purpose of comparison. 
\section{Measuring Privacy Leakage}\label{sec:privacy_metrics}
To evaluate privacy leakage, we define an adversarial game inspired by \cite{yeom2018privacy}. Unlike their game which assumes a balanced prior, our game factors in the prior membership distribution probability. The adversarial game models the scenario where an adversary has access to a model, $\cM_S$, trained over a data set $S$, knowledge of the training procedure and data distribution, and wishes to infer whether a given input is a member of that training set.
\begin{experiment}[Membership Experiment]\label{membership_experiment}
Assume a membership adversary, $\cA$, who has information about the training data set size $n$, the distribution $\cD$ from which the data set is sampled, and the prior membership probability $p$. The adversary runs this experiment:
\begin{enumerate}
    \item Sample a training set $S \sim \cD^n$ and train a model $\cM_S$ over the training set $S$.
    \item Randomly sample $b \in \{0, 1\}$, such that $b = 1$ with probability $p$.
    \item If $b = 1$, then sample $\zb \sim S$; else sample $\zb \sim \cD$.
    \item Output 1 if $\cA(\zb, \cM_S, n, \cD) = b$; otherwise output 0. 
\end{enumerate}
\end{experiment}
Note that our experiment incorporates the prior probability $p$ of sampling a record, compared to the setting of Yeom et al.\ that assumes balanced prior probability ($p = 0.5$). We consider skewed prior $p$ as inferring membership is more important than inferring non-membership in our problem setting. This is different from the semantic security analogue where all messages are treated equally regardless of the skewness of the message distribution. For most practical scenarios (that is, where being exposed as a member carries meaningful risk to an individual), $p$ is much smaller than 0.5. For instance, for a scenario of an epidemic outbreak, the training set could be the list of patients with the disease symptoms admitted at a hospital. The non-members can be the remaining population of the city or a district. Hence, assuming a balanced prior of $p=0.5$ is not a realistic assumption, and it is important to develop a privacy metric that can be used to evaluate scenarios with lower (or higher) priors.

\begin{figure}[tb]
    \centering
    \includegraphics[width=0.5\linewidth]{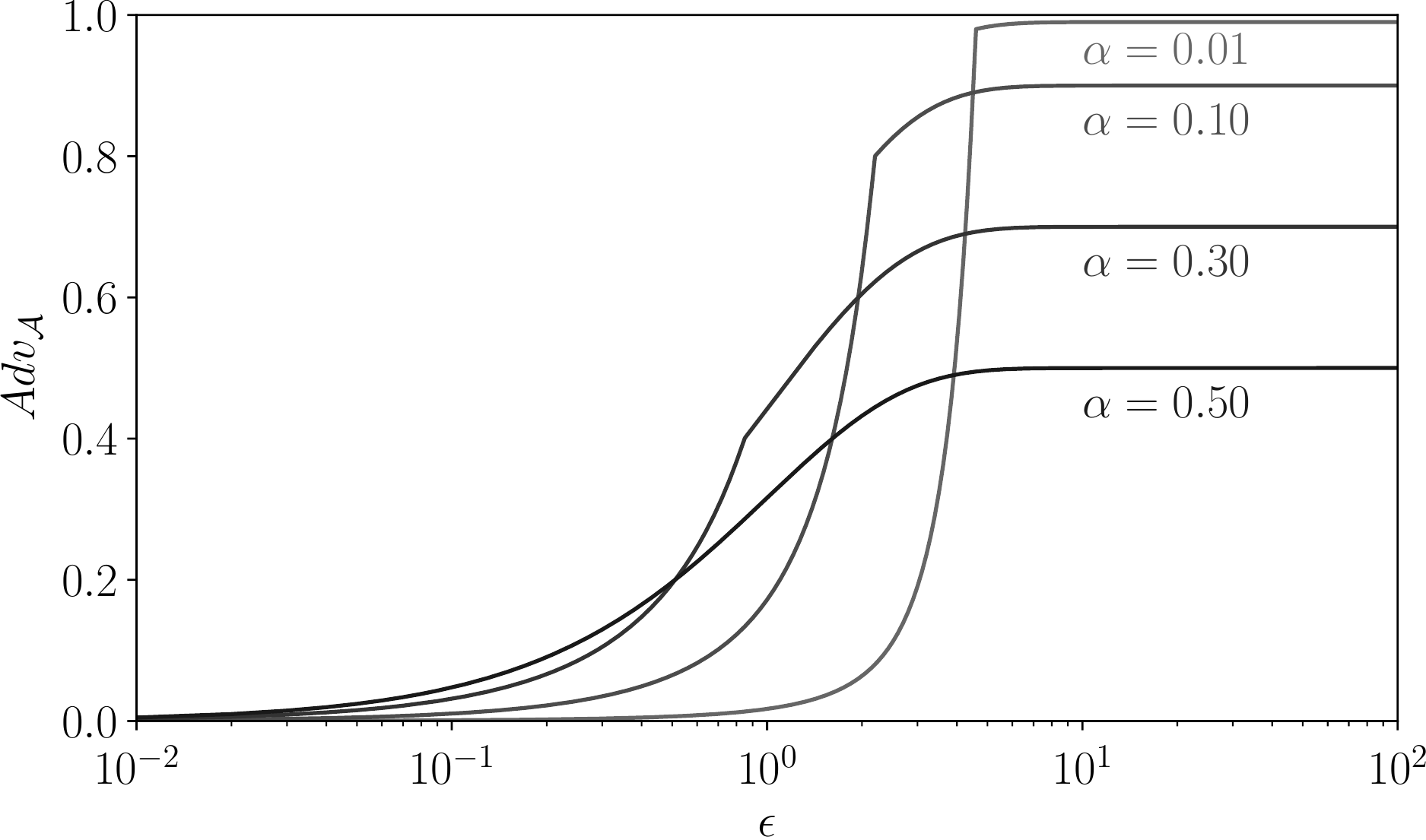}
    \caption{Theoretical upper bounds on $\mathit{Adv}_\cA(\alpha)$ metric for various privacy loss budgets with varying $\alpha$ ($\delta = 10^{-5}$).}
    \label{fig:adv_bound}
\end{figure}

\begin{figure}[tb]
    \centering
    \includegraphics[width=0.5\linewidth]{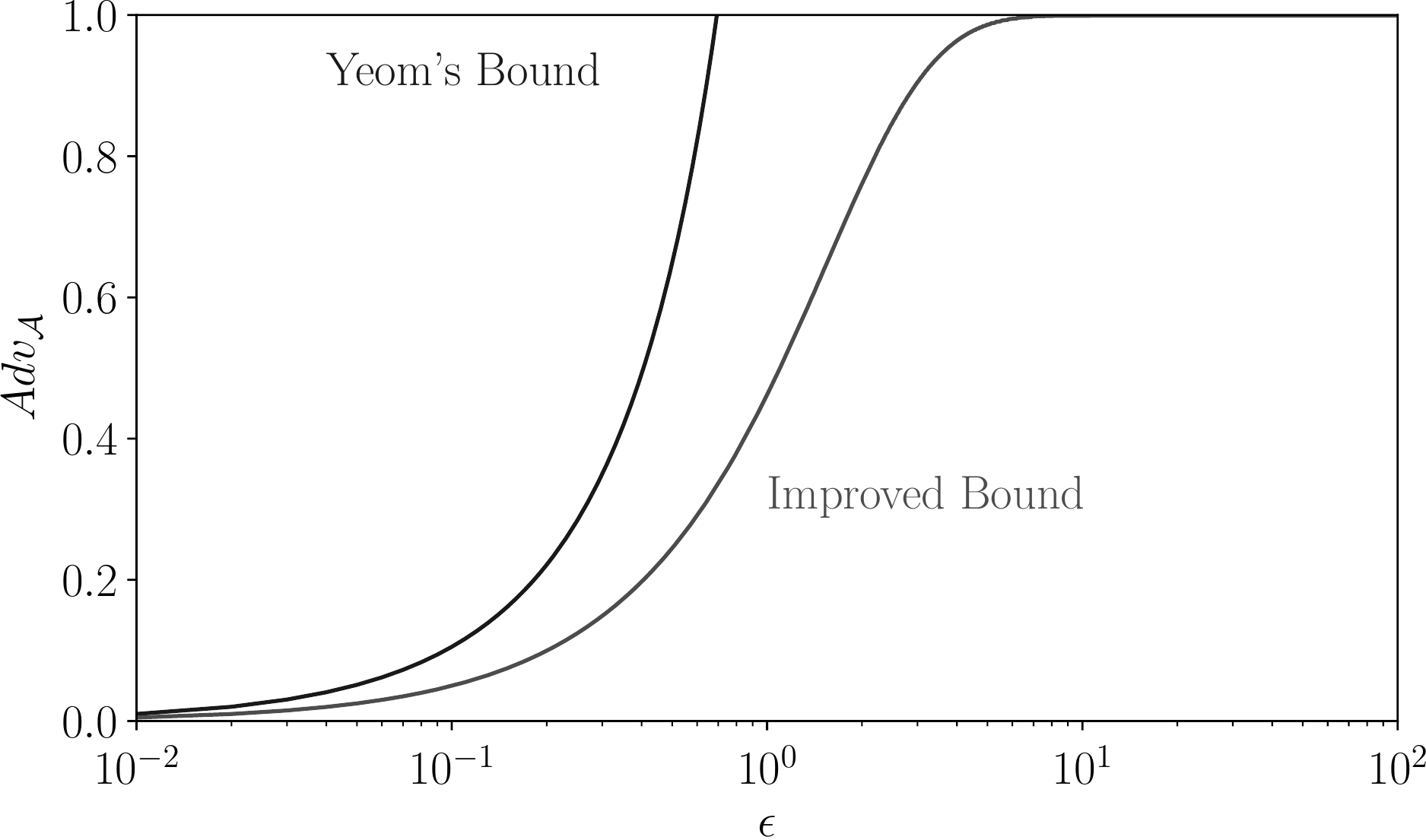}
    \caption{Comparing theoretical bounds on membership advantage ($\delta = 0$). \rm Improved bound uses Theorem~\ref{thm:adv} to get maximum advantage across all $0 < \alpha \le 1$.}
    \label{fig:max_adv_bound}
\end{figure}

\subsection{Membership Advantage}\label{sec:advantage_metric}
The membership advantage metric, $\mathit{Adv}$, was defined by \cite{yeom2018privacy} as the difference between the true positive rate and the false positive rate for the membership inference adversary provided that $p=0.5$ (i.e., balanced prior membership distribution). Yeom et al.\ showed that for an $\epsilon$-differentially private mechanism, the theoretical upper bound for membership advantage is $e^\epsilon - 1$, which can be quite loose for higher $\epsilon$ values and is not defined for $e^\epsilon - 1 > 1$ since the membership advantage metric proposed by Yeom et al.\ is only defined between $0$ and $1$.
Moreover, the bound is not valid for $(\epsilon, \delta)$-differentially private algorithms which are more commonly used for private deep learning. 

We derive a tighter bound for the membership advantage metric that is applicable to $(\epsilon, \delta)$-differentially private algorithms based on the notion of $f$-DP:
\begin{theorem}\label{thm:adv}
Let $\cM$ be an $(\epsilon, \delta)$-differentially private algorithm. For any randomly chosen record $\zb$ and fixed false positive rate $\alpha$, the membership advantage of a membership inference adversary $\cA$ is bounded by:
\begin{equation*}
    \mathit{Adv}_{\cA}(\alpha) \le 1 - f_{\epsilon, \delta}(\alpha) - \alpha,
\end{equation*}
where $f_{\epsilon, \delta}(\alpha) = \max\big\{0, 1 - \delta - e^\epsilon \alpha, e^{-\epsilon}(1 - \delta - \alpha)\big\}$.
\end{theorem}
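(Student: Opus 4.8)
The plan is to recast the membership advantage in the type~I / type~II error language of the hypothesis testing formulation and then invoke the $f$-DP trade-off bound directly. First I would write the advantage at a fixed false positive rate as the gap between the adversary's true positive rate and false positive rate, $\mathit{Adv}_\cA(\alpha) = \mathrm{TPR} - \mathrm{FPR}$. Identifying the null hypothesis $H_0$ with ``$\zb$ is a non-member'' and the alternative $H_1$ with ``$\zb$ is a member'', the false positive rate is exactly the type~I error, so $\mathrm{FPR} = \alpha$, while the true positive rate is the power, $\mathrm{TPR} = 1 - \beta$, where $\beta$ is the type~II error. Hence $\mathit{Adv}_\cA(\alpha) = (1 - \beta) - \alpha$, and bounding the advantage from above reduces to bounding $\beta$ from below.

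Second, I would connect the membership experiment to the neighbouring-data-set hypothesis test underlying differential privacy. Deciding whether $\zb \in S$ is precisely the task of distinguishing the output distribution of $\cM$ on the data set containing $\zb$ from its output on the neighbouring data set without $\zb$. The adversary's decision rule then plays the role of a rejection rule $\phi$ in Definition~\ref{def:trade-off}, so for the fixed type~I error $\alpha$ the achievable type~II error is at least the value of the trade-off function, $\beta \ge T(\cM(S), \cM(S'))(\alpha)$. Because $\cM$ is $(\epsilon,\delta)$-differentially private, Lemma~\ref{lemma:trade_off_eps_del} gives $T(\cM(S), \cM(S')) \ge f_{\epsilon,\delta}$ for every neighbouring pair (the $\max$ in $f_{\epsilon,\delta}$ already accounts for both orientations of the pair), so $\beta \ge f_{\epsilon,\delta}(\alpha)$.

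Finally, substituting this lower bound on $\beta$ yields
\[
\mathit{Adv}_\cA(\alpha) = (1 - \beta) - \alpha \le 1 - f_{\epsilon,\delta}(\alpha) - \alpha,
\]
which is the claimed inequality, with $f_{\epsilon,\delta}(\alpha) = \max\{0, 1 - \delta - e^\epsilon\alpha, e^{-\epsilon}(1 - \delta - \alpha)\}$ supplied directly by Lemma~\ref{lemma:trade_off_eps_del}.

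I expect the main obstacle to be the second step: justifying that the worst-case, neighbouring-data-set guarantee transfers to the membership experiment, in which both the training set $S \sim \cD^n$ and the record $\zb$ are drawn at random. The clean argument is that the trade-off bound $\beta \ge f_{\epsilon,\delta}(\alpha)$ holds uniformly over all neighbouring pairs, so it survives averaging over the experiment's randomness, provided the averaged false positive rate is held at $\alpha$ and the orientation of $H_0$ versus $H_1$ keeps the type~I error aligned with the false positive rate rather than the false negative rate. I would also stress that the bound is stated pointwise in $\alpha$, so no optimization over $\alpha$ is needed here; maximizing over $0 < \alpha \le 1$ is the separate refinement that the subsequent bound and figure exploit.
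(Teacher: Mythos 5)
Your proposal is correct and follows essentially the same route as the paper's proof: write $\mathit{Adv}_\cA(\alpha) = \mathrm{TPR} - \mathrm{FPR}$, note that the trade-off function (Definition~\ref{def:trade-off} together with Lemma~\ref{lemma:trade_off_eps_del}) forces $1 - \mathrm{TPR} \ge f_{\epsilon,\delta}(\alpha)$ at a fixed false positive rate $\alpha$, and substitute. The extra discussion you give on aligning the hypotheses and on the uniformity of the neighbouring-pair bound is a more careful elaboration of the same one-line argument the paper uses.
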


\begin{proof}[Proof of Theorem~\ref{thm:adv}] The proof follows directly from Yeom at al.'s definition, $\mathit{Adv}_{\cA}(\alpha) = \mathit{TPR} - \mathit{FPR}$, when we have balanced prior membership distribution, $p=0.5$. For a given $FPR = \alpha$, we have $1-TPR \geq f_{\epsilon, \delta}(\alpha)$ according to the definition of trade-off function (Definition \ref{def:trade-off} and Lemma \ref{lemma:trade_off_eps_del}). Therefore, $\mathit{Adv}_{\cA}(\alpha) \le 1 - f_{\epsilon, \delta}(\alpha) - \alpha.$
\end{proof}

Figure~\ref{fig:adv_bound} shows the relationship between the false positive rate $\alpha$ of a given membership inference adversary and the upper bound of the advantage given by Theorem~\ref{thm:adv}. This bound lies strictly between 0 and 1 and is tighter than the bound of \cite{yeom2018privacy}, as shown in Figure~\ref{fig:max_adv_bound}. However, this metric is limited to balanced prior distribution of data and hence can overestimate (or underestimate) the privacy threat in any scenario where the prior probability is not $0.5$. Thus, membership advantage alone is not a reliable way to measure the privacy leakage. Hence, we next propose the positive predictive value metric that considers the prior distribution of data.

\begin{figure*}[tb]
\centering
\begin{subfigure}[b]{0.45\textwidth}
\centering
\includegraphics[width=\linewidth]{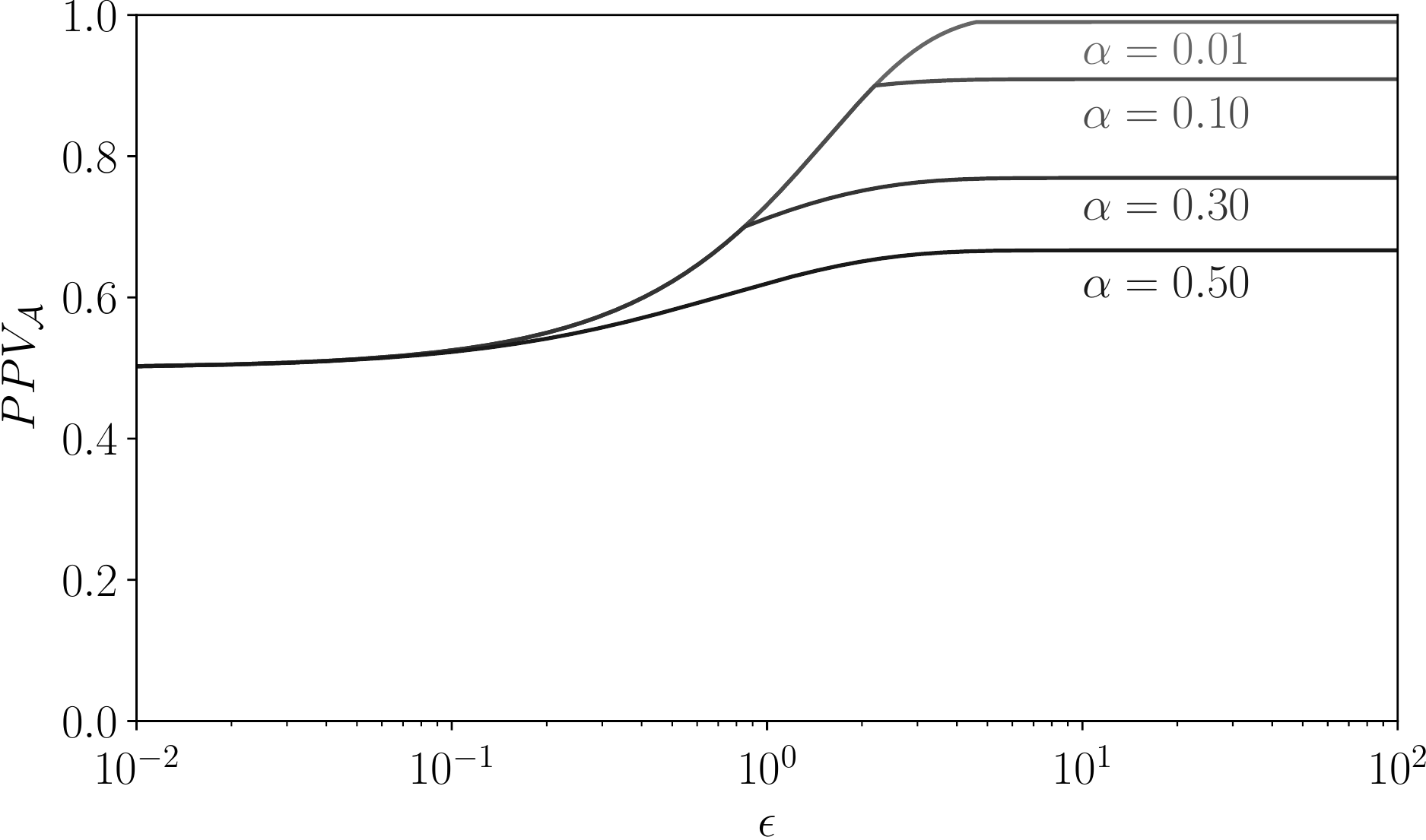}
\caption{$\mathit{PPV}_\cA$ with varying $\alpha$ ($\gamma = 1$)}
\label{fig:ppv_bound_alpha}
\end{subfigure} \qquad
\begin{subfigure}[b]{0.45\textwidth}
\centering
\includegraphics[width=\linewidth]{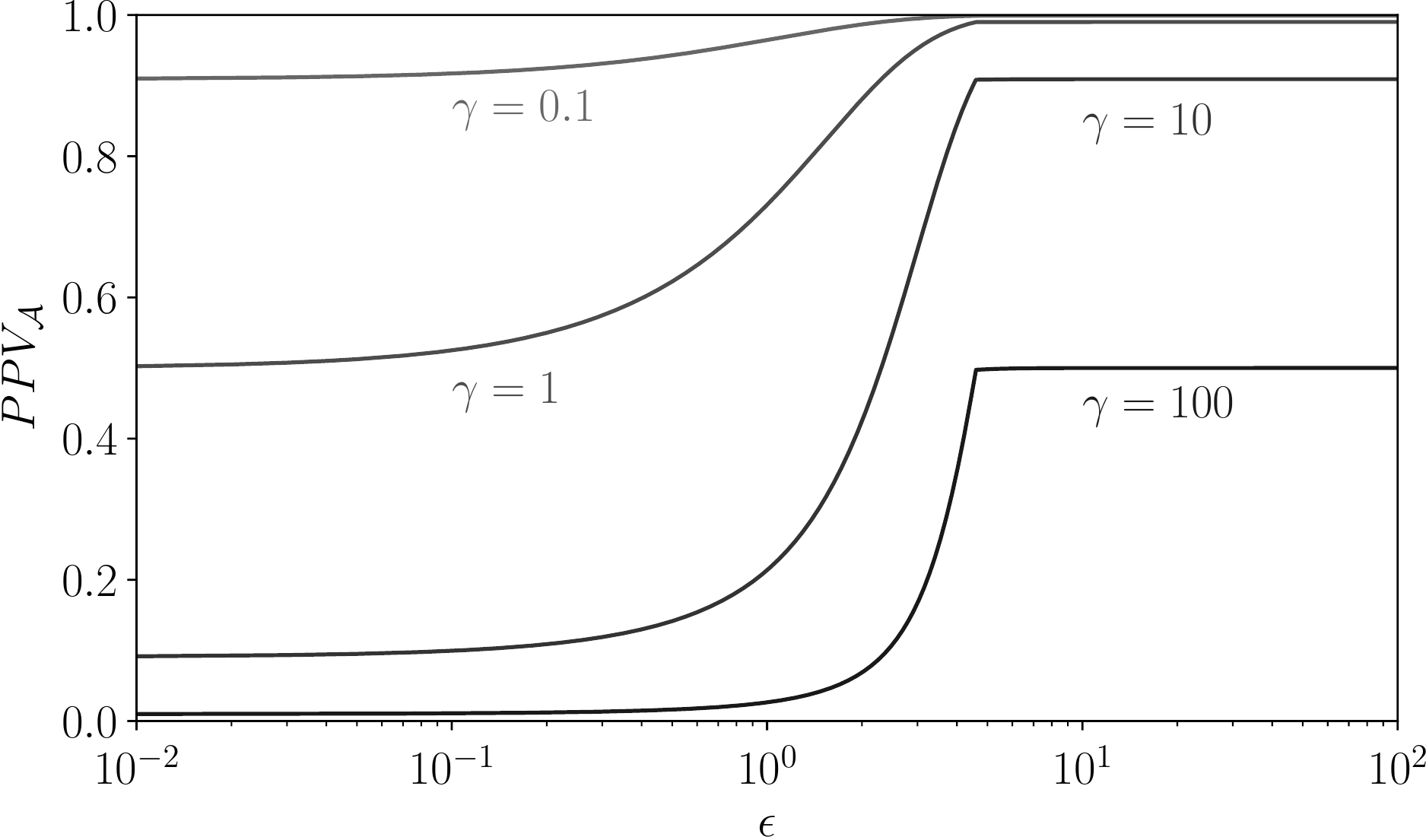}
\caption{$\mathit{PPV}_\cA$ with varying $\gamma$ ($\alpha = 0.01$)}
\label{fig:ppv_bound_gamma}
\end{subfigure}
\caption{Theoretical upper bounds on PPV metric for various privacy budgets ($\delta = 10^{-5}$).}
\end{figure*}

\subsection{Positive Predictive Value}\label{sec:ppv_metric}
Positive predictive value (PPV) gives the ratio of true members predicted among all the positive membership predictions made by an adversary (the precision of the adversary). For an $(\epsilon, \delta)$-differentially private algorithm, the PPV is bounded by the following theorem:
\begin{theorem}\label{thm:ppv}
Let $\cM$ be an $(\epsilon, \delta)$-differentially private algorithm and $\cA$ be a membership inference adversary. For any randomly chosen record $\zb$ and a fixed false positive rate of $\alpha$, the positive predictive value of $\cA$ is bounded by
\begin{equation*}
    \mathit{PPV}_{\cA}(\alpha, \gamma) \le \frac{1 - f_{\epsilon, \delta}(\alpha)}{1 - f_{\epsilon, \delta}(\alpha) + \gamma \alpha},
\end{equation*}
where 
$f_{\epsilon, \delta}(\alpha) = \max\big\{0, 1 - \delta - e^\epsilon \alpha, e^{-\epsilon}(1 - \delta - \alpha)\big\}$, $\gamma = (1-p)/p$, and $p$ is the prior membership probability defined in Membership Experiment \ref{membership_experiment}.
\end{theorem}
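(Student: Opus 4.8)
The plan is to first write out $\mathit{PPV}_{\cA}$ explicitly as a precision and reduce it to a function of the adversary's true positive rate and false positive rate, then invoke the $f$-DP trade-off bound exactly as in the proof of Theorem~\ref{thm:adv}, and finish with a short monotonicity argument.

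First I would unpack the definition. Recall from Membership Experiment~\ref{membership_experiment} that the sampled record $\zb$ is a true member with prior probability $p$ and a non-member with probability $1-p$. Writing $\mathit{TPR}$ for the probability the adversary outputs ``member'' given that $\zb$ is a member, and $\mathit{FPR} = \alpha$ for the corresponding probability given that $\zb$ is a non-member, the probability of a true positive is $p\,\mathit{TPR}$ and the probability of a false positive is $(1-p)\,\alpha$. Since $\mathit{PPV}_{\cA}$ is the fraction of genuine members among all positive predictions,
\[
\mathit{PPV}_{\cA}(\alpha, \gamma) = \frac{p\,\mathit{TPR}}{p\,\mathit{TPR} + (1-p)\,\alpha} = \frac{\mathit{TPR}}{\mathit{TPR} + \gamma\alpha},
\]
where the second equality divides numerator and denominator by $p$ and uses $\gamma = (1-p)/p$.

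Next I would bound $\mathit{TPR}$. Exactly as in the proof of Theorem~\ref{thm:adv}, the definition of the trade-off function (Definition~\ref{def:trade-off}) together with Lemma~\ref{lemma:trade_off_eps_del} gives $1 - \mathit{TPR} \ge f_{\epsilon, \delta}(\alpha)$ for an $(\epsilon, \delta)$-DP mechanism at false positive rate $\alpha$, hence $\mathit{TPR} \le 1 - f_{\epsilon, \delta}(\alpha)$. This is the only place the differential privacy hypothesis enters, and it is inherited directly from the earlier theorem.

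Finally, the main (and essentially only nontrivial) step is a monotonicity observation: the map $t \mapsto t/(t + \gamma\alpha)$ is non-decreasing in $t$ on $[0,1]$ whenever $\gamma\alpha \ge 0$, since its derivative $\gamma\alpha/(t + \gamma\alpha)^2$ is nonnegative. Substituting the feasible maximum $\mathit{TPR} \le 1 - f_{\epsilon, \delta}(\alpha)$ into the expression for $\mathit{PPV}_{\cA}$ therefore yields
\[
\mathit{PPV}_{\cA}(\alpha, \gamma) \le \frac{1 - f_{\epsilon, \delta}(\alpha)}{1 - f_{\epsilon, \delta}(\alpha) + \gamma\alpha},
\]
which is the claimed bound. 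I expect no serious obstacle here; the only points that require care are confirming that the prior $p$ enters the precision precisely through the factor $\gamma$, and verifying that the substitution runs in the correct (monotone) direction, so that replacing $\mathit{TPR}$ by its largest feasible value genuinely produces an upper bound on $\mathit{PPV}_{\cA}$ rather than a lower bound.
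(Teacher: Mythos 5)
Your proposal is correct and follows essentially the same route as the paper: express $\mathit{PPV}_{\cA}$ as $\mathit{TPR}/(\mathit{TPR} + \gamma\alpha)$, bound $\mathit{TPR} \le 1 - f_{\epsilon,\delta}(\alpha)$ via the trade-off function exactly as in Theorem~\ref{thm:adv}, and substitute. The only difference is that you make explicit the monotonicity of $t \mapsto t/(t+\gamma\alpha)$, which the paper's proof leaves implicit.
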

\begin{proof}[Proof of Theorem~\ref{thm:ppv}]
According to the trade-off function definition (Definition \ref{def:trade-off} and Lemma \ref{lemma:trade_off_eps_del}), for a given $FPR = \alpha$, we have $1-TPR \geq f_{\epsilon, \delta}(\alpha)$. Since $\mathit{PPV}_{\cA}(\alpha, \gamma) = TP/(TP + FP)$, we can obtain:
\begin{equation*}
    \mathit{PPV}_{\cA}(\alpha, \gamma) = \frac{TPR}{TPR + \gamma \cdot FPR} \le \frac{1 - f_{\epsilon, \delta}(\alpha)}{1 - f_{\epsilon, \delta}(\alpha) + \gamma \alpha}.
\end{equation*}
\end{proof}
Like membership advantage, the PPV metric is strictly bounded between 0 and 1. Moreover, the bound on PPV metric considers the prior distribution via $\gamma$, which gives the ratio of probability of selecting a non-member to a member. This allows the PPV metric to better capture the privacy threat across different settings. Figure~\ref{fig:ppv_bound_alpha} shows the effect of varying the false positive rate $\alpha$ and Figure~\ref{fig:ppv_bound_gamma} shows the effect of varying the prior distribution probability $\gamma$ on the PPV metric. For example, for $\epsilon = 5, \delta = 10^{-5}, \alpha = 0.01, \gamma = 100$, the advantage metric can be as high as 0.98, while the PPV metric is close to 0.5 (i.e., coin toss probability). Thus, in such cases, advantage grossly overestimates the privacy threat. 
\section{Inference Attacks}\label{sec:attacks}

While the previous section covers the metrics to evaluate privacy leakage, here we discuss about the membership inference attack procedures. In Section~\ref{sec:threshold_selection}, we describe our threshold selection procedure for threshold-based inference attacks. Section~\ref{sec:proposed_mi} presents our threshold-based inference attack that perturbs a query record and uses the direction of change in per-instance loss of the record for membership inference. Section~\ref{sec:morgan} presents our second attack that combines our first attack with the threshold-based attack of \cite{yeom2018privacy}.

\subsection{Setting the Decision Threshold}\label{sec:threshold_selection}

The membership inference attacks we consider need to output a Boolean result for each test, converting a real number measure from a test into a Boolean that indicates whether or not a given input is considered a member. The effectiveness of an attack depends critically on the value of this decision threshold. 

We introduce a simple procedure to select the decision threshold for any threshold-based attack where the adversary's goal is to maximize leakage for a given expected maximum false positive rate:

\begin{proc}[Finding Decision Threshold]\label{algo:inference_training}
Given an adversary, $\cA$, that knows information about a target model including the training data distribution $\cD$, training set size $n$, training procedure, and model architecture, as well as knowing the prior distribution probability $p$ for the suspected membership set, this procedure finds a threshold $\phi$ that maximizes the privacy leakage of the sampled data points for a given maximum false positive rate $\alpha$. 
\begin{enumerate}
    \item Sample a training data set $\bar S \sim \cD^n$ for training a model $\cM_{\bar S}$.
    \item Randomly sample $b \in \{0, 1\}$, such that $b = 1$ with probability $p$. 
    \item Sample record $\zb \sim \bar S$ if $b = 1$, otherwise $\zb \sim \cD$.
    \item Output the decision threshold, $\phi$, that maximizes its true positive rate constrained to a maximum false positive rate of $\alpha$ for the inference attack, $\cA(\zb, \cM_{\bar S}, n, \cD, \phi)$. 
\end{enumerate}
\end{proc}

Note that in comparison to Experiment~\ref{membership_experiment}, the adversary $\cA$ takes an additional parameter $\phi$ here. With this $\phi$, the adversary can query the target model $\cM_S$ to perform membership inference. Procedure~\ref{algo:inference_training} works for any threshold-based inference attack where an adversary knows the data distribution and model training process well enough to train its own models similar to the target model. 

\shortsection{Application to Yeom's Attack} 
The membership inference attack of \cite{yeom2018privacy} uses per-instance loss information for membership inference. Given a loss $\ell(\zb, \cM_S)$ on the query record $\zb$, their approach classifies it as a member if the loss is less than the expected training loss. Using our Procedure~\ref{algo:inference_training}, we can instead use a threshold $\phi$ for membership inference that corresponds to an expected maximum false positive rate $\alpha$. In other words, if the per-instance loss $\ell(\zb, \cM_S) \le \phi$, then $\zb$ is classified as a member of the target model's training set $S$, otherwise it is classified as a non-member. We refer to this membership inference adversary as \bfattack{Yeom}.

\shortsection{Application to Shokri's Attack}
In the membership inference attack of \cite{shokri2017membership}, the attacker first trains multiple shadow models similar to the target model, and then uses these shadow models to train an inference model for binary classification. We modify this attack by taking the softmax output of the inference model that indicates the model's prediction confidence, and use our threshold selection procedure on the model confidence. By default, the model predicts the input is a member if the confidence is above 0.5, which is equivalent to Shokri et al.'s original version. We vary this threshold between 0 and 1 according to Procedure~\ref{algo:inference_training}, and refer to this inference adversary as \bfattack{Shokri}.

\subsection{Merlin}\label{sec:proposed_mi}\label{sec:merlin}
Procedure~\ref{algo:inference_training} can be used on any threshold-based inference attack. Here, we introduce a new threshold-based membership inference attack called \bfattack{Merlin}\footnote{Backronym for {\bf ME}asuring {\bf R}elative {\bf L}oss {\bf I}n {\bf N}eighborhood.} that uses a different approach to infer membership. Instead of the per-instance loss of a record, this method uses the direction of change in per-instance loss of the record when it is perturbed with a small amount of noise. The intuition here is that due to overfitting, the target model's loss on a training set record will tend to be close to a local minimum, so the loss at perturbed points near the original input will be higher. On the other hand, the loss is equally likely to either increase or decrease for a non-member record.

Algorithm~\ref{algo:proposed_mi_2} describes the attack procedure. For a query record $\zb$, random Gaussian noise with zero mean and standard deviation $\sigma$ is added and the change of loss direction is recorded. This step is repeated $T$ times and the $count$ is incremented each time the per-instance loss of the perturbed record increases. Though we use Gaussian noise, the algorithm works for other noise distributions as well. We also tried uniform distribution and observed similar results, but with different $\sigma$ values. Both the parameters $T$ and $\sigma$ can be pre-tuned on a hold-out set to maximize the attacker's distinguishing power and fixed for the entire attack process. In our experiments, we find $T = 100$ and $\sigma = 0.01$ work well across all data sets. Finally, the query record $\zb$ is classified as a member when $count / T \ge \phi$, where $\phi$ is a threshold that could be set by Procedure~\ref{algo:inference_training}.

\begin{algorithm}[tb]
 \SetKwInOut{Input}{Input}
 \SetKwInOut{Output}{Output}
 \underline{$\cA(\zb, \cM_S, n, \cD, \phi)$}:\\
 \Input{$\zb$: input record, $\cM_S$: model trained on data set $S$ of size $n$, $\cD$: data distribution, $\phi$: decision function, $T$: number of repeat, $\sigma$: standard deviation parameter }
 \Output{membership prediction of $\zb$ (0 or 1)}
    $count \leftarrow 0$ \;
    \For{$T$ runs} {
        $\bxi \sim \cN(0, \sigma^2\Ib)$ \tcp*{Sample Gaussian noise}
        \If{$\ell(\zb + \bxi, \cM_S) > \ell(\zb, \cM_S)$} {
            $count \leftarrow count + 1$ \;
        }
    }
    \Return {$count / T \ge \phi$} \tcp*{1 if `member'}
 \caption{Inference Using Direction of Change in Per-Instance Loss}
 \label{algo:proposed_mi_2}
\end{algorithm}

\shortsection{Comparison with Related Attacks}
Although the intuition behind the \bfattack{Merlin} is new, it has similarities with previous attacks that also involve sampling. \cite{fredrikson2015model} proposed a white-box attack for model inversion problem, which is different from the membership inference problem we consider, where the attacker has \emph{count} information of all training instances and uses it to guess the most probable value for the sensitive attribute of the query training instance. This `count' is different from the count used in \bfattack{Merlin} attack. \cite{long2018understanding} proposed a black-box model inversion attack that is similar to \bfattack{Merlin}. While the \bfattack{Merlin} attack considers the target point's environment in the input space, the attacks in \cite{long2018understanding} consider the target point's environment in the logit-space, i.e., the output of the target network before the softmax is applied. As the logit-space is much more dense than the input space, \bfattack{Merlin} is much more fine-grained, enabling it to detect membership where the logit-space attacks would not. \cite{choo2020label} recently proposed a label-only membership inference attack which is similar to \bfattack{Merlin} in the sense that they also use the model's behavior on \emph{neighboring points} as part of a membership inference attack. The key difference is that they assume the neighboring points, which in their case are data augmentations of the target record, are also present in the training set, while we do not have any such assumptions for \bfattack{Merlin}.

\subsection{Morgan}\label{sec:morgan}
Both \bfattack{Yeom} and \bfattack{Merlin} use different information for membership inference and hence do not necessarily identify the same member records. Some members are more vulnerable to one attack than the other, and different inputs produce false positives for each attack. Our observations of the distribution of the values from the \bfattack{Yeom} and \bfattack{Merlin} attacks (see Figure~\ref{fig:morgan_purchase_plots}) motivate combining the attacks in a way that can maximize PPV by excluding points with very low per-instance loss. The intuition is that if the per-instance loss is extremely low, the \bfattack{Merlin} attack will suggest a local minimum, but in fact it is a near-global minimum, which is not as strongly correlated with being a member. 
Hence, we introduce a combination of the \bfattack{Yeom} and \bfattack{Merlin} attacks, called \bfattack{Morgan}\footnote{{\bf M}easuring l{\bf O}ss, {\bf R}elatively {\bf G}reater {\bf A}round {\bf N}eighborhood.}, that combines both attacks to identify inputs that are most likely to be members. 

The \bfattack{Morgan} attack uses three thresholds: a lower threshold on per-instance loss $\phi_L$, an upper threshold on per-instance loss $\phi_U$, and a threshold on the ratio as used by \bfattack{Merlin}, $\phi_M$. \bfattack{Morgan} classifies a record as member if the per-instance loss of the record is between $\phi_L$ and $\phi_U$, both inclusive, and has a \bfattack{Merlin} ratio of at least $\phi_M$. The $\phi_U$ and $\phi_M$ thresholds are set using the standard threshold selection procedure for the \bfattack{Yeom} and \bfattack{Merlin} attacks respectively, by varying their $\alpha$ values. A value for $\phi_L$ is found using a grid search to find the maximum PPV possible in conjunction with $\phi_U$ and $\phi_M$ thresholds, and selecting the lowest value for $\phi_L$ that achieves that PPV to maximize the number of members identified. Note that all three thresholds are selected together to maximize the PPV on a separate holdout set that is disjoint from the target training set, as is done in our threshold selection procedure~\ref{algo:inference_training}. As reported in Table~\ref{tab:key_findings}, this exposes some members with 100\% PPV for both \bfdataset{RCV1X} and \bfdataset{CIFAR-100}. Section~\ref{sec:threshold_selection_results} reports on \bfattack{Morgan}'s success on identifying the most vulnerable records with $>95\%$ PPV at balanced prior and with $>90\%$ PPV in skewed prior cases ($\gamma > 1$).
\section{Experimental Setup}\label{sec:exp_setup}

This section describes the data sets and models used, along with the training procedure. We evaluate our methods on both standard (non-private) models and models trained using differential privacy mechanisms. We focus on differentially private models since our theoretical bounds apply to these models.  Although several other defenses have been proposed, such as dropout, model stacking or MemGuard~\citep{jia2019memguard}, our theoretical bounds do not apply to them and we do not include them in our evaluation.\footnote{Our attacks and experimental tests do, however, and it will be interesting to see how effective non-DP defenses are against our attacks, so we do plan to include evaluations of other defenses in future work.}  

Table~\ref{tab:key_findings} summarizes the data sets used and the performance of non-private models trained over each data set, and the maximum PPV of the most effective membership inference attack (\bfattack{Morgan}). In the balanced prior setting ($\gamma = 1$), some members are exposed with very high confidence ($>95\%$ PPV) for all the test data sets. The membership inference is significant even in the imbalanced prior case, when $\gamma = 10$. We defer discussion of these results to Section~\ref{sec:threshold_selection_results}.

\sisetup{
  input-decimal-markers = .,input-ignore = {,},table-number-alignment = right,
  group-separator={,}, group-four-digits = true
}

\begin{table*}[tp]
    \centering
    \small
    \begin{tabular}{cS[table-format=4.0]S[table-format=4.0]S[table-format=1.2]S[table-format=1.2]S[table-format=3.1,separate-uncertainty,table-figures-uncertainty=1]S[table-format=2.1,separate-uncertainty,table-figures-uncertainty=1]}
        Data set & {\#Features} & {\#Classes} & {Train Acc} & {Test Acc} & {PPV at $\gamma = 1$} & {PPV at $\gamma = 10$} \\ \hline
        \bfdataset{Purchase-100X} & 600 & 100 & 1.00 & 0.71 & 98.0 \pm 4.0 & 97.5 \pm 5.0 \\
        \bfdataset{Texas-100} & 6,000 & 100 & 1.00 & 0.53 & 95.7 \pm 4.6 & {\rm\em (insufficient  data)} \\ 
        \bfdataset{RCV1X} & 2,000 & 52 & 1.00 & 0.84 & 100.0 \pm 0.0 & 93.0 \pm 9.8 \\
        \bfdataset{CIFAR-100} & 3,072 & 100 & 0.48 & 0.18 & 100.0 \pm 0.0 & {\rm\em (insufficient data)} \\
    \end{tabular}
    \caption{Summary of data sets and results for non-private models. \rm Maximum PPV achieved by \bfattack{Morgan} is reported, averaged across five runs.}
    \label{tab:key_findings}
\end{table*}

\shortsection{Data Sets}
Multi-class classification tasks are more vulnerable to membership inference, as shown in prior works on both black-box~\citep{shokri2017membership, yeom2018privacy} and white-box~\citep{nasr2019comprehensive} attacks.
Hence, we select four multi-class classification tasks for our experiments. Although these data sets are public, they are representative of data sets that contain potentially sensitive information about individuals. 

\shortdsection{\bfdataset{Purchase-100X}} 
\cite{shokri2017membership} created \bfdataset{Purchase-100} data set by extracting customer transactions from Kaggle's acquire valued customers challenge~\citep{purchase}. The authors \emph{arbitrarily} selected 600 items from the transactions data and considered only those customers who purchased at least one of the 600 items. Their resulting data set consisted of 197,000 customer records with 600 binary features representing the customer purchase history. The records are clustered into 100 classes, each representing a unique purchase style, such that the goal is to predict a customer's purchase style.
Since we needed more records for our experiments with the $\gamma = 10$ setting, we curated our own data set by following the same procedure but instead of 600 arbitrary items taking the 600 \emph{most frequently} purchased items. This resulted in an expanded, but similar, data set with around 300,000 customer records which we call \bfdataset{Purchase-100X}.

\shortdsection{\bfdataset{Texas-100}} The Texas hospital data set, also used by \cite{shokri2017membership}, consists of 67,000 patient records with 6,000 binary features where each feature represents a patient's medical attribute. This data set also has 100 output classes where the task is to identify the main procedure that was performed on the patient. This data set is too small for tests with high $\gamma$ settings, but a useful benchmark for the other settings.
    
\shortdsection{\bfdataset{RCV1X}} The Reuters RCV1 corpus data set \citep{lewis2004rcv1} is a collection of Reuters newswire articles with more than 800,000 documents, a 47,000-word vocabulary and 103 classes. The original 103 classes are arranged in a hierarchical manner, and each article can belong to more than one class. We follow data pre-processing procedures similar to \cite{srivastava2014dropout} to obtain a data set such that each article only belongs to a single class.  The final data set we use has 420,000 articles, 2,000 most frequent words represented by their term frequency–inverse document frequency (TFIDF) which are used as features and 52 classes. We call our expanded data set \bfdataset{RCV1X}.

\shortdsection{\bfdataset{CIFAR-100}} We use the standard CIFAR-100~\citep{krizhevsky2009learning} data set used in machine learning which consists of 60,000 images of 100 common world objects. The task is to identify an object based on the input RGB image consisting of $32 \times 32$ pixels. Although the privacy issue here is not clear, we include this data set in our experiments because it is used as a benchmark in many privacy works.

\vspace{1ex}\noindent
All the above data sets are pre-processed such that the $\ell_2$ norm of each record is bounded by 1. This is a standard pre-processing procedure that improves model performance that is used by many prior works~\citep{chaudhuri2011differentially, jayaraman2018distributed}.

\shortsection{Model Architecture} We train neural networks with two hidden layers using ReLU activation. Each hidden layer has 256 neurons and the output layer is a softmax layer. Several previous works used similar multi-layer ReLU network architectures to analyze privacy-preserving machine learning~\citep{shokri2015privacy, abadi2016deep, shokri2017membership}. Details on hyperparameters can be found in Appendix~\ref{appendix:hyperparameters}. Table~\ref{tab:key_findings} includes the training and test accuracy of non-private models across the four data sets.\footnote{As with all of the experimental results we report in this paper, the results are averaged over five runs in which the target model is trained from the scratch for each run.} Although we tuned the model hyperparameters to maximize the test accuracy for each data set, there is a considerable gap between the training and test accuracy. This generalization gap indicates that the model overfits the training data, and hence, there is information in the model that could be exploited by inference attacks.

\begin{figure*}[tb]
    \centering
    \begin{subfigure}[b]{0.245\textwidth}
    \centering
    \includegraphics[width=\linewidth]{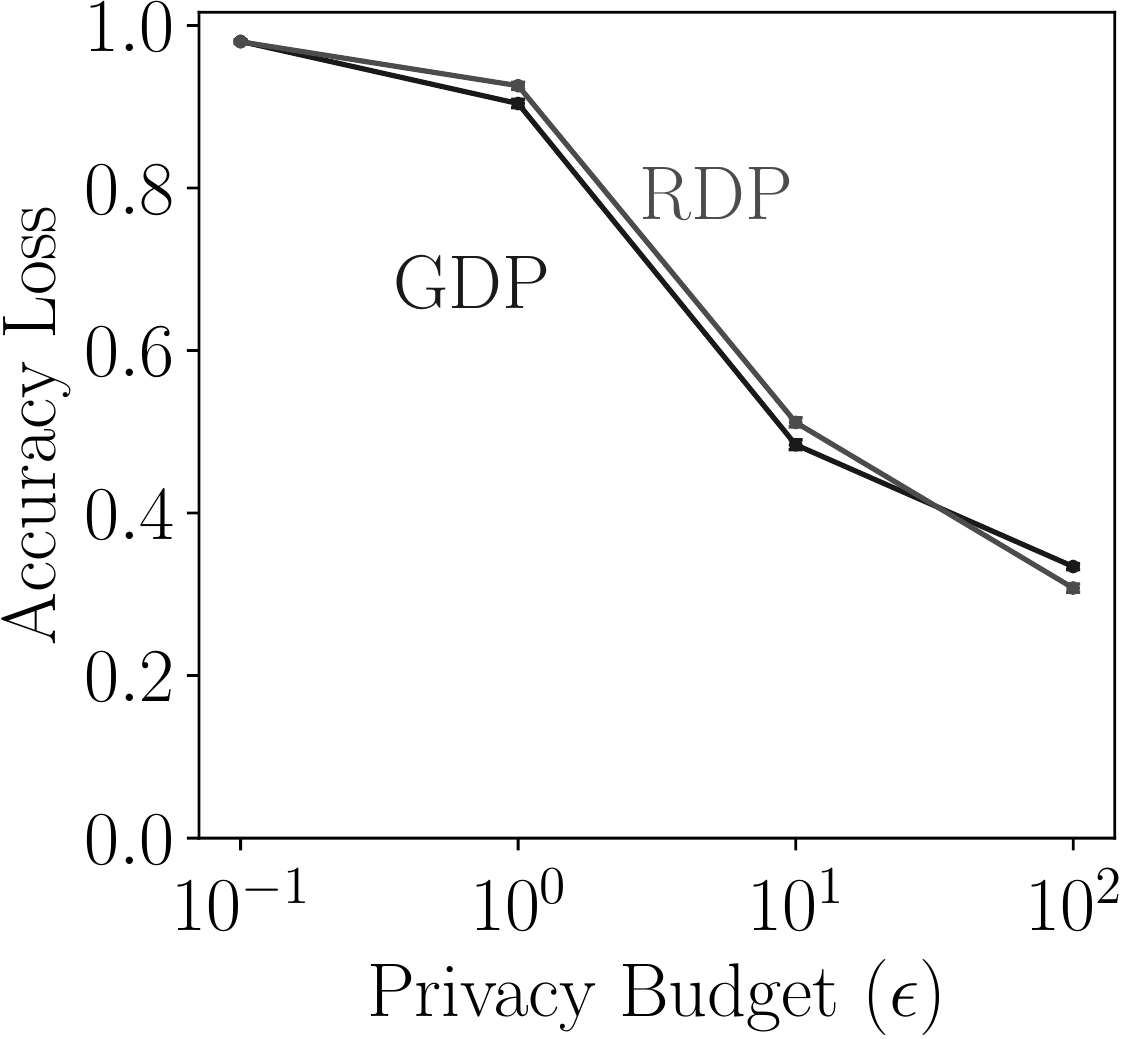}
    \caption{\bfdataset{Purchase-100X}}\label{fig:acc_loss_purchase_100}
    \end{subfigure}
    \begin{subfigure}[b]{0.245\textwidth}
    \centering
    \includegraphics[width=\linewidth]{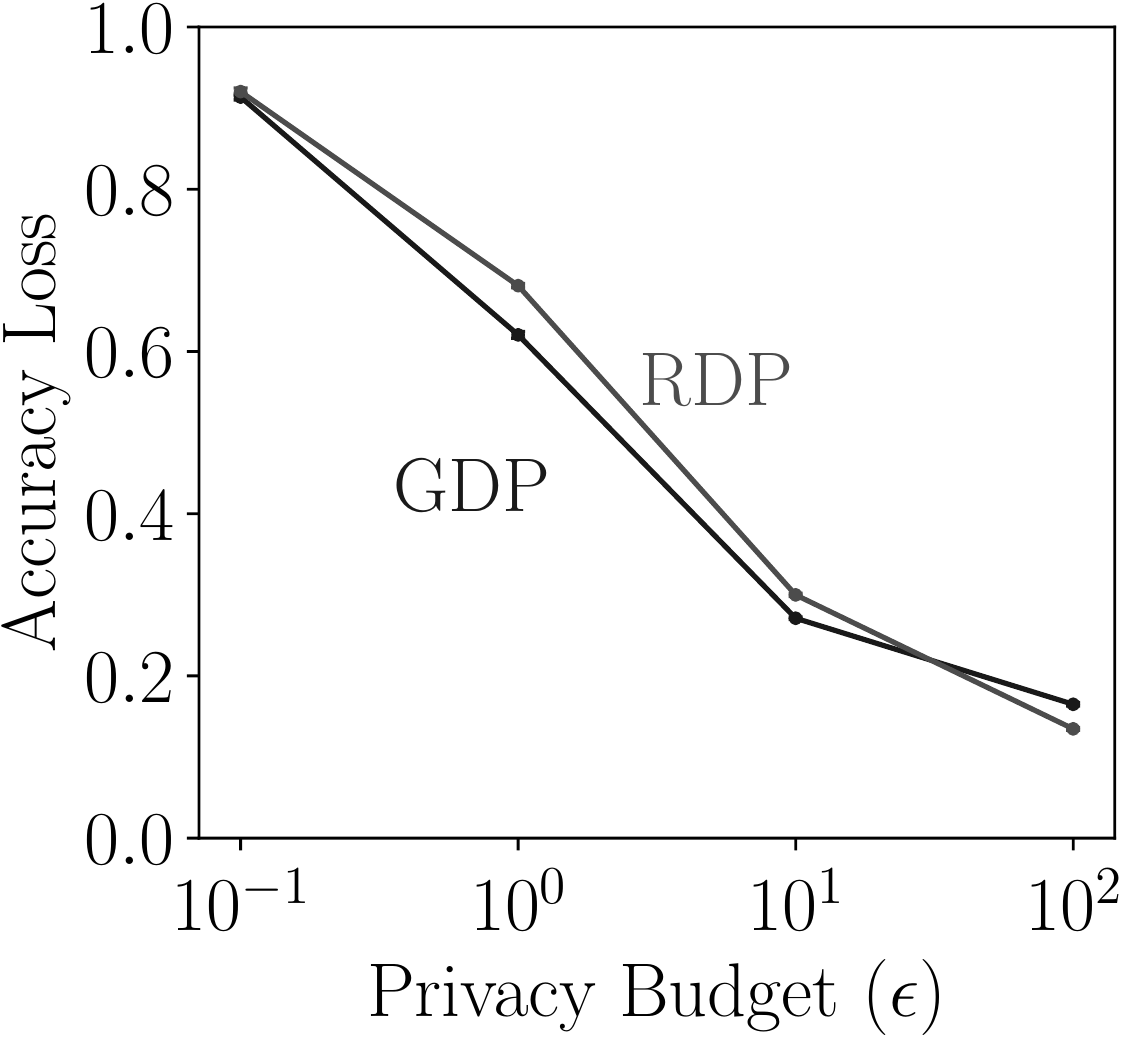}
    \caption{\bfdataset{Texas-100}}\label{fig:acc_loss_texas_100}
    \end{subfigure}
    \begin{subfigure}[b]{0.245\textwidth}
    \centering
    \includegraphics[width=\linewidth]{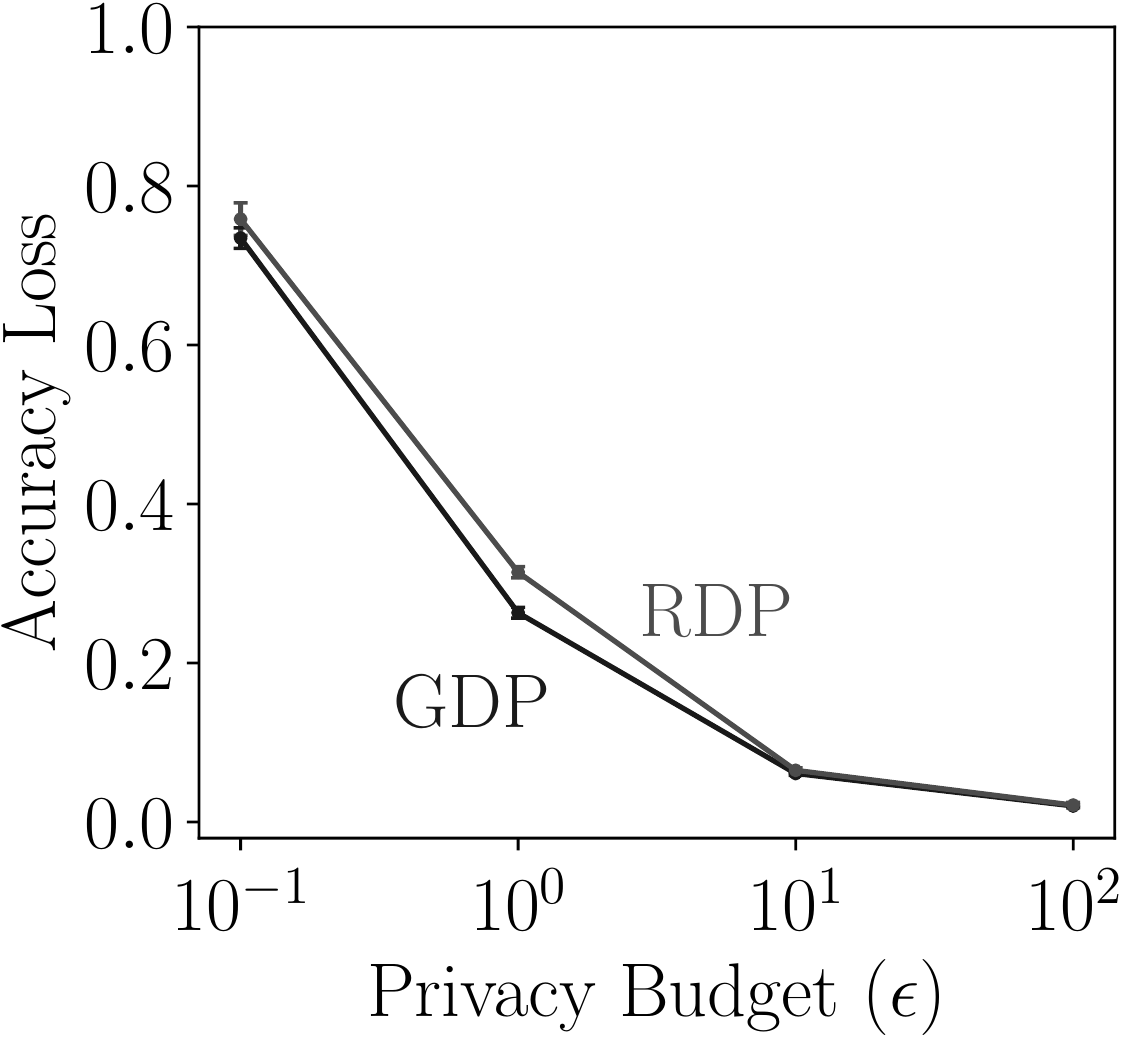}
    \caption{\bfdataset{RCV1X}}\label{fig:acc_loss_rcv1}
    \end{subfigure}
    \begin{subfigure}[b]{0.245\textwidth}
    \centering
    \includegraphics[width=\linewidth]{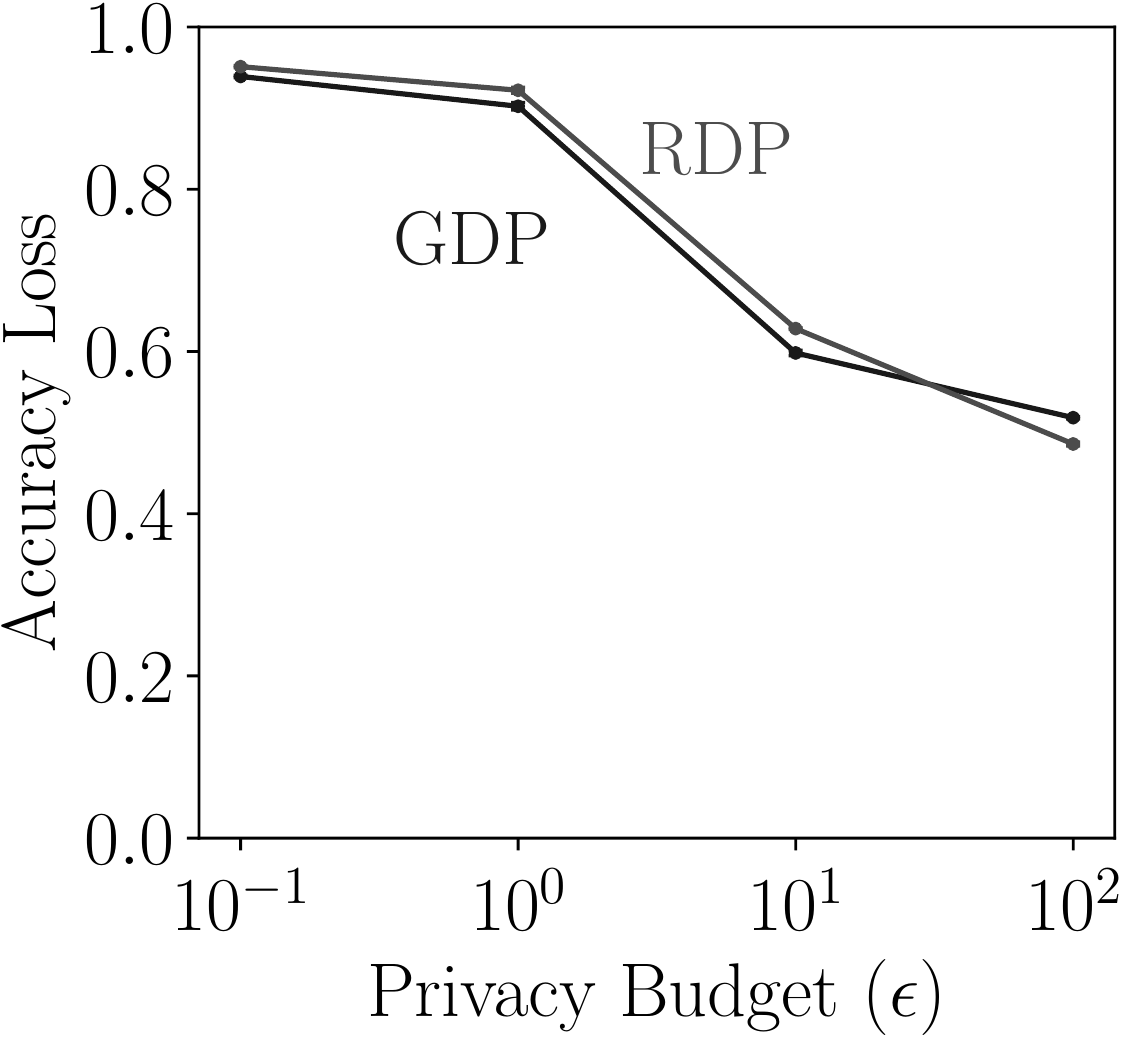}
    \caption{\bfdataset{CIFAR-100}}\label{fig:acc_loss_cifar_100}
    \end{subfigure}
    \caption{Accuracy loss comparison of private models trained with different privacy analyses.}
    \label{fig:acc_loss}
\end{figure*} 

\shortsection{Private Model Training}\label{sec:model_accuracy}
We evaluate the model accuracy of private neural network models trained on different data sets. We vary the privacy loss budget $\epsilon$ between 0.1 and 100 for differentially private training and repeat the experiments five times for all the settings to report the average results.

We report the \emph{accuracy loss}, which gives the relative loss in test accuracy of private models with respect to non-private baseline:
\[\text{\em Accuracy Loss} = 1 - \frac{\text{\em Accuracy of Private Model}}{\text{\em Accuracy of Non-Private Model}}\]

\noindent
Figure~\ref{fig:acc_loss} gives the accuracy loss of differentially private models trained on different data sets with varying privacy loss budgets. The private models are trained using the gradient perturbation mechanism where the gradients at each epoch are clipped and Gaussian noise is added to preserve privacy. The privacy accounting for composition of mechanisms is done via both Gaussian differential privacy (GDP)~\citep{dong2019gaussian} and the prior state-of-the-art R\'{e}nyi differential privacy (RDP)~\citep{mironov2017renyi}. As shown in the figure, the GDP mechanism has a lower accuracy loss for $\epsilon \le 10$ due to its tighter privacy analysis. The GDP composition theorem requires that the individual mechanisms be highly private, and hence it is hard to reduce noise for $\epsilon > 10$ without increasing the failure probability $\delta$. For all the data sets, GDP performs better than RDP, hence we only report the results for GDP in the remaining experiments.

\begin{figure}[tb]
    \centering
    \begin{subfigure}[b]{0.35\textwidth}
    \centering
    \includegraphics[width=.95\linewidth]{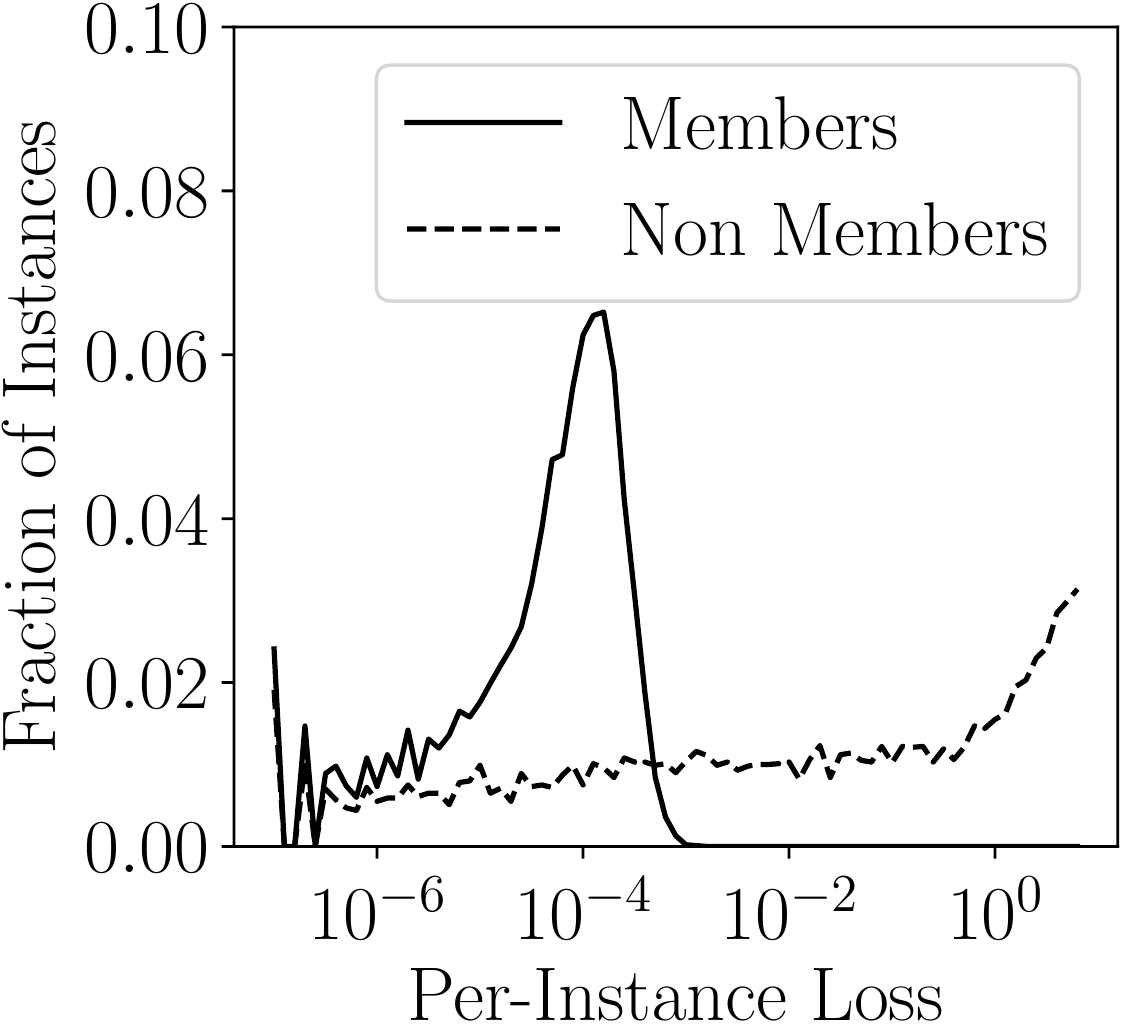}
    \caption{Loss Distribution.}
    \label{fig:purchase_100_mi_1_analysis_a}
    \end{subfigure} \qquad \qquad
    \begin{subfigure}[b]{0.35\textwidth}
    \centering
    \includegraphics[width=.95\linewidth]{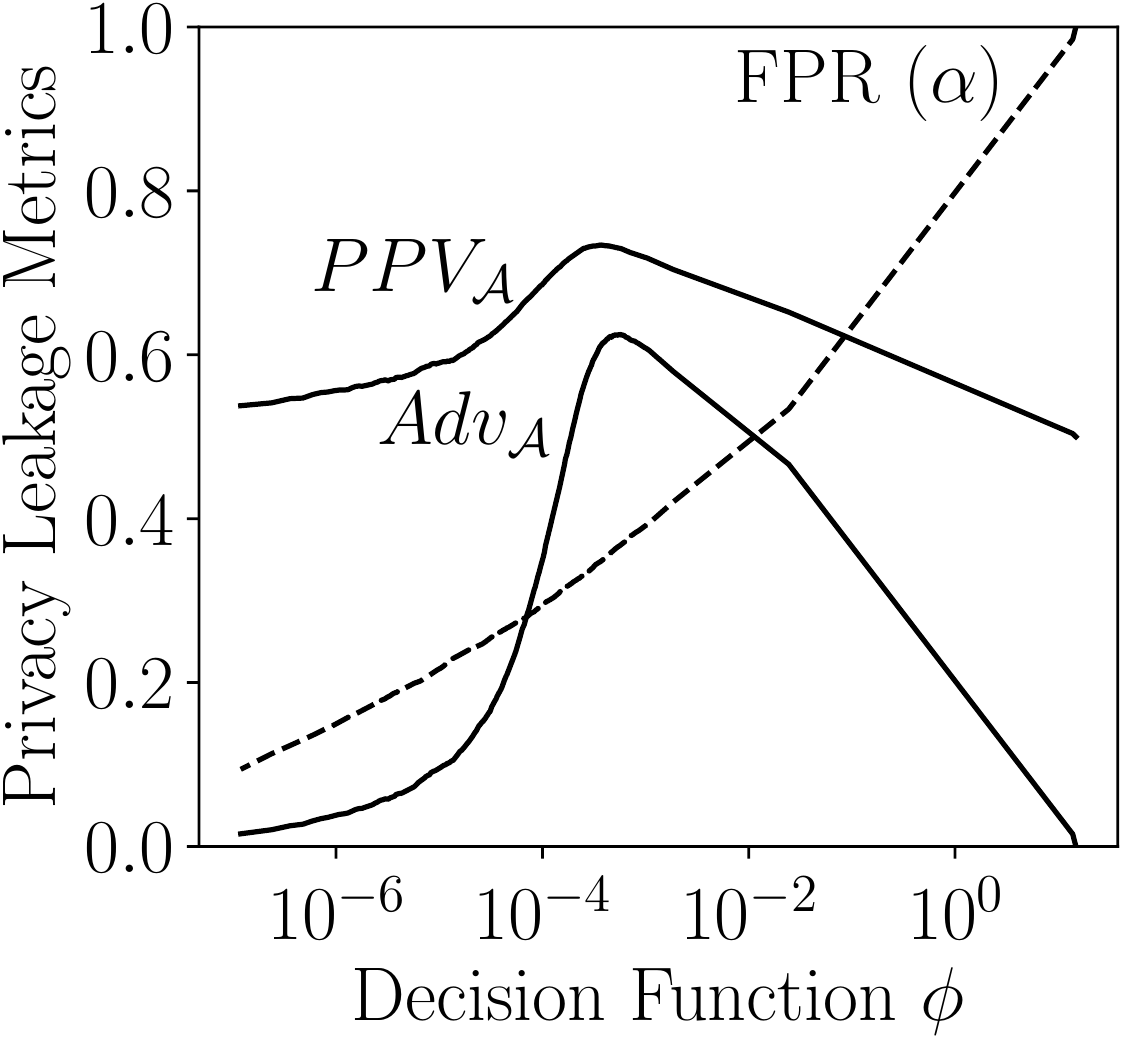}
    \caption{\bfattack{Yeom} Performance.}
    \label{fig:purchase_100_mi_1_analysis_b}
    \end{subfigure}
    \caption{Analysis of \bfattack{Yeom} on non-private model trained on \bfdataset{Purchase-100X} with balanced prior. {\rm The x-axis shows the per-instance loss on a logarithmic scale from $10^{-7}$ to $10^1$ where the buckets are in the range $(10^{-7}, 10^{-6.9})$, $(10^{-6.9}, 10^{-6.8})$, and so on up to $(10^{0.9}, 10^{1})$. }}
    \label{fig:purchase_100_mi_1_analysis}
\end{figure}
\begin{table*}[tb]
    \centering
    \begin{tabular}{llS[table-format=2.2]cS[table-format=2.1,separate-uncertainty,table-figures-uncertainty=1]S[table-format=2.1,separate-uncertainty,table-figures-uncertainty=3]}
        & & {$\alpha$} & $\phi$ & {$Adv_\cA$} & {$PPV_\cA$} \\ \hline
        \multirow{5}{*}{\parbox{1.5cm}{\bfattack{Yeom}}} & Fixed FPR & 1.00 & - & {-} & {-} \\
        & Min FPR & 10.00 & \num{0} & 1.0 \pm 0.8 & 32.5 \pm 26.5 \\
        & Fixed $\phi$ & {-} & \num{1.0 \pm 0.0 e-4} & 33.8 \pm 0.3 & 68.1 \pm 0.2 \\
        & Max $PPV_\cA$ & 35.00 & \num{3.7 \pm 0.3 e-4} & 60.2 \pm 0.8 & 73.0 \pm 0.2 \\
        & Max $Adv_\cA$ & 37.00 & \num{6.0 \pm 0.4 e-4} & 61.9 \pm 0.3 & 72.7 \pm 0.2 \\ \hline
        \multirow{5}{*}{\parbox{1.5cm}{\bfattack{Yeom CBT}}} & Min FPR & 0.01 & \rm \num{0}, \num{0}, \num{6.7e-6} & 0.2 \pm 0.1 & 73.4 \pm 5.0 \\
        & Max $PPV_\cA$ & 0.01 & \rm \num{0}, \num{0}, \num{6.7e-6} & 0.2 \pm 0.1 & 73.4 \pm 5.0 \\
        & Fixed FPR & 1.00 & \rm \num{0}, \num{0}, \num{6.7e-6} & 0.2 \pm 0.1 & 73.4 \pm 5.0 \\
        & Fixed $\phi$ & {-} & \rm(\num{0.2}, \num{0.9}, \num{2.4})$\times$\num{e-4} & 32.4 \pm 1.6 & 67.6 \pm 0.5 \\
        & Max $Adv_\cA$ & 55.00 & \rm(\num{1.1}, \num{4.6}, \num{18.7})$\times$\num{e-4} & 61.6 \pm 0.5 & 72.7 \pm 0.2 \\ \hline
        \multirow{5}{*}{\parbox{1.5cm}{\bfattack{Shokri}}} & Min FPR & 0.02 & \num{1.06 \pm 0.38} & 0.0 \pm 0.1 & 33.2 \pm 31.7 \\
        & Fixed FPR & 1.00 & \num{0.80 \pm 0.02} & 1.8 \pm 0.5 & 71.9 \pm 4.2 \\
        & Max $PPV_\cA$ & 1.92 & \num{0.78 \pm 0.01} & 3.8 \pm 0.5 & 73.4 \pm 1.6 \\
        & Fixed $\phi$ & {-} & \num{0.50 \pm 0.00} & 50.6 \pm 0.5 & 67.1 \pm 0.3 \\
        & Max $Adv_\cA$ & 47.30 & \num{0.50 \pm 0.04} & 50.6 \pm 0.7 & 67.1 \pm 0.2 \\ \hline
        \multirow{5}{*}{\parbox{1.6cm}{\bfattack{Shokri CBT}}} & Fixed FPR & 1.00 & {-} & {-} & {-} \\
        & Min FPR & 2.00 & \num{0.5}, \num{0.8}, \num{1.8} & 0.1 \pm 0.1 & 53.8 \pm 5.0 \\
        & Max $PPV_\cA$ & 40.00 & \num{0.4}, \num{0.7}, \num{1.1} & 57.5 \pm 0.4 & 72.0 \pm 0.2 \\
        & Max $Adv_\cA$ & 50.00 & \num{0}, \num{0.7}, \num{1.1} & 59.6 \pm 0.4 & 71.7 \pm 0.1 \\ 
        & Fixed $\phi$ & {-} & \num{0.5}, \num{0.5}, \num{0.5} & 50.6 \pm 0.5 & 67.1 \pm 0.3 \\
        \hline
        \multirow{4}{*}{\parbox{1.5cm}{\bfattack{Merlin}}} & Min FPR & 0.01 & \num{0.88 \pm 0.01} & 0.1 \pm 0.0 & 93.4 \pm 6.3 \\
        & Max $PPV_\cA$ & 0.01 & \num{0.88 \pm 0.01} & 0.1 \pm 0.0 & 93.4 \pm 6.3 \\
        & Fixed FPR & 1.00 & \num{0.78 \pm 0.00} & 3.4 \pm 0.2 & 85.0 \pm 2.0 \\
        & Max $Adv_\cA$ & 31.00 & \num{0.60 \pm 0.00} & 20.6 \pm 0.2 & 62.6 \pm 0.2 \\ \hline
        \bfattack{Morgan} & Max $PPV_\cA$ & {-} & \rm \num{3.4e-5}, \num{6.0e-4}, \num{0.88} & 0.1 \pm 0.0 & 98.0 \pm 4.0 \\
    \end{tabular}
    \caption{Thresholds selected against non-private models trained on \bfdataset{Purchase-100X} with balanced prior. \rm 
    The results are averaged over five runs such that the target model is trained from the scratch for each run. \bfattack{Yeom CBT} and \bfattack{Shokri CBT} use class-based thresholds, where $\phi$ shows the triplet of minimum, median and maximum thresholds across all classes. All values, except $\phi$, are in percentage.}
    \label{tab:purchase_100_loss_selection_results_uniform_prior}
\end{table*}

\section{Empirical Results}\label{sec:threshold_selection_results}

In this section, we evaluate our threshold selection procedure (Procedure~\ref{algo:inference_training}) across the four inference attacks. We first consider the \bfattack{Yeom} attack, and show that our threshold selection procedure can be used to obtain thresholds that achieve particular attacker goals, such as maximizing the PPV or membership advantage metric, or minimizing the false positive rate. Next, we use our threshold selection procedure on the \bfattack{Shokri} attack and discuss the results in Section~\ref{sec:evalshokri}. In Section~\ref{sec:evalmerlin} we evaluate the \bfattack{Merlin} attack using the same threshold selection procedure, and find that it achieves higher PPV metric compared to both \bfattack{Yeom} and \bfattack{Shokri}. Then, Section~\ref{sec:evalmorgan} shows how the \bfattack{Morgan} attack achieves higher PPV by combining aspects of both \bfattack{Yeom} and \bfattack{Merlin}. Results in the first four subsections focus on non-private models and balanced prior scenarios. In Section~\ref{sec:threshprivacy} we evaluate the attacks on differentially private models. Section~\ref{sec:imbalanced_prior_case} presents results for scenarios with imbalanced priors. The results show that non-private models are vulnerable to our proposed attacks, especially \bfattack{Morgan}, even in the skewed prior settings. Private models are vulnerable in the balanced prior setting if the privacy loss budget is set beyond theoretical guarantees.

\subsection{Yeom Attack}\label{sec:evalyeom}
The \bfattack{Yeom} attack uses a fixed threshold on per-instance loss for its membership inference test. A query record is classified as a member if its per-instance loss is less than the selected threshold. We show that the adversary can achieve better privacy leakage, specific to particular attack goals, by using our threshold selection procedure.

\begin{table*}[tb]
    \centering
    \small
    \begin{tabular}{llS[table-format=2.2]S[table-format=2.1,separate-uncertainty,table-figures-uncertainty=1]S[table-format=2.1,separate-uncertainty,table-figures-uncertainty=1]S[table-format=2.2]S[table-format=2.1,separate-uncertainty,table-figures-uncertainty=1]S[table-format=2.1,separate-uncertainty,table-figures-uncertainty=1]}
        & & \multicolumn{3}{c}{\bfdataset{Texas-100}} & \multicolumn{3}{c}{\bfdataset{RCV1X}} \\
        & & {$\alpha$} & {$Adv_\cA$} & {$PPV_\cA$} & {$\alpha$} & {$Adv_\cA$} & {$PPV_\cA$} \\ \hline
        \multirow{5}{*}{\parbox{1cm}{\bfattack{Yeom}}} & Fixed FPR & 1.00 & {-} & {-} & 1.00 & {-} & {-} \\
        & Min FPR & 3.00 & 0.4 \pm 0.9 & 12.0 \pm 24.0 & 33.00 & 0.4 \pm 0.8 & 10.3 \pm 20.5 \\
        & Fixed $\phi$ & {-} & 51.3 \pm 2.6 & 75.0 \pm 1.6 & {-} & 26.9 \pm 2.7 & 58.0 \pm 0.8 \\
        & Max $PPV_\cA$ & 26.00 & 59.2 \pm 11.7 & 76.1 \pm 1.6 & 67.00 & 24.8 \pm 5.0 & 57.9 \pm 1.0 \\
        & Max $Adv_\cA$ & 31.00 & 62.9 \pm 7.7 & 75.0 \pm 0.6 & 70.00 & 25.1 \pm 3.2 & 57.7 \pm 0.6 \\ \hline
        \multirow{5}{*}{\parbox{0.8cm}{\bfattack{Shokri}}} & Min FPR & 0.01 & 1.0 \pm 0.5 & 72.6 \pm 8.6 & 0.01 & 0.6 \pm 0.2 &  91.7 \pm 4.2 \\
        & Max $PPV_\cA$ & 0.70 & 13.8 \pm 1.1 & 89.4 \pm 1.5 & 0.01 & 0.6 \pm 0.2 &  91.7 \pm 4.2 \\
        & Fixed FPR & 1.00 & 16.0 \pm 1.3 & 88.9 \pm 1.7 & 1.00 &  4.6 \pm 0.6& 84.5 \pm 1.8 \\
        & Fixed $\phi$ & {-} & 64.0 \pm 1.4 & 74.1 \pm 1.3 & {-} & 24.0 \pm 0.8 & 57.3 \pm 0.4 \\
        & Max $Adv_\cA$ & 31.00 & 64.1 \pm 1.2 & 74.7 \pm 0.9 &  75.00 & 24.2\pm 0.5 & 58.0\pm 0.4 \\ \hline
        \multirow{4}{*}{\parbox{1cm}{\bfattack{Merlin}}} & Min FPR & 0.01 & 0.1 \pm 0.1 & 51.9 \pm 42.4 & 0.01 & 0.2 \pm 0.0 & 98.8 \pm 2.4 \\
        & Max $PPV_\cA$ & 0.06 & 0.3 \pm 0.2 & 92.0 \pm 4.5 & 0.01 & 0.2 \pm 0.0 & 98.8 \pm 2.4 \\
        & Fixed FPR & 1.00 & 4.9 \pm 1.3 & 87.8 \pm 2.7 & 1.00 & 2.6 \pm 0.7 & 81.7 \pm 4.3 \\
        & Max $Adv_\cA$ & 36.00 & 37.8 \pm 1.5 & 68.0 \pm 0.8 & 26.00 & 11.6 \pm 2.3 & 59.5 \pm 2.0 \\ \hline
        \bfattack{Morgan} & Max $PPV_\cA$ & {-} & 0.5 \pm 0.2 & 95.7 \pm 4.6 & {-} & 0.4 \pm 0.3 & 100.0 \pm 0.0 \\
    \end{tabular}
    \caption{Comparing attacks on non-private models trained on \bfdataset{Texas-100} and \bfdataset{RCV1X} data sets for balanced prior. {\rm All values are percentages ($\alpha=0.01$ means 1 out of 10,000).}} % table 4
    \label{tab:leakage_comparison_balanced_prior}
\end{table*}

\shortsection{Results on \bfdataset{Purchase-100X}}
Figure~\ref{fig:purchase_100_mi_1_analysis_a} shows the distribution of per-instance loss of members and non-members for a non-private model trained on \bfdataset{Purchase-100X}. Per-instance losses of members are concentrated close to zero, and most of the loss values are less than 0.001. Whereas for non-members, the loss values are spread across the range. This suggests that a larger fraction of members will be identified by the attacker with high precision (PPV) for loss thresholds less than 0.001, and hence the privacy leakage will be high. 

Another notable observation is that out of the 10,000 test records there are $959.2 \pm 23.5$ non-members (average across five runs) with zero loss, and hence the minimum achievable false positive rate is around 10\%. This is reflected in Figure~\ref{fig:purchase_100_mi_1_analysis_b}, which shows the effect of selecting different loss thresholds on the privacy leakage metrics. An attacker can use our threshold selection procedure to choose a loss threshold to meet specific attack goals, such as minimizing the false positive rate (Min FPR), or achieving a fixed false positive rate (Fixed FPR), or maximizing either of the privacy leakage metrics (Max $PPV_\cA$ and Max $Adv_\cA$). Table~\ref{tab:purchase_100_loss_selection_results_uniform_prior} summarizes these scenarios and compares their thresholds with the threshold selected by the method of Yeom et al.\ (Fixed $\phi$). For Fixed FPR, we consider an attacker with a false positive rate of 1\% ($\alpha = 1\%$). 

The attacker uses Procedure~\ref{algo:inference_training} to find the loss threshold, $\phi$, corresponding to $\alpha = 1\%$, which it uses for membership inference on the target set. However, since the minimum achievable false positive rate for \bfattack{Yeom} on \bfdataset{Purchase-100X} is 10\%, this attack fails to find a suitable threshold. For maximizing PPV or advantage, the attacker can use the threshold selection procedure with varying $\alpha$ values and choose the threshold $\phi$ that maximizes the required privacy metric. In comparison, Fixed~$\phi$ uses expected training loss as threshold which does not necessarily maximize the privacy leakage. As the results in the table demonstrate, an attacker can accomplish different attack goals, and achieve increased privacy leakage, using the \bfattack{Yeom} attack with thresholds chosen using our threshold selection procedure.

\shortsection{Results on Other Data Sets}
Table~\ref{tab:leakage_comparison_balanced_prior} compares the performance of \bfattack{Yeom} against non-private models across 
the \bfdataset{Texas-100} and \bfdataset{RCV1X} data sets. We observe similar trends of privacy leakage corresponding to the selected thresholds for these data sets as we did for \bfdataset{Purchase-100X} so present most of the results for these data sets in Appendix~\ref{appendix:non_private_results}, and only discuss some notable differences here. Results for \bfdataset{CIFAR-100} can be found in Appendix~\ref{appendix:non_private_results}.

For \bfdataset{Texas-100}, \bfattack{Yeom} can achieve false positive rates as low as 3\%. The attack performance on this data set is comparable to that of \bfdataset{Purchase-100X}. For \bfdataset{RCV1X}, the attack success rate is substantially lower than that for the other data sets. This is because, unlike the other data sets which have 100 classes, \bfdataset{RCV1X} is a 52-class classification task. As reported in prior works~\citep{membershipinference, yeom2018privacy}, success of membership inference attack is proportional to the complexity of classification task. We further note that the maximum PPV that can be achieved by \bfattack{Yeom} on \bfdataset{RCV1X} is only around 58\%, at which point the membership advantage is close to 27\%. This gives credence to our claim that membership advantage should not be solely relied on as a measure of inference risk. While membership advantage can be high, the privacy leakage is negligible for balanced priors when the PPV is close to 50\%. Later in Section~\ref{sec:imbalanced_prior_case} we show that this phenomenon is prevalent across all data sets when the prior is imbalanced. 

\bfattack{Yeom}'s performance on \bfdataset{CIFAR-100} is similar to that on \bfdataset{Purchase-100X} and \bfdataset{Texas-100} data sets. Since the model does not completely overfit on \bfdataset{CIFAR-100}, the distribution of loss values for both members and non-members are not far apart, and as a consequence \bfattack{Yeom} is able to achieve much lower false positive rates. 

\shortsection{Using Class-Based Thresholds}
Recently, \cite{song2020systematic} demonstrated that the approach of \cite{yeom2018privacy} can be further improved by using class-based thresholds instead of one global threshold on loss values. We implement this approach, using our threshold setting algorithm to independently set the threshold for each class (referred as \bfattack{Yeom CBT}). This enables finding class-based thresholds corresponding to smaller $\alpha$ values, as seen for the minimum FPR ($\alpha = 0.01$) and fixed FPR ($\alpha = 1$) cases for \bfdataset{Purchase-100X} in Table~\ref{tab:purchase_100_loss_selection_results_uniform_prior}. Nonetheless, the maximum PPV still does not increase much beyond \bfattack{Yeom} on \bfdataset{Purchase-100X}, with the largest increase being from 73.0\% to 73.4\%. For other data sets, though, this technique improves the maximum PPV of \bfattack{Yeom}. For \bfdataset{Texas-100}, the PPV increases from 76\% to 92\%, for \bfattack{RCV1X}, the PPV increases from 58\% to 93\% and for \bfdataset{CIFAR-100}, the PPV increases from 73\% to 81\% (see Appendix~\ref{appendix:non_private_results}). However, the maximum PPV never exceeds beyond \bfattack{Merlin} or \bfattack{Morgan}. While \cite{song2020systematic} also showed the application of their class-based thresholds on other metrics such as model confidence and modified entropy, their experimental results show that these approaches achieve similar attack performance to the CBT on per-instance loss metric. Hence, we do not include the CBT results for other metrics.

\subsection{Shokri Attack}\label{sec:evalshokri}
The \bfattack{Shokri} attack~\citep{shokri2017membership} requires training multiple shadow models on hold-out data sets similar to the target model. These shadow models are used to train an inference model that outputs a confidence value between 0 and 1 for membership inference, where 1 indicates member.  We use the experimental setting of \cite{jayaraman2019evaluating} to train five shadow models with the same architecture and hyperparameter settings of the target model. The inference model is a two-layer neural network with 64 neurons in each hidden layer. As with the \bfattack{Yeom} attack, our threshold selection procedure can be used to increase privacy leakage for \bfattack{Shokri}.

\shortsection{Results on \bfdataset{Purchase-100X}}
Table~\ref{tab:purchase_100_loss_selection_results_uniform_prior} shows the privacy leakage of \bfattack{Shokri} for different attack goals. The original attack of Shokri et al.\ (Fixed $\phi$) uses a threshold of 0.5 on the inference model confidence and achieves close to 50\% membership advantage, but has a PPV of around 67\%. Using our threshold setting procedure to maximize PPV, \bfattack{Shokri} achieves PPV of over 73\%, which is comparable to the \bfattack{Yeom} attack.

\shortsection{Results on Other Data Sets}
Table~\ref{tab:leakage_comparison_balanced_prior} shows the results of \bfattack{Shokri} across multiple data sets. The \bfattack{Shokri} attack performance varies considerably across different data sets when compared to the \bfattack{Yeom} attack. While \bfattack{Shokri} achieves higher PPV than \bfattack{Yeom} on \bfdataset{Texas-100} and \bfdataset{RCV1X}, reflecting significant privacy risk on these data sets, \bfattack{Yeom} outperforms \bfattack{Shokri} on \bfdataset{CIFAR-100}. However, \bfattack{Merlin} and \bfattack{Morgan} consistently achieve higher PPV than both \bfattack{Yeom} and \bfattack{Shokri} (see Sections~\ref{sec:evalmerlin} and~\ref{sec:evalmorgan}).

\shortsection{Using Class-Based Thresholds}
We also use class-based thresholds for \bfattack{Shokri} attack and include the results for \bfdataset{Purchase-100X} in Table~\ref{tab:purchase_100_loss_selection_results_uniform_prior} (called \bfattack{Shokri CBT}). However, we do not observe any significant improvement in privacy leakage over the \bfattack{Shokri} attack. While the maximum membership advantage increases from 50\% to around 60\%, the maximum PPV is still close to 72\%. We observe similar behaviour across other data sets.

\begin{figure}[tb]
    \centering
    \begin{subfigure}[b]{0.35\textwidth}
    \centering
    \includegraphics[width=.95\linewidth]{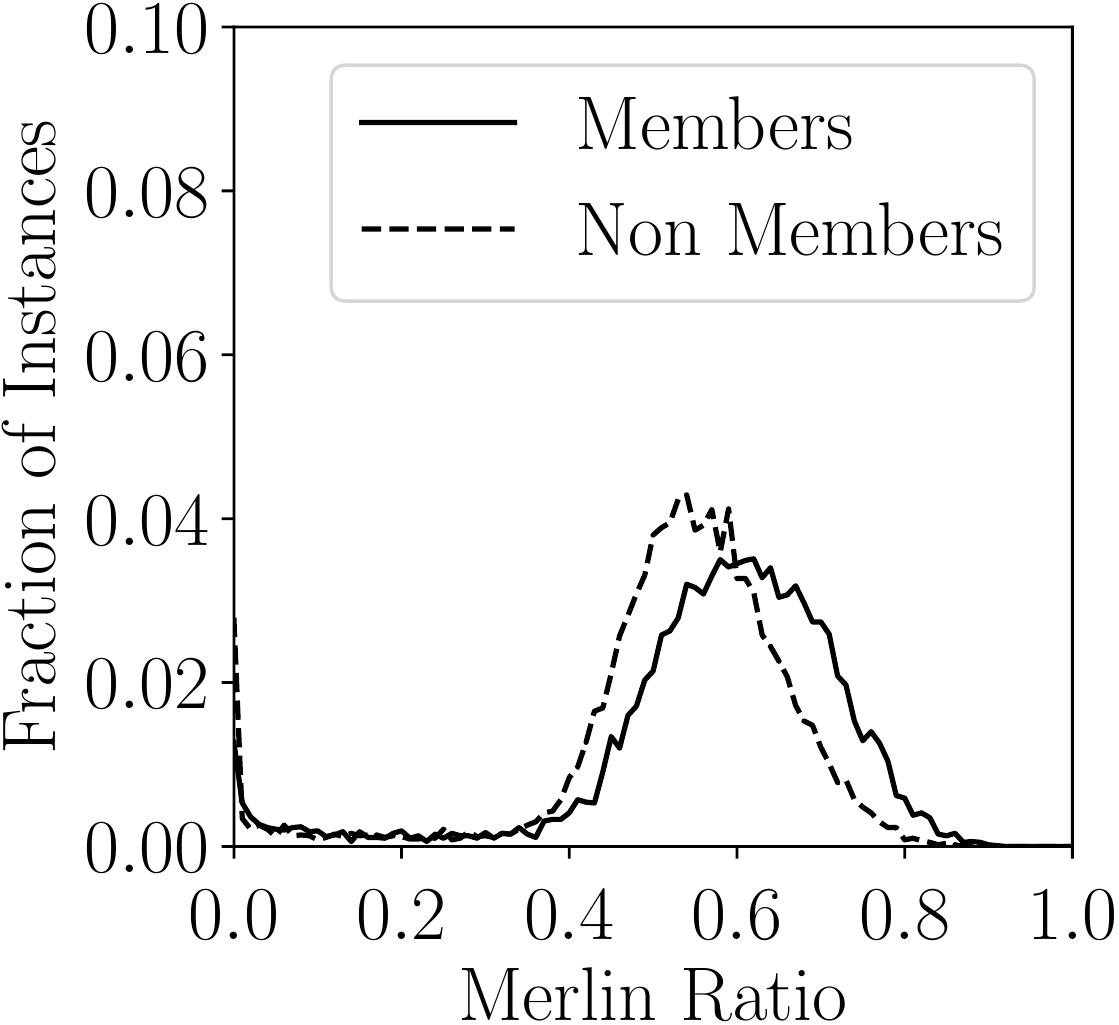}
    \caption{\bfattack{Merlin} Ratio Distribution.}
    \label{fig:purchase_100_mi_2_analysis_a}
    \end{subfigure} \qquad \qquad
    \begin{subfigure}[b]{0.35\textwidth}
    \centering
    \includegraphics[width=.95\linewidth]{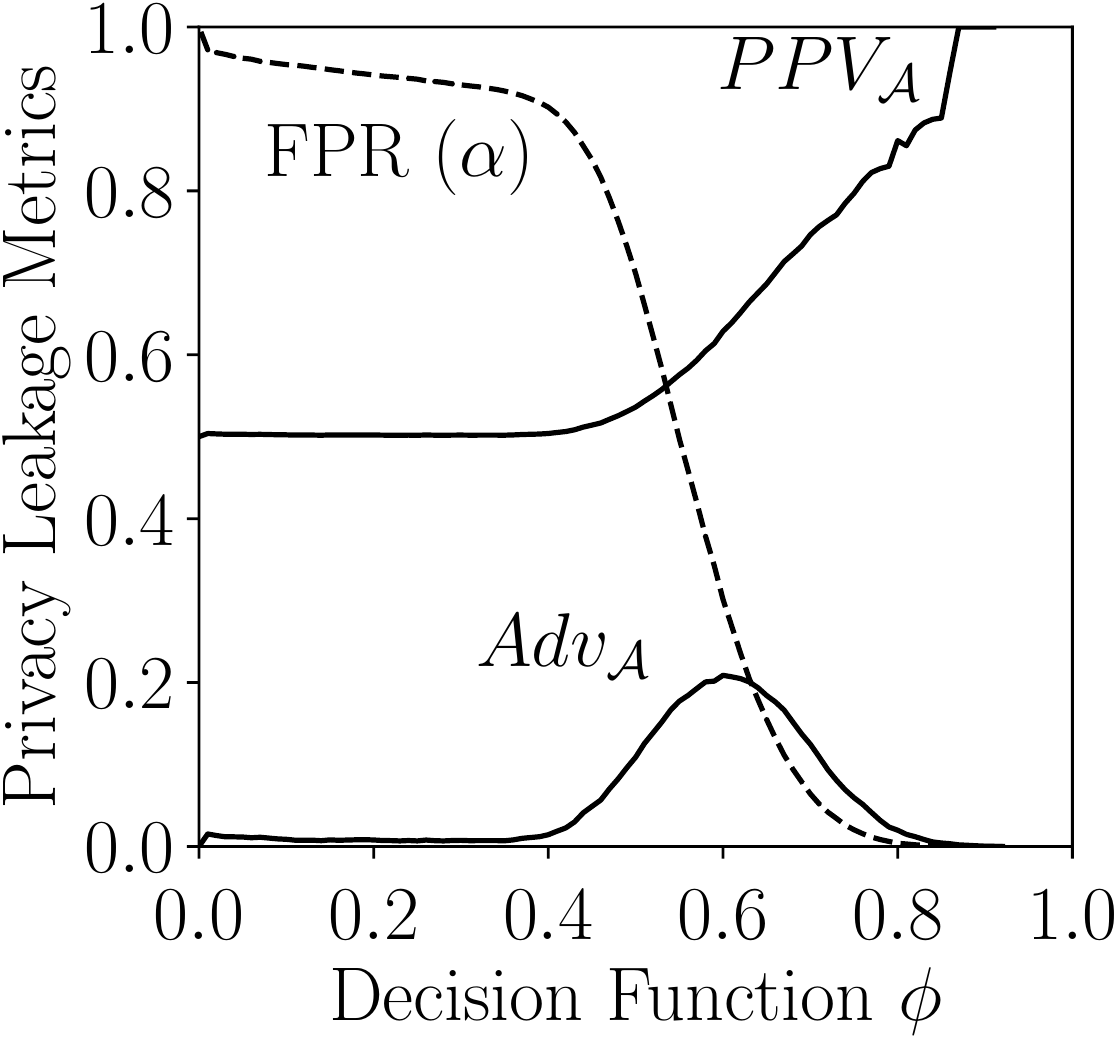}
    \caption{\bfattack{Merlin} Performance.}
    \label{fig:purchase_100_mi_2_analysis_b}
    \end{subfigure}
    \caption{Analysis of \bfattack{Merlin} on non-private model trained on \bfdataset{Purchase-100X} with balanced prior.}
    \label{fig:purchase_100_mi_2_analysis}
\end{figure}

\subsection{Merlin Attack}\label{sec:evalmerlin}
Next, we perform inference attacks using the \bfattack{Merlin} (Algorithm \ref{algo:proposed_mi_2}) where the attacker perturbs a record with random Gaussian noise of magnitude $\sigma = 0.01$ and notes the direction of change in loss. This process is repeated $T = 100$ times and the attacker counts the number of times the loss increases out of $T$ trials to find the \bfattack{Merlin} ratio, $count / T$. If the \bfattack{Merlin} ratio exceeds a threshold, then the record is classified as a member. As with the \bfattack{Yeom} and \bfattack{Shokri} experiments, we use Procedure~\ref{algo:inference_training} to select a suitable threshold. 

\shortsection{Results on \bfdataset{Purchase-100X}}
Figure~\ref{fig:purchase_100_mi_2_analysis_a} shows the distribution of \bfattack{Merlin} ratio for member and non-member records for a non-private model trained on the \bfdataset{Purchase-100X} data set. The average \bfattack{Merlin} ratio is $0.57 \pm 0.17$ for member records, whereas for the non-member records it is $0.52 \pm 0.16$. A peculiar observation is that the \bfattack{Merlin} ratio is zero for a considerable fraction of members and non-members. For these non-member records, the loss is very high to begin with and hence it never increases for the nearby noise points. Whereas for the member records, the loss value does not change even with addition of noise.  As mentioned in step 5 of Algorithm~\ref{algo:proposed_mi_2}, we only check if the loss increases upon perturbation since we believe that equality is not a strong indicator of membership. Hence these outliers indicate regions where the loss doesn't change, not points where it always decreases.

Figure~\ref{fig:purchase_100_mi_2_analysis_b} shows the attack performance with varying thresholds. \bfattack{Merlin} can achieve much higher PPV than \bfattack{Yeom} and \bfattack{Shokri}.  Table~\ref{tab:purchase_100_loss_selection_results_uniform_prior} 
summarizes the thresholds selected by \bfattack{Merlin} with different attack goals and compares the performance with \bfattack{Yeom} and \bfattack{Shokri}. While \bfattack{Yeom} can only achieve a minimum false positive rate of 10\% on this data set, \bfattack{Merlin} can achieve false positive rate as low as 0.01\%. Thus \bfattack{Merlin} is successful at a fixed false positive rate of 1\% where \bfattack{Yeom} fails. Another notable observation is that \bfattack{Merlin} can achieve close to 93\% PPV, while the maximum possible PPV achievable via \bfattack{Yeom} and \bfattack{Shokri} (including their CBT versions) is under 74\%. Thus, this attack is more suitable for scenarios where attack precision is preferred.

\begin{table*}[tb]
    \centering
    \small
    \begin{tabular}{cS[table-format=3.0]S[table-format=2.2]S[table-format=2.2,separate-uncertainty,table-figures-uncertainty=1]S[table-format=2.1,separate-uncertainty,table-figures-uncertainty=1]}
        & {$\epsilon$} & {$\alpha$} & {$\phi$} & {Max $PPV_\cA$} \\ \hline
        \multirow{3}{*}{\bfattack{Yeom}} & 1 & 0.03 & \num{9.6 \pm 1.1 e-2} & 71.4 \pm 25.7 \\
        & 10 & 0.90 & \num{3.7 \pm 1.5 e-5} & 59.4 \pm 2.4 \\
        & 100 & 24.00 & \num{1.1 \pm 0.2 e-2} & 60.5 \pm 0.2 \\ \hline
        \multirow{3}{*}{\bfattack{Shokri}} & 1 & 1.00 & \num{0.80 \pm 0.2} & 50.2 \pm 9.5 \\
        & 10 & 30.00 & \num{0.64 \pm 0.01} & 60.1 \pm 1.6 \\
        & 100 & 5.00 & \num{0.74 \pm 0.01} & 62.4 \pm 1.1 \\ \hline
        \multirow{3}{*}{\bfattack{Merlin}} & 1 & 0.12 & \num{0.87 \pm 0.00} & 67.2 \pm 12.8 \\
        & 10 & 0.02 & \num{0.88 \pm 0.01} & 79.7 \pm 17.9 \\
        & 100 & 0.03 & \num{0.88 \pm 0.01} & 80.3 \pm 24.8 \\ \hline
        \multirow{3}{*}{\bfattack{Morgan}} & 1 & {-} & \rm \num{2.1}, \num{4.3}, \num{0.87} & 71.4 \pm 17.2 \\
        & 10 & {-} & \rm \num{4.5e-5}, \num{1.1e-2}, 
        \num{0.88} & 95.0 \pm 10.0 \\
        & 100 & {-} & \rm \num{1.8e-4}, \num{6.6e-3}, 
        \num{0.87} & 93.3 \pm 13.3 \\
    \end{tabular}
    \caption{Attacks against private models (\bfdataset{Purchase-100X}, balanced prior). \rm $\alpha$ and PPV values are percentages.}
    \label{tab:purchase_100_gamma_1_mi_dp}
\end{table*}

\shortsection{Results on Other Data Sets}
Table~\ref{tab:leakage_comparison_balanced_prior} compares the membership inference attack performance against non-private models across the other data sets. The \bfattack{Merlin} attack consistently achieves higher PPV than \bfattack{Yeom} and \bfattack{Shokri} across all the data sets. \bfattack{Merlin} is more successful on \bfdataset{Texas-100} compared to \bfdataset{Purchase-100X}, as the gap between \bfattack{Merlin} ratio distribution of member records and non-member records is high for \bfdataset{Texas-100} (see Appendix~\ref{appendix:non_private_results} for more analysis). More surprisingly, while \bfattack{Yeom} is less successful on \bfdataset{RCV1X}, we find that \bfattack{Merlin} still manages to achieve a very high PPV that even exceeds the PPV of \bfattack{Shokri} (see Table~\ref{tab:leakage_comparison_balanced_prior}). Thus, \bfattack{Merlin} poses a credible privacy threat even in scenarios where \bfattack{Yeom} fails. However, \bfattack{Merlin} does not perform significantly better than \bfattack{Yeom} and \bfattack{Shokri} on \bfdataset{CIFAR-100} since the per-instance loss of members is high on this data set and hence the members are not at local minimum. Appendix~\ref{appendix:non_private_results} provides more details on all these results.

\begin{figure*}[tb]
    \centering
\begin{subfigure}[b]{0.33\textwidth}
    \centering
    \includegraphics[width=\linewidth]{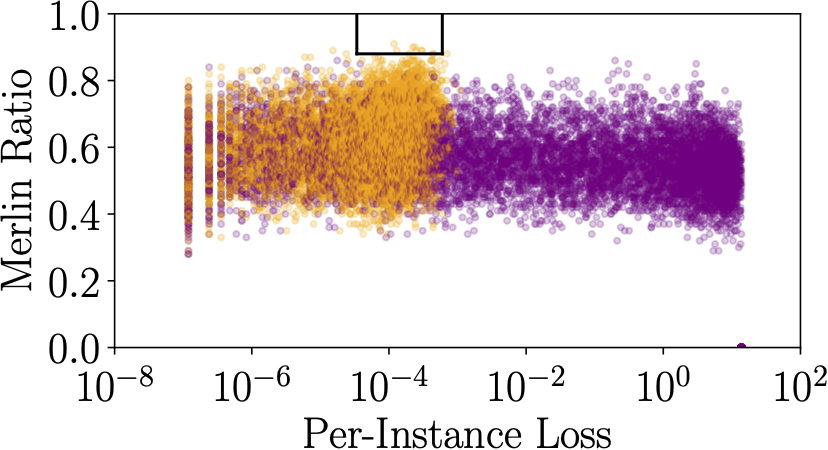}
    \caption{Non-private model at $\gamma = 1$}\label{fig:morgan_purchase}
    \end{subfigure}
    \begin{subfigure}[b]{0.32\textwidth}
    \centering
    \includegraphics[width=\linewidth]{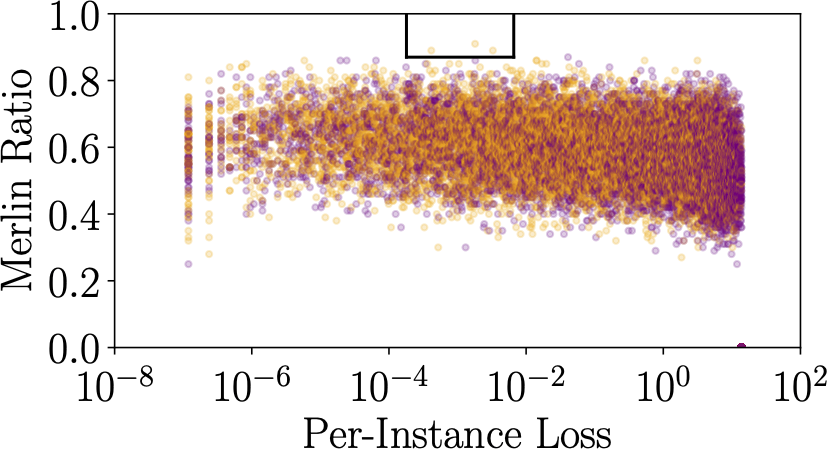}
    \caption{Private model at $\gamma = 1, \epsilon = 100$}\label{fig:morgan_purchase_eps_100}
    \end{subfigure}
    \begin{subfigure}[b]{0.33\textwidth}
    \centering
    \includegraphics[width=\linewidth]{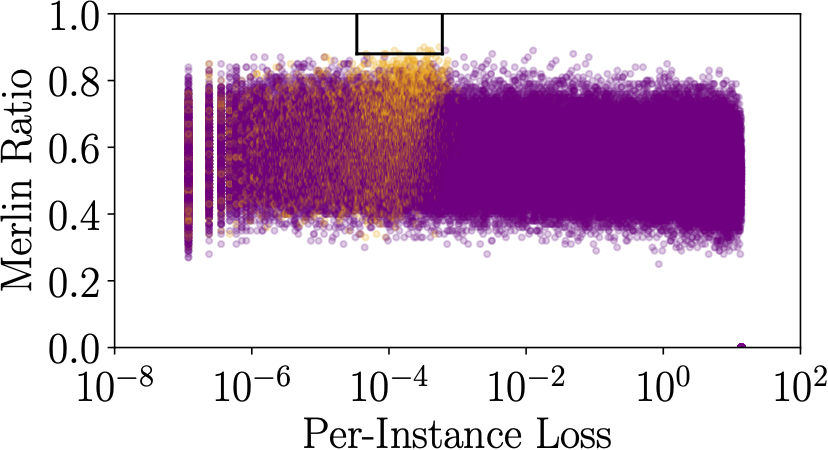}
    \caption{Non-private model at $\gamma = 10$}\label{fig:morgan_purchase_gamma_10}
    \end{subfigure}
    \caption{Comparing loss and \bfattack{Merlin} ratio side-by-side on \bfdataset{Purchase-100X}. \rm Members and non-members are denoted by orange and purple points respectively. The boxes show the thresholds found by the threshold selection process (without access to the training data, but with the same data distribution), and illustrate the regions where members are identified by \bfattack{Morgan} with high confidence.}
    \label{fig:morgan_purchase_plots}
\end{figure*}

\shortsection{Using Class-Based Thresholds}
We also tried class-based thresholds for \bfattack{Merlin}, like we did for \bfattack{Yeom} and \bfattack{Shokri}. However, we found that this approach does not benefit \bfattack{Merlin} as the individual classes do not have enough records to provide meaningful thresholds. Using class-based thresholds for \bfattack{Merlin} increases the advantage metric from 0.1\% to 2.8\%, but decreases the maximum achievable PPV from around 93.4\% to 83.1\%. We observed similar behavior across different thresholds.

\begin{figure*}[ptb]
    \centering
    \begin{subfigure}[b]{0.35\textwidth}
    \centering
    \includegraphics[width=.95\linewidth]{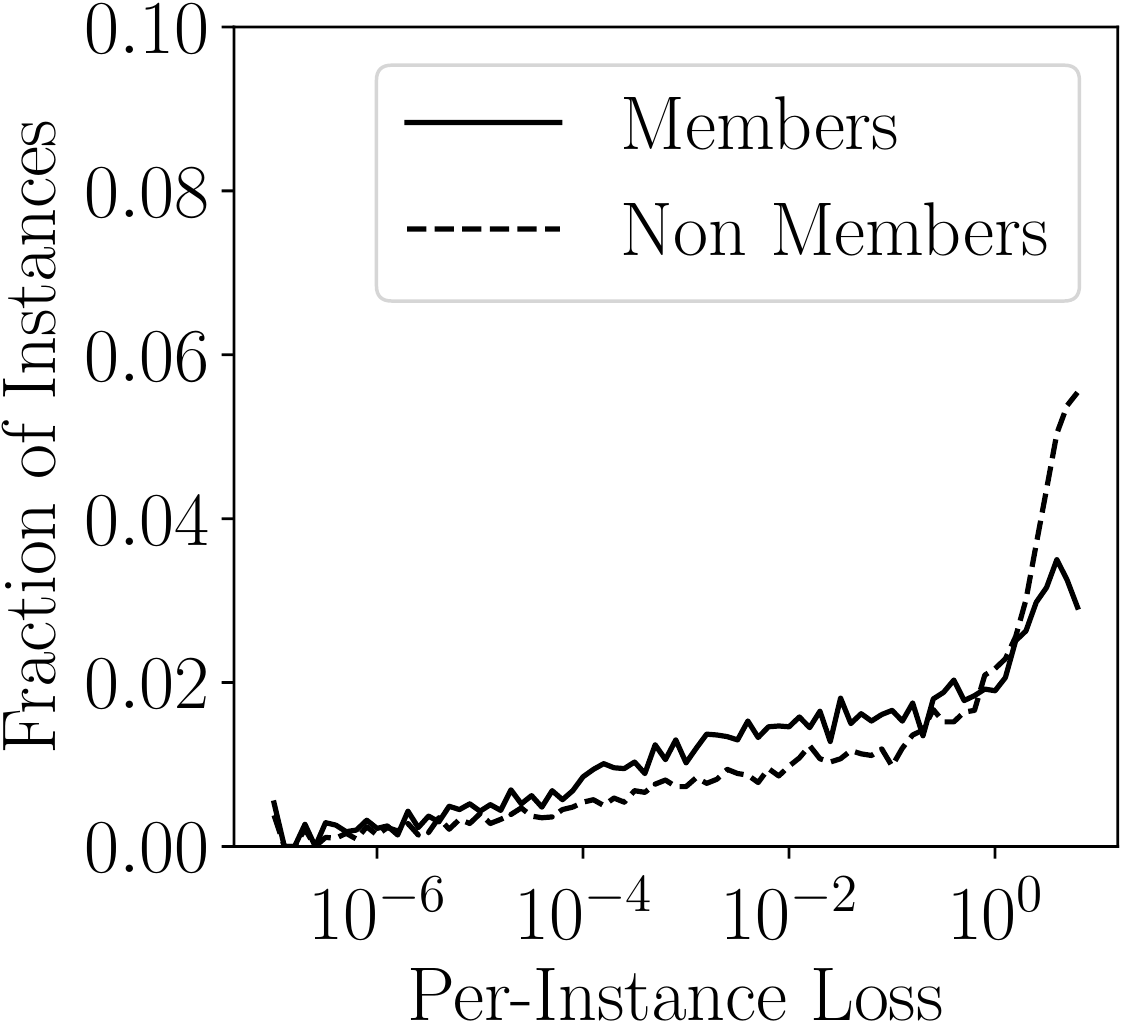}
    \caption{Loss distribution.}
    \label{fig:purchase_100_mi_1_analysis_dp_a}
    \end{subfigure} \qquad \qquad
    \begin{subfigure}[b]{0.35\textwidth}
    \centering
    \includegraphics[width=.95\linewidth]{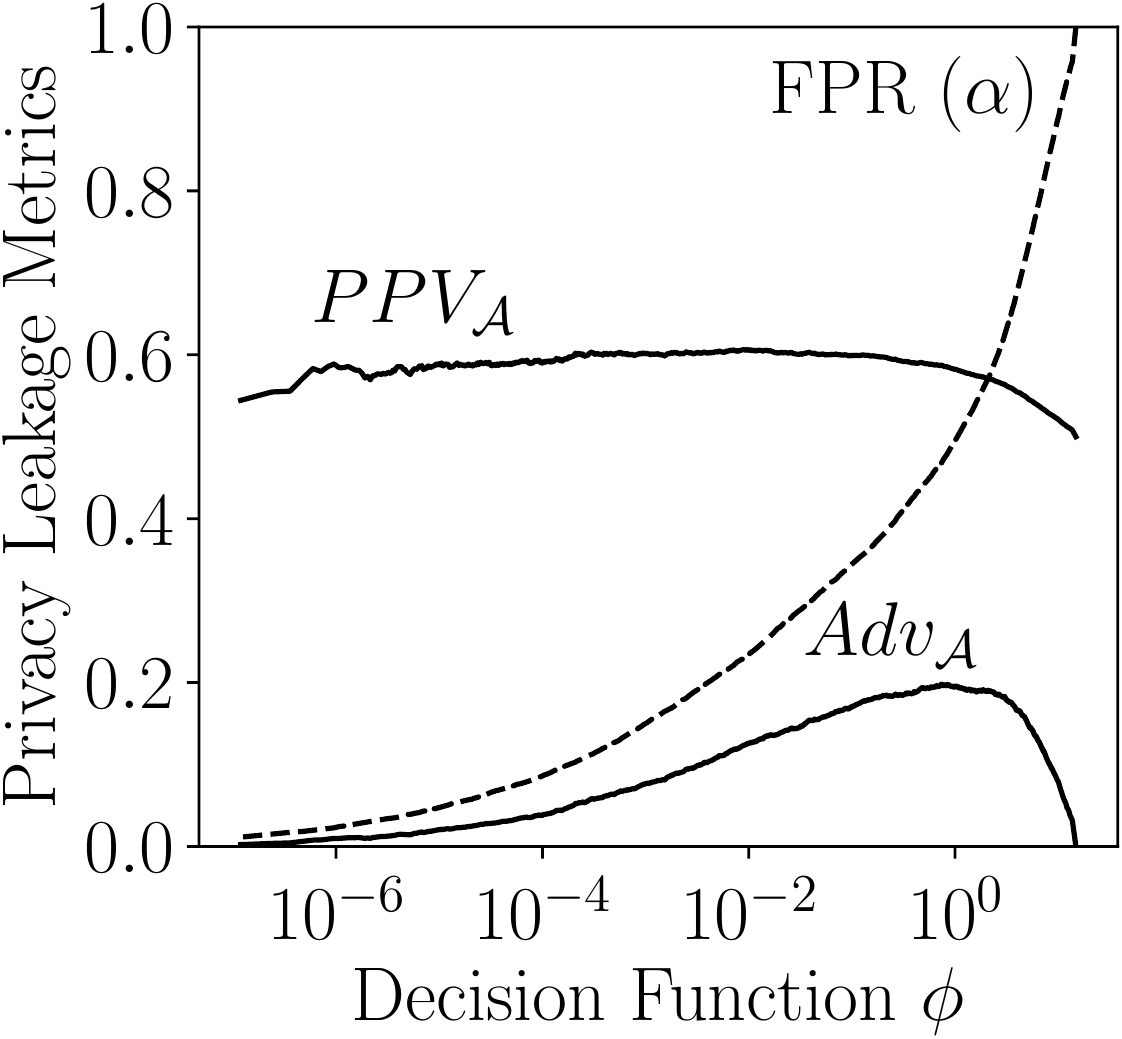}
    \caption{\bfattack{Yeom} performance.}
    \label{fig:purchase_100_mi_1_analysis_dp_b}
    \end{subfigure}
    \begin{subfigure}[b]{0.35\textwidth}
    \centering
    \includegraphics[width=.95\linewidth]{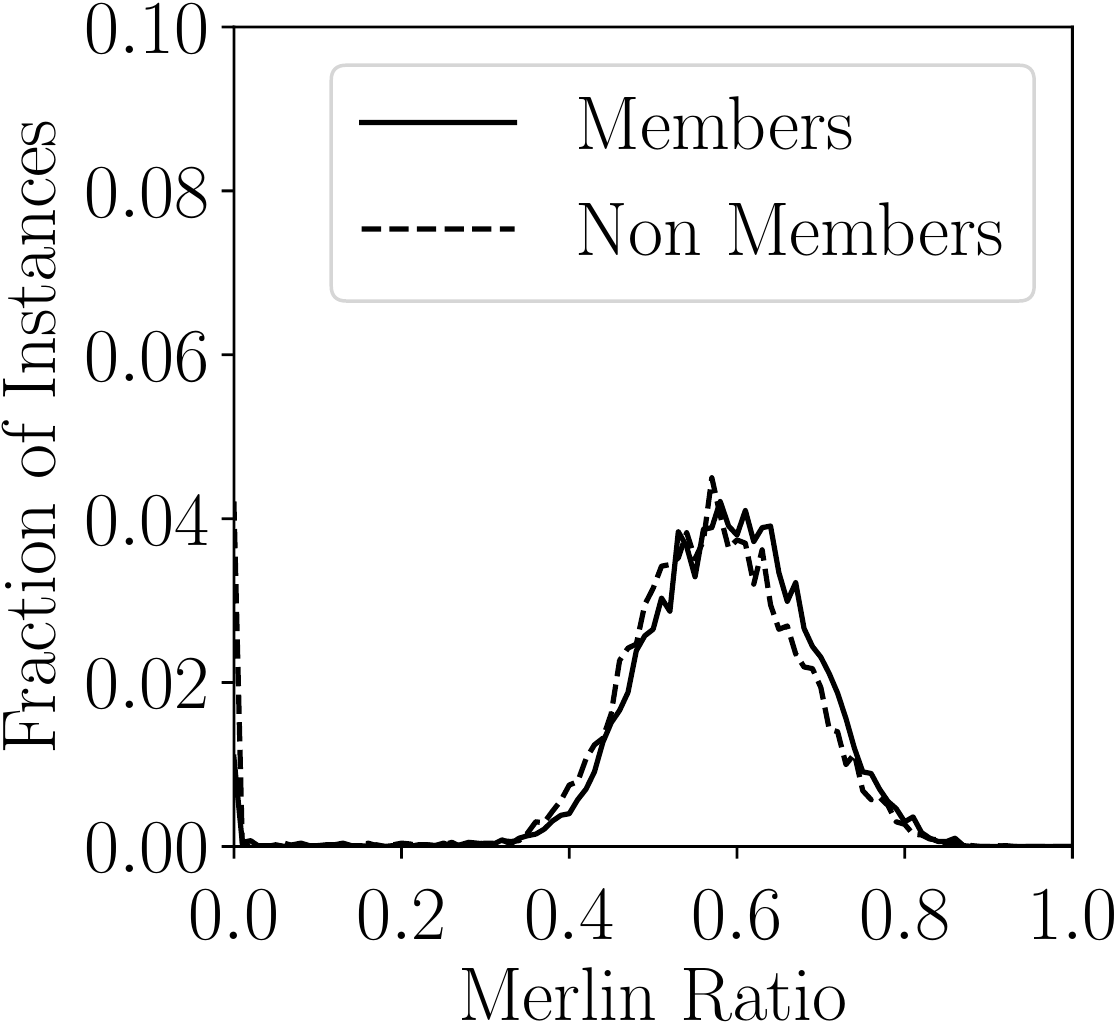}
    \caption{Merlin ratio distribution.}
    \label{fig:purchase_100_mi_2_analysis_dp_a}
    \end{subfigure} \qquad \qquad
    \begin{subfigure}[b]{0.35\textwidth}
    \centering
    \includegraphics[width=.95\linewidth]{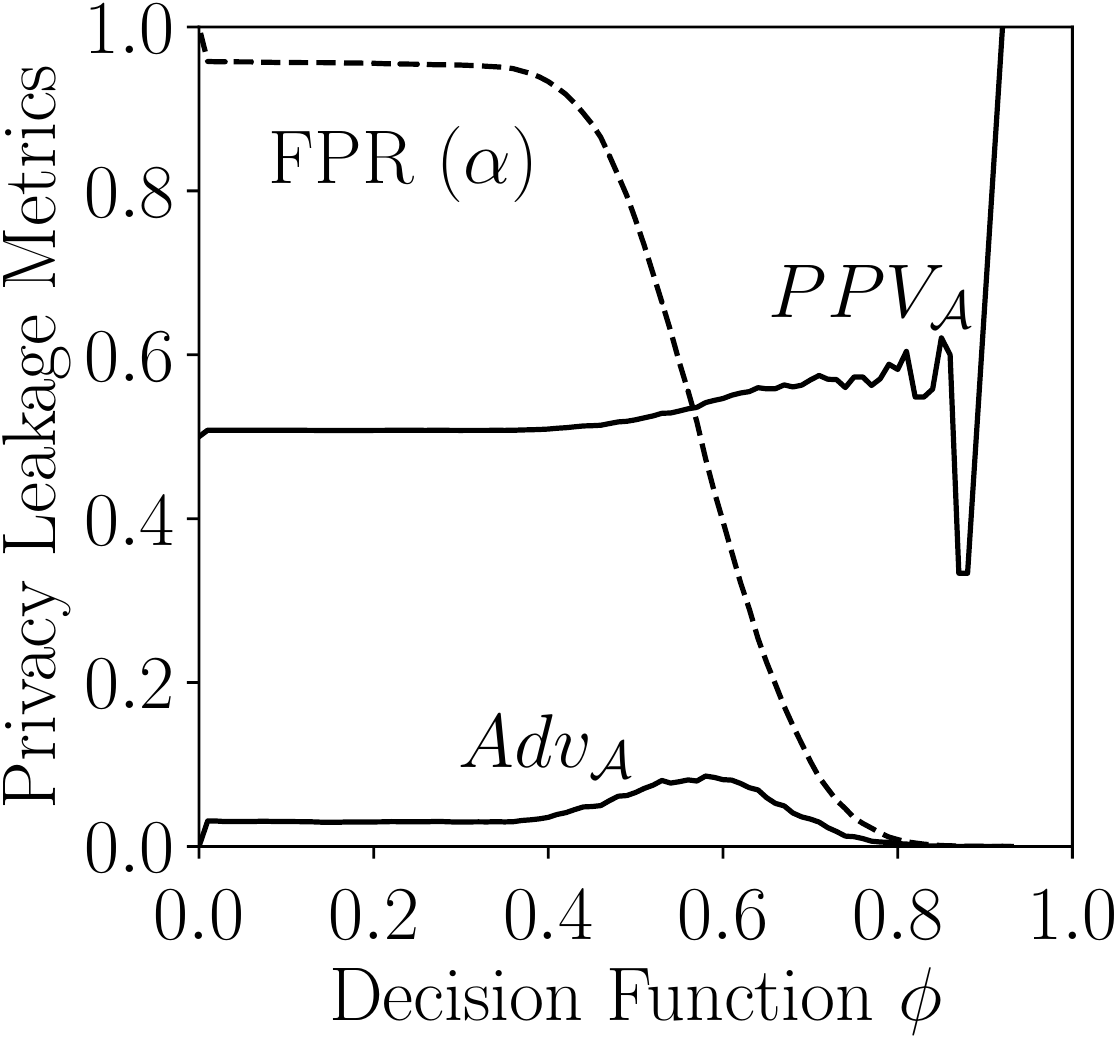}
    \caption{\bfattack{Merlin} performance.}
    \label{fig:purchase_100_mi_2_analysis_dp_b}
    \end{subfigure}
    \caption{Analysis of \bfattack{Yeom} and \bfattack{Merlin} on private model trained with $\epsilon = 100$ at $\gamma = 1$ (\bfdataset{Purchase-100X}).}
    \label{fig:purchase_100_analysis_dp}
\end{figure*}

\subsection{Morgan Attack}\label{sec:evalmorgan}

The \bfattack{Morgan} attack (Section~\ref{sec:morgan}) combines both \bfattack{Yeom} and \bfattack{Merlin} attacks to identify the most vulnerable members. Recall that \bfattack{Morgan} classifies a record as member if its per-instance loss is between $\phi_L$ and $\phi_U$ and if the \bfattack{Merlin} ratio is at least $\phi_M$.

\shortsection{Results on \bfdataset{Purchase-100X}}
Figure~\ref{fig:morgan_purchase} shows the loss and \bfattack{Merlin} ratio for members and non-members for one run of non-private model training in balanced prior. As shown, a fraction of members are clustered between \num{3.4e-5} and \num{6.0e-4} loss and with \bfattack{Merlin} ratio at least 0.88, and in this region there are very few non-members. Thus, \bfattack{Morgan} can target these vulnerable members whereas \bfattack{Yeom} and \bfattack{Merlin} fail to do, being restricted to a single threshold. As reported in Table~\ref{tab:purchase_100_loss_selection_results_uniform_prior}, \bfattack{Morgan} succeeds at achieving around 98\% PPV while \bfattack{Yeom} and \bfattack{Merlin} only achieve 73\% and 93\% PPV respectively at maximum on \bfdataset{Purchase-100X}.

\shortsection{Results on Other Data Sets}
\bfattack{Morgan} exposes members with 100\% PPV in our experiments against non-private models for the \bfdataset{RCV1X} and \bfdataset{CIFAR-100} data sets, and exceeds 95\% PPV for \bfdataset{Texas-100} (see Table~\ref{tab:leakage_comparison_balanced_prior}, and Appendix~\ref{appendix:non_private_results}). \bfattack{Morgan} benefits by using multiple thresholds and is able to identify the most vulnerable members with close to 100\% confidence. Further discussion on these results can be found in Appendix~\ref{appendix:non_private_results}.

\subsection{Impact of Privacy Noise}\label{sec:threshprivacy}

So far, all results we have reported are for inference attacks on models trained without any privacy protections. We also evaluated membership inference attacks against the private models and found the models to be vulnerable to \bfattack{Merlin} and \bfattack{Morgan} at privacy loss budgets high enough to train useful models. Like the experiments with non-private models, here also we repeat the experiments five times and report average results and standard error. In each run, we train a private model from scratch and perform the attack procedure on it. 

Table~\ref{tab:purchase_100_gamma_1_mi_dp} compares the maximum PPV achieved by \bfattack{Yeom}, \bfattack{Shokri}, \bfattack{Merlin} and \bfattack{Morgan} against private models trained on \bfdataset{Purchase-100X} with varying privacy loss budgets. As expected, the privacy leakage increases with the privacy loss budget. \bfattack{Merlin} and \bfattack{Morgan} both achieve high PPV for privacy loss budgets, $\epsilon \ge 10$ (large enough to offer no meaningful privacy guarantee, but this is still smaller than needed to train useful models). \bfattack{Morgan} has higher PPV on average and less deviation than \bfattack{Merlin}.

\shortsection{\bfattack{Yeom} and \bfattack{Shokri} Attacks}
To understand how the privacy noise influences \bfattack{Yeom} attack success, we plot the loss distribution of member and non-member records for a private model trained with $\epsilon = 100$ in Figure~\ref{fig:purchase_100_mi_1_analysis_dp_a}. The figure shows that the noise reduces the gap between the two distributions when compared to Figure~\ref{fig:purchase_100_mi_1_analysis_a} with no privacy. Hence differential privacy limits the success of \bfattack{Yeom} by spreading out the loss values for both member and non-member distributions. This has the counter-productive impact of reducing the number of non-member records with zero loss from $959.2 \pm 23.5$ (in non-private case) to $98.0 \pm 16.0$. This reduces the minimum achievable false positive rate to 1\%, and hence allows the attacker to set $\alpha$ thresholds smaller than 10\% against private models which wasn't possible in the non-private case. However, the PPV is still less than 60\% for these thresholds.

Figure~\ref{fig:purchase_100_mi_1_analysis_dp_b} shows the attack performance at different thresholds. Due to the reduced gap between the member and non-member loss distributions, the PPV is close to 60\% across all loss thresholds even if the maximum membership advantage is considerable (close to 20\% for $\epsilon = 100$). Thus even with minimal privacy noise, the privacy leakage risk to membership inference attacks is significantly mitigated. For $\epsilon = 1$, the minimum false positive rate goes to 0.01\%, allowing \bfattack{Yeom} to achieve high PPV but with high deviation. The average PPV is close to 50\%. We observe similar trends for other data sets and hence defer these results to Appendix~\ref{appendix:private_results}. Similar to \bfattack{Yeom}, \bfattack{Shokri} attack also achieves only around 60\% PPV even for $\epsilon = 100$, and hence does not pose significant privacy threat.

\shortsection{\bfattack{Merlin} and \bfattack{Morgan} Attacks}
Figure~\ref{fig:purchase_100_mi_2_analysis_dp_a} shows the distribution of \bfattack{Merlin} ratio for member and non-member records on a private model trained with $\epsilon = 100$.  When compared to the corresponding distribution for a non-private model (see Figure~\ref{fig:purchase_100_mi_2_analysis_a}), the gap between the distributions is greatly reduced. This restricts the privacy leakage across all thresholds, as shown in Figure~\ref{fig:purchase_100_mi_2_analysis_dp_b}. Though the maximum PPV can still be high enough to pose an exposure risk at higher privacy loss budgets. We observe similar trends for \bfattack{Merlin} on the other data sets (see Appendix~\ref{appendix:private_results}). Unlike for non-private models, \bfattack{Morgan} does not achieve close to 100\% PPV as the members and non-members are not easily distinguishable due to the added privacy noise (see Figure~\ref{fig:morgan_purchase_eps_100}), but it does better than \bfattack{Merlin}. Regardless, models trained with high privacy loss budgets can still be vulnerable to \bfattack{Merlin} and \bfattack{Morgan} even if \bfattack{Yeom} and \bfattack{Shokri} do not succeed. This shows the importance of choosing appropriate privacy loss budgets for differential privacy mechanisms.

\subsection{Imbalanced Scenarios}
\label{sec:imbalanced_prior_case}

\begin{table}[tb]
    \centering
    \begin{tabular}{cS[table-format=2.1]S[table-format=2.1,separate-uncertainty,table-figures-uncertainty=1]S[table-format=2.1,separate-uncertainty,table-figures-uncertainty=1]S[table-format=2.1,separate-uncertainty,table-figures-uncertainty=1]S[table-format=3.1,separate-uncertainty,table-figures-uncertainty=1]}
        & {$\gamma$} & \bfattack{Yeom} & \bfattack{Shokri} & \bfattack{Merlin} & \bfattack{Morgan} \\ \hline
        \multirow{5}{*}{\parbox{2.2cm}{\bfdataset{Purchase-100X}}} & 0.1 & 96.5 \pm 0.1 & 94.9 \pm 0.1 & 99.3 \pm 0.7 & 100.0 \pm 0.0 \\
        & 0.5 & 84.5 \pm 0.1 & 81.9 \pm 0.7 & 97.2 \pm 2.8 & 100.0 \pm 0.0 \\
        & 1.0 & 73.0 \pm 0.2 & 73.4 \pm 1.6 & 93.4 \pm 6.3 & 98.0 \pm 4.0 \\
        & 2.0 & 57.6 \pm 0.3 & 62.3 \pm 7.7 & 84.0 \pm 5.6 & 99.1 \pm 1.7 \\
        & 10.0 & 21.2 \pm 0.1 & 33.1 \pm 4.5 & 69.7 \pm 13.8 & 97.5 \pm 5.0 \\ \hline
        \multirow{5}{*}{\parbox{2.2cm}{\bfdataset{Texas-100}}} & 0.1 & 97.0 \pm 0.1 & 97.3 \pm 0.3 & 99.2 \pm 0.7 & 100.0 \pm 0.0 \\
        & 0.5 & 86.4 \pm 1.1 & 92.0 \pm 1.5 & 95.0 \pm 3.6 & 98.4 \pm 0.5 \\
        & 1.0 & 76.1 \pm 1.6 & 89.4 \pm 1.5 & 92.0 \pm 4.5 & 95.7 \pm 4.6 \\ 
        & 2.0 & 62.4 \pm 0.4 & 84.4 \pm 3.7 & 87.7 \pm 11.1 & 97.4 \pm 2.7 \\
        & 10.0 & {-} & {-} & {-} & {-} \\ \hline
        \multirow{5}{*}{\parbox{2.2cm}{\bfdataset{RCV1X}}} & 0.1 & 93.3 \pm 0.5 &95.5 \pm 1.6  & 99.8 \pm 0.3 & 100.0 \pm 0.0 \\
        & 0.5 & 72.5 \pm 0.9 & 92.5 \pm 3.3 & 94.3 \pm 6.5 & 99.5 \pm 1.0 \\
        & 1.0 & 57.9 \pm 1.0 & 91.7 \pm 4.2 & 98.8 \pm 2.4 & 100.0 \pm 0.0 \\
        & 2.0 & 40.5 \pm 1.5 & 89.1 \pm 1.8 & 98.8 \pm 2.4 & 98.8 \pm 2.4 \\
        & 10.0 & 12.2 \pm 0.3 & 67.3 \pm 5.1 & 74.3 \pm 15.9 & 93.0 \pm 9.8 \\ \hline
        \multirow{5}{*}{\parbox{2.2cm}{\bfdataset{CIFAR-100}}} & 0.1 & 96.0 \pm 0.2 & 91.6 \pm 0.0 & 97.9 \pm 1.9 & 100.0 \pm 0.0 \\
        & 0.5 & 84.6 \pm 0.5 & 75.7 \pm 0.6 & 86.4 \pm 1.7 & 100.0 \pm 0.0 \\
        & 1.0 & 72.7 \pm 0.8 & 64.9 \pm 0.7 & 75.0 \pm 2.6 & 100.0 \pm 0.0 \\
        & 2.0 & 56.7 \pm 0.6 & 55.3 \pm 2.2 & 74.0 \pm 8.1 & 100.0 \pm 0.0 \\
        & 10.0 & {-} & {-} & {-} & {-} \\
    \end{tabular}
    \caption{Effect of varying $\gamma$ on maximum PPV achieved by attacks against non-private models. {\rm All values are in percentage.}}
    \label{tab:leakage_comparison_imbalanced_prior}
\end{table}

\begin{figure*}[ptb]
    \centering
    \begin{subfigure}[b]{0.35\textwidth}
    \centering
    \includegraphics[width=.95\linewidth]{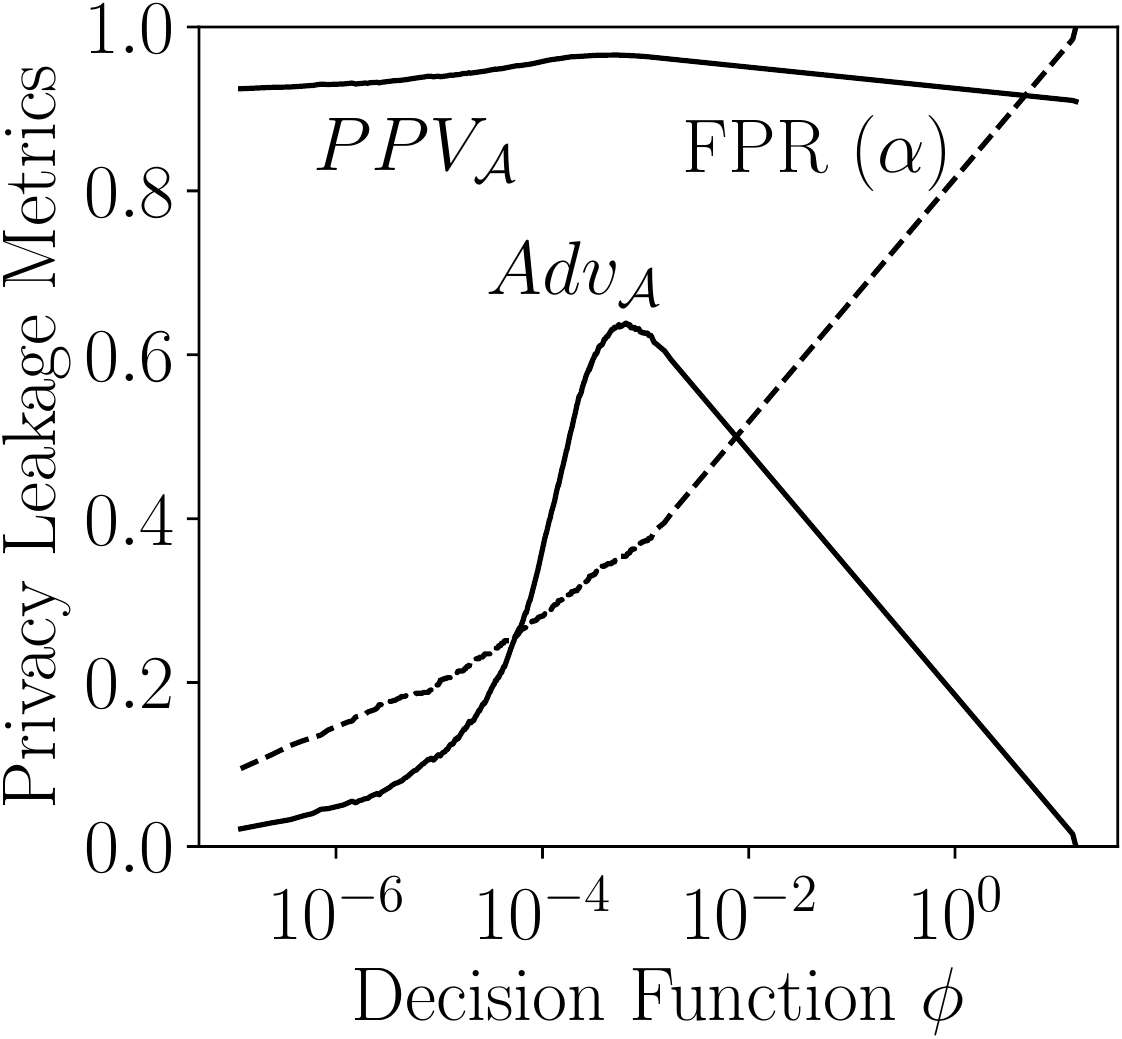}
    \caption{\bfattack{Yeom} performance at $\gamma = 0.1$.}
    \label{fig:purchase_100_gamma_0_1_mi_1_analysis}
    \end{subfigure} \qquad \qquad
    \begin{subfigure}[b]{0.35\textwidth}
    \centering
    \includegraphics[width=.95\linewidth]{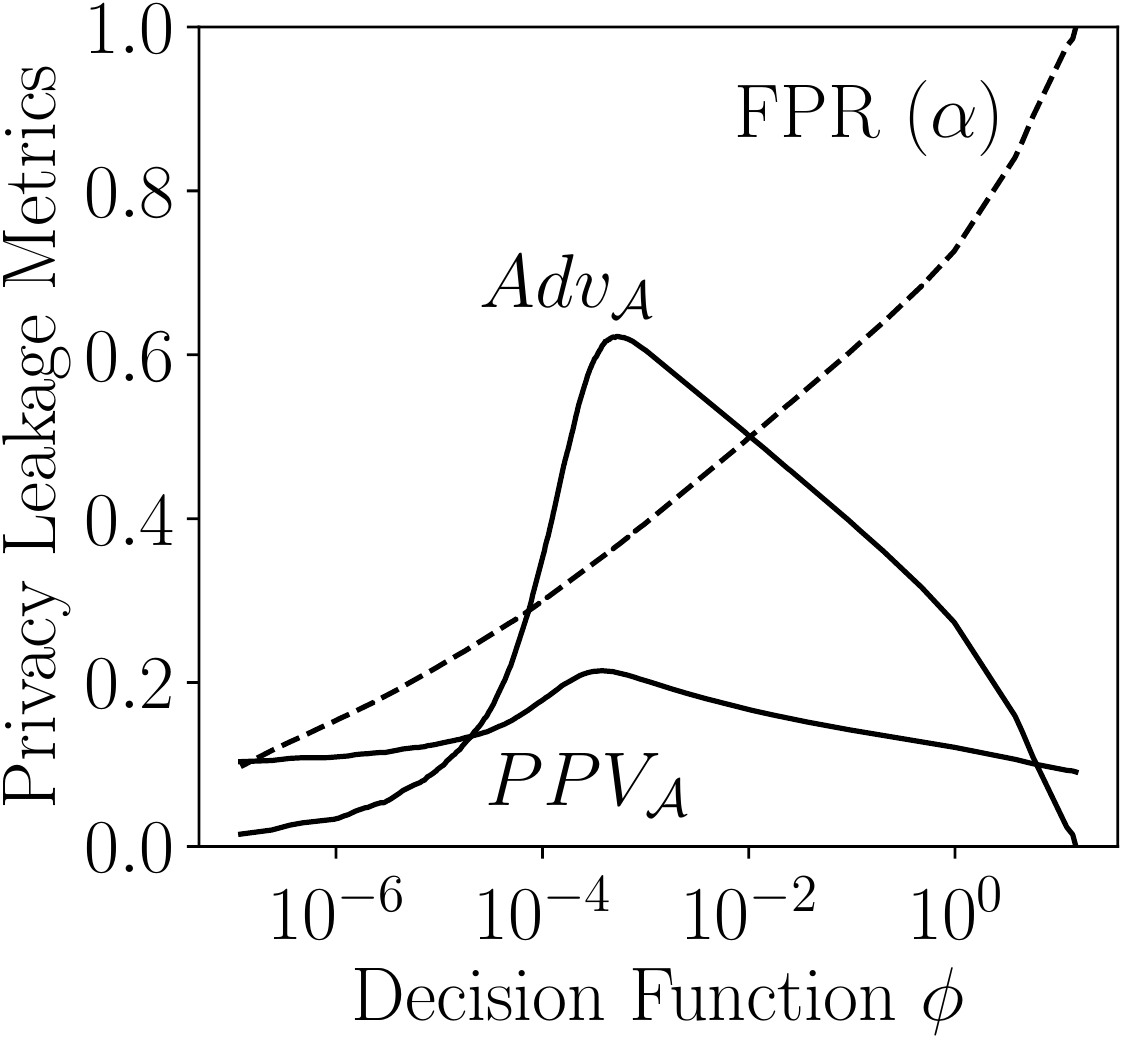}
    \label{fig:purchase_100_gamma_10_mi_1_analysis}
    \caption{\bfattack{Yeom} performance at $\gamma = 10$.}
    \end{subfigure}
    \begin{subfigure}[b]{0.35\textwidth}
    \centering
    \includegraphics[width=.95\linewidth]{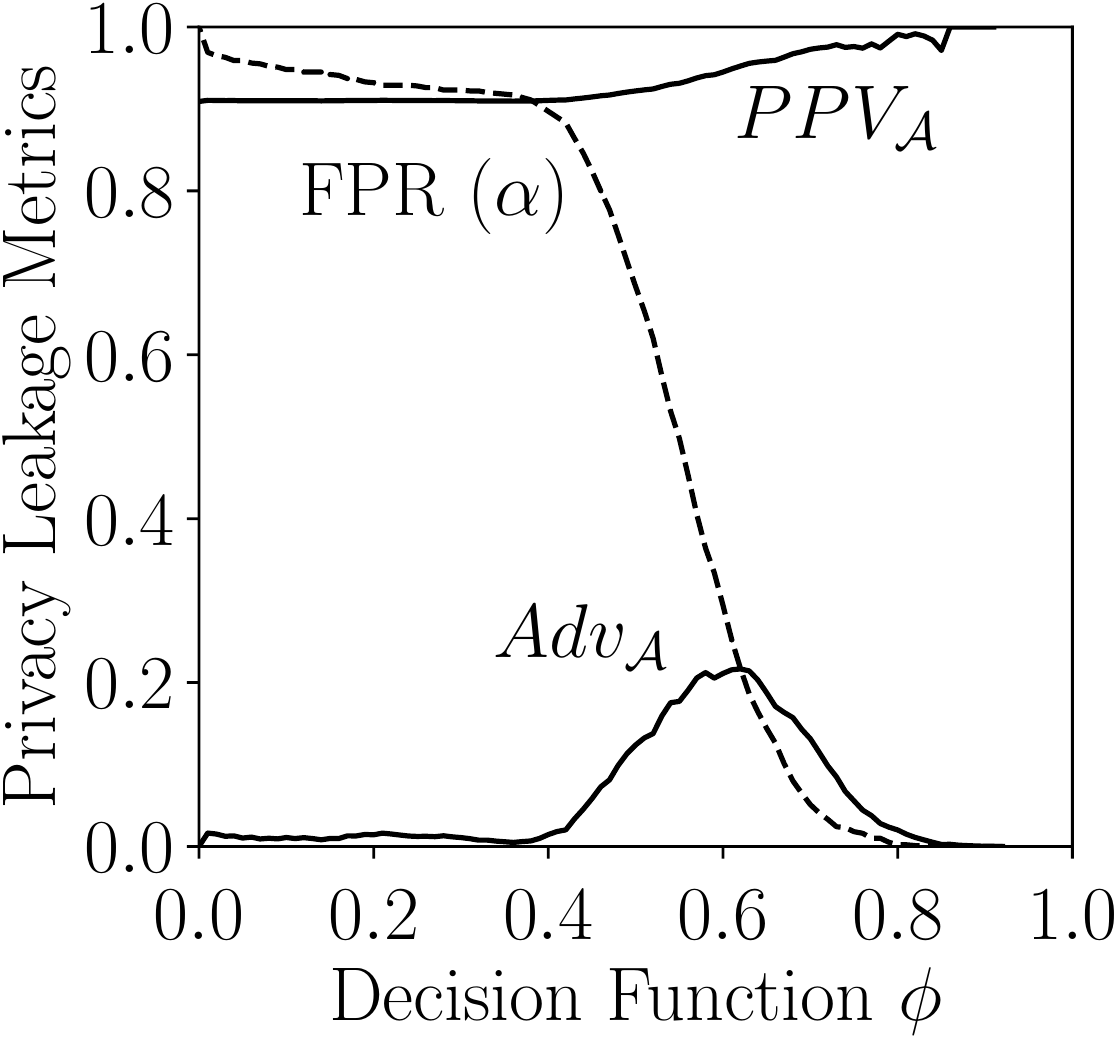}
    \caption{\bfattack{Merlin} performance at $\gamma = 0.1$.}
    \label{fig:purchase_100_gamma_0_1_mi_2_analysis}
    \end{subfigure} \qquad \qquad
    \begin{subfigure}[b]{0.35\textwidth}
    \centering
    \includegraphics[width=.95\linewidth]{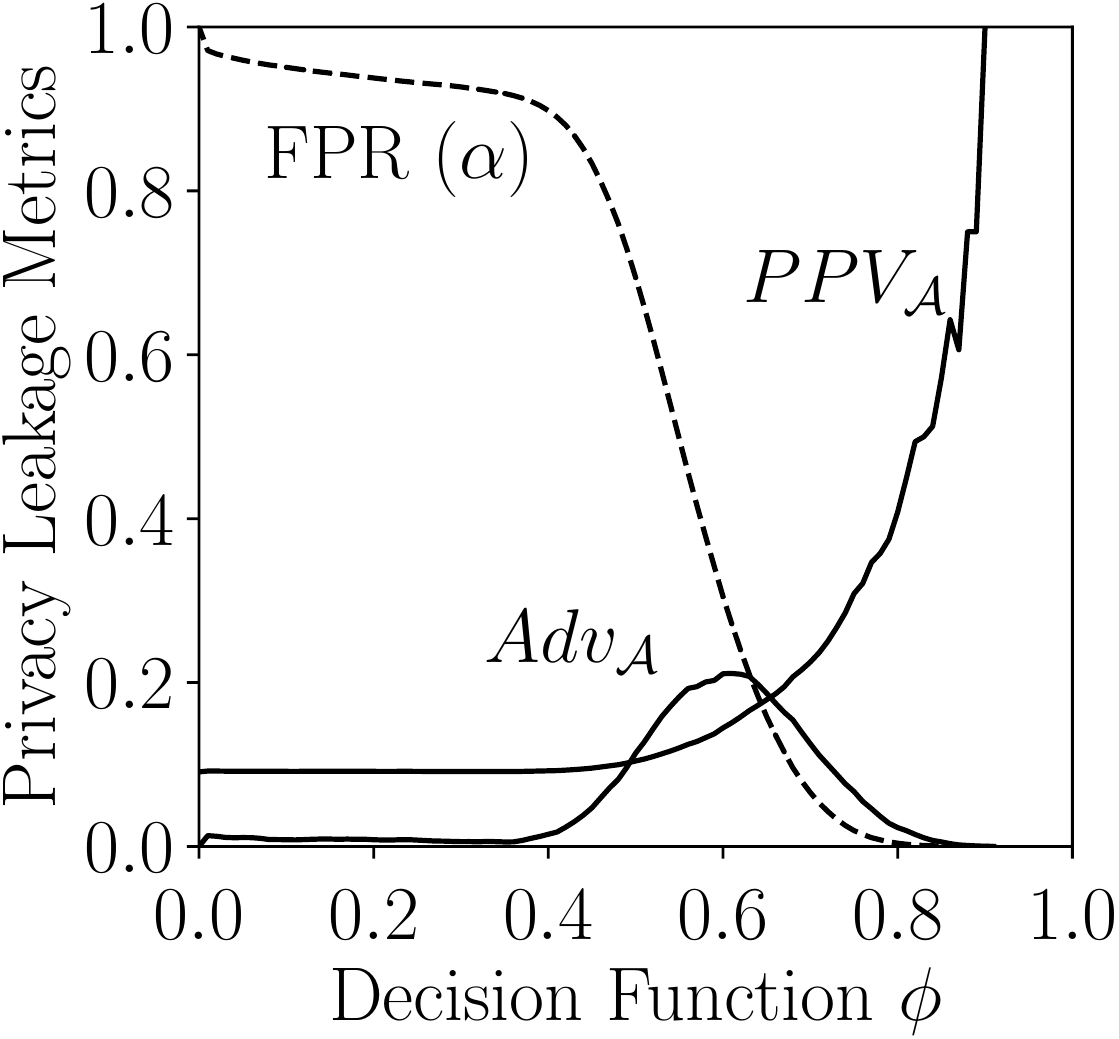}
    \caption{\bfattack{Merlin} performance at $\gamma = 10$.}
    \label{fig:purchase_100_gamma_10_mi_2_analysis}
    \end{subfigure}
    \caption{Attack performance on \bfdataset{Purchase-100X} for imbalanced prior setting. {\rm $PPV_\cA$ varies with $\gamma$, while $Adv_\cA$ remains almost same.}}
    \label{fig:purhcase_100_attack_performance_imbalanced_prior}
\end{figure*}

As discussed in Section~\ref{sec:privacy_metrics}, the membership advantage metric does not consider the prior distribution probability and hence does not capture the true privacy risk for imbalanced prior settings. In this section, we provide empirical evidence that the PPV metric captures  privacy leakage more naturally in imbalanced prior settings, and hence is a more reliable metric for evaluating the privacy leakage.

In imbalanced prior settings, the candidate pool from which the attacker samples records for inference testing has $\gamma$ times more non-member records than members. In other words, a randomly selected candidate is $\gamma$ times more likely to be a non-member than a member record. We keep the training set size fixed to 10,000 records as in our previous experiments, so need a test set size that is $\gamma$ times the training set size. For each data set, we set $\gamma$ as high as possible given the available data.  As mentioned in Section~\ref{sec:exp_setup}, we constructed expanded versions of the \bfdataset{Purchase-100} and \bfdataset{RCV1} data sets to enable these experiments. Both the \bfdataset{Purchase-100X} and \bfdataset{RCV1X} data sets have more than 200,000 records, and hence are large enough to allow setting $\gamma = 10$. We did not have source data to expand \bfdataset{Texas-100}, so are left with a data set with only 67,000 records and hence only have results for $\gamma = 2$. The threshold selection procedure (Procedure~\ref{algo:inference_training}) uses holdout training and test sets that are disjoint from the target training and test sets mentioned above, so the data set needs at least $(\gamma + 1) \times 20,000$ records to run the experiments. 

Table~\ref{tab:leakage_comparison_imbalanced_prior} shows the effect of varying $\gamma$ on the maximum PPV of inference attacks against non-private models trained on different data sets. We can see a clear drop in PPV values across all data sets with increasing $\gamma$ values for \bfattack{Yeom}, \bfattack{Shokri} and \bfattack{Merlin}. Although, \bfattack{Merlin} consistently outperforms \bfattack{Yeom} and \bfattack{Shokri} across all settings. At $\gamma = 0.1$, the base rate for PPV is 90\%. While \bfattack{Yeom} and \bfattack{Shokri} achieve between 90\% and 97\% PPV, \bfattack{Merlin} achieves close to 100\% PPV across all data sets. For $\gamma = 2$, the maximum PPV of \bfattack{Yeom} is close to 60\%, whereas \bfattack{Merlin} still achieves high enough PPV to pose some privacy threat. Both the \bfattack{Yeom} and \bfattack{Shokri} attacks, and our \bfattack{Merlin} attack are less successful as the $\gamma$ value increases to 10. However, \bfattack{Morgan} consistently achieves close to 100\% PPV across all settings, thereby showing the vulnerability of non-private models even in the skewed prior settings. This is graphically shown in Figure~\ref{fig:morgan_purchase_gamma_10} where \bfattack{Morgan} is able to identify the most vulnerable members on \bfdataset{Purchase-100X} even at $\gamma = 10$. The advantage values remain more or less the same across different $\gamma$ values for both \bfattack{Yeom} and \bfattack{Merlin} on \bfdataset{Purchase-100X}, as shown in Figure~\ref{fig:purhcase_100_attack_performance_imbalanced_prior}. These results support our claim that PPV is a more reliable metric in skewed prior scenarios. We observe the same trend for the other data sets, and hence do not include their plots.

While \bfattack{Yeom}, \bfattack{Shokri} and \bfattack{Merlin} do not pose an exposure threat in the imbalanced prior settings where $\gamma$ values are higher than 10, \bfattack{Morgan} still exposes some vulnerable members with close to 100\% PPV. Thus, our proposed attacks pose significant threat even in more realistic settings of skewed priors, where the existing attacks fail. We observe that the private models are not vulnerable to any of our inference attacks in the imbalanced prior setting where $\gamma > 1$. At $\gamma = 2$, the best attack achieves maximum PPV close to 48\% across all data sets, whereas at $\gamma = 10$, this further drops to around 17\%. Hence we do not show the membership inference attack results against private models for these settings.

\section{Conclusion}
Understanding the privacy risks posed by machine learning involves considerable challenges, and there remains a large gap between achievable privacy guarantees, and what can be inferred using known attacks in practice. While membership inference has previously been evaluated in balanced prior settings, we consider scenarios with imbalanced priors and show that there are attacks which pose serious privacy threats even in such settings where previous attacks fail. 

We introduce a novel threshold selection procedure that allows adversaries to choose inference thresholds specific to their attack goals, and propose two new membership inference attacks, \bfattack{Merlin} and \bfattack{Morgan}, that outperform previous attacks in the settings that concern us most: being able to identify members, with very high confidence, even from candidate pools where most candidates are not members. From experiments on four data sets under different prior distribution settings, we find that the non-private models are highly vulnerable to such attacks, and the models trained with high privacy loss budgets can still be vulnerable. 

\section*{Availability}

All of our code and data for our experiments is available at  \url{https://github.com/bargavj/EvaluatingDPML}.

\section*{Acknowledgments}

This work was partially supported by grants from the National Science Foundation (\#1717950 and \#1915813).

% \clearpage
\bibliographystyle{plainnat}
\bibliography{ref}
% \clearpage
% \onecolumn
%%! \newpage
\appendix
\section{Hyperparameters}\label{appendix:hyperparameters}
For each data set, the training set is fixed to 10,000 randomly sampled records and the test set size is varied to reflect different prior probability distributions. We sample $\gamma$ times the number of training set records to create the test set. For both \bfdataset{Purchase-100X} and \bfdataset{RCV1X}, we use $\gamma = \{0.1, 0.5, 1, 2, 10\}$; for \bfdataset{Texas-100} and \bfdataset{CIFAR-100}, we use $\gamma = \{0.1, 0.5, 1, 2\}$ since they are too small for experiments with larger $\gamma$. For training the models, we minimize the cross-entropy loss using the Adam optimizer and perform grid search to find the best values for hyperparameters such as batch size, learning rate, $\ell_2$ penalty, clipping threshold and number of iterations. We find a batch size of 200, clipping threshold of 4, and $\ell_2$ penalty of $10^{-8}$ work best across all the data sets, except for \bfdataset{CIFAR-100} where we use $\ell_2$ penalty of $10^{-4}$, and \bfdataset{RCV1X} where we use clipping threshold of 1. We use a learning rate of 0.005 for \bfdataset{Purchase-100X} and \bfdataset{Texas-100}, 0.003 for \bfdataset{RCV1X}, and 0.001 for \bfdataset{CIFAR-100}. We set the training epochs to 100 for \bfdataset{Purchase-100X} and \bfdataset{CIFAR-100}, 30 for \bfdataset{Texas-100}, and 80 for \bfdataset{RCV1X}. We fix the differential privacy failure parameter $\delta$ as $10^{-5}$ to keep it smaller than the inverse of the training set size, generally considered the maximum acceptable value for $\delta$. For an in-depth analysis of hyperparameter tuning on private learning, see \cite{abadi2016deep}.

\section{Additional Results for Non-Private Models}\label{appendix:non_private_results}

\begin{figure*}[ptb]
    \centering
    \begin{subfigure}[b]{0.35\textwidth}
    \centering
    \includegraphics[width=0.95\linewidth]{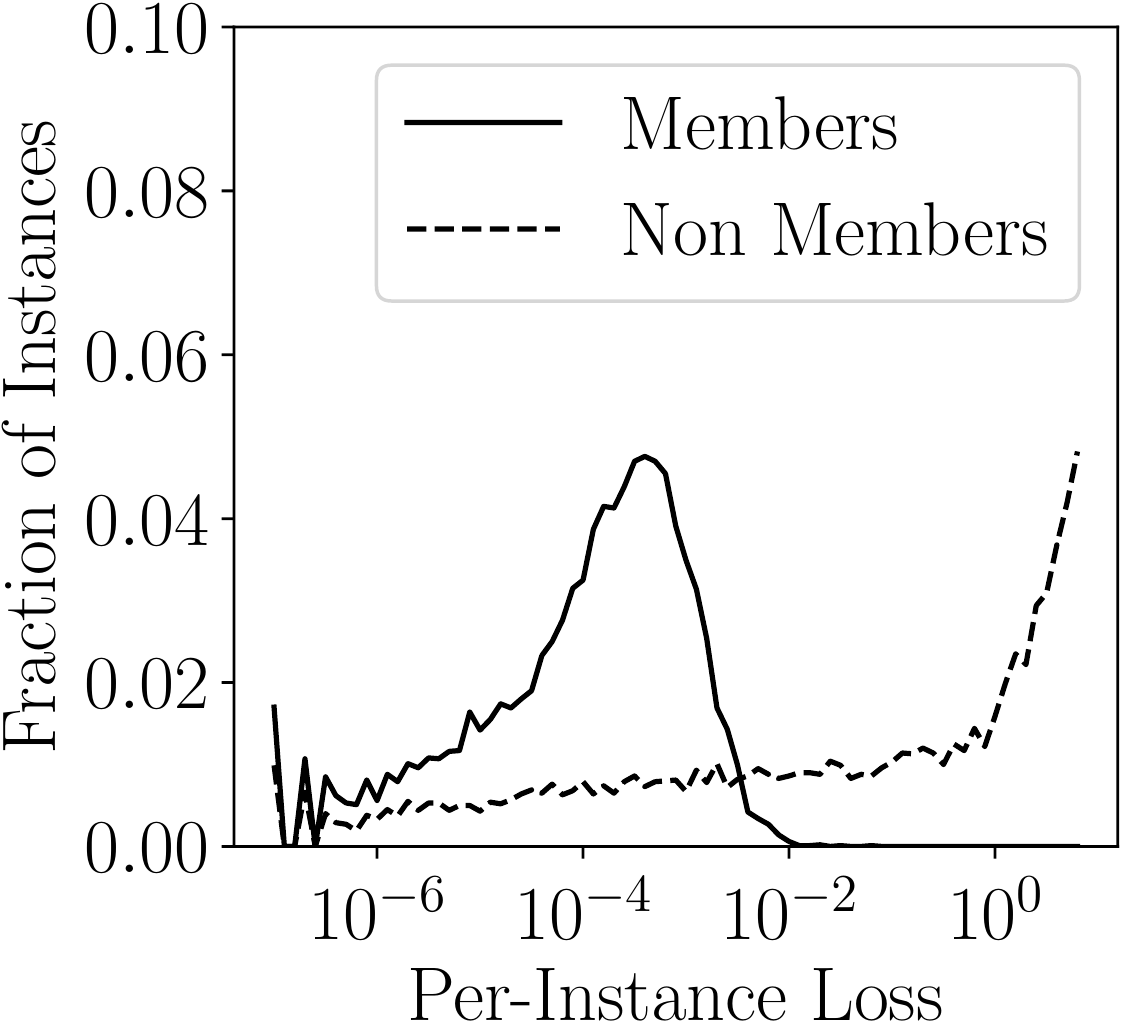}
    \caption{Loss distribution for \bfdataset{Texas-100}}
    \label{fig:texas_100_mi_1_analysis_a}
    \end{subfigure}\qquad \qquad
    \begin{subfigure}[b]{0.35\textwidth}
    \centering
    \includegraphics[width=0.95\linewidth]{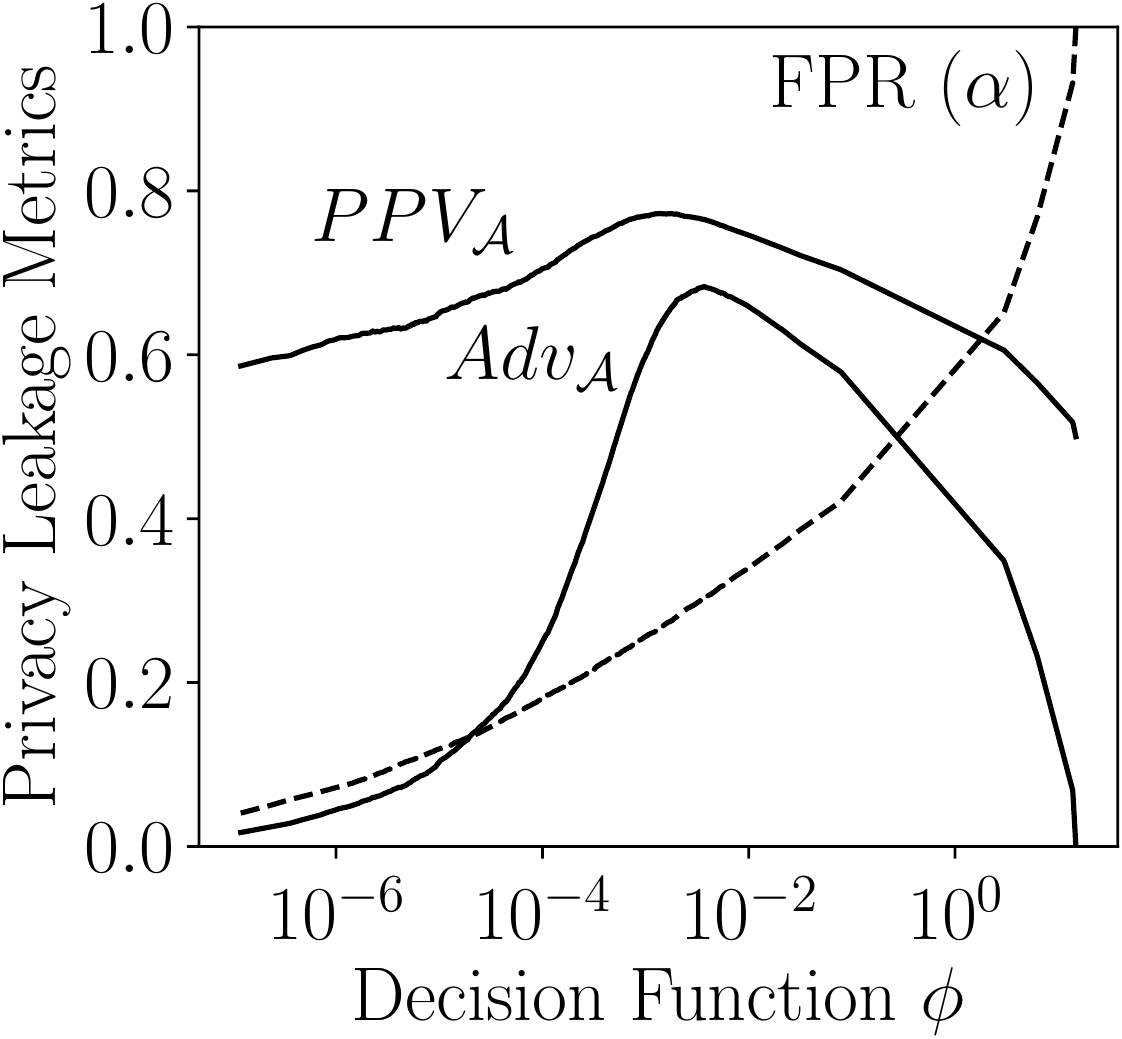}
    \caption{\bfattack{Yeom} on \bfdataset{Texas-100}}
    \label{fig:texas_100_mi_1_analysis_b}
    \end{subfigure}
    \begin{subfigure}[b]{0.35\textwidth}
    \centering
    \includegraphics[width=0.95\linewidth]{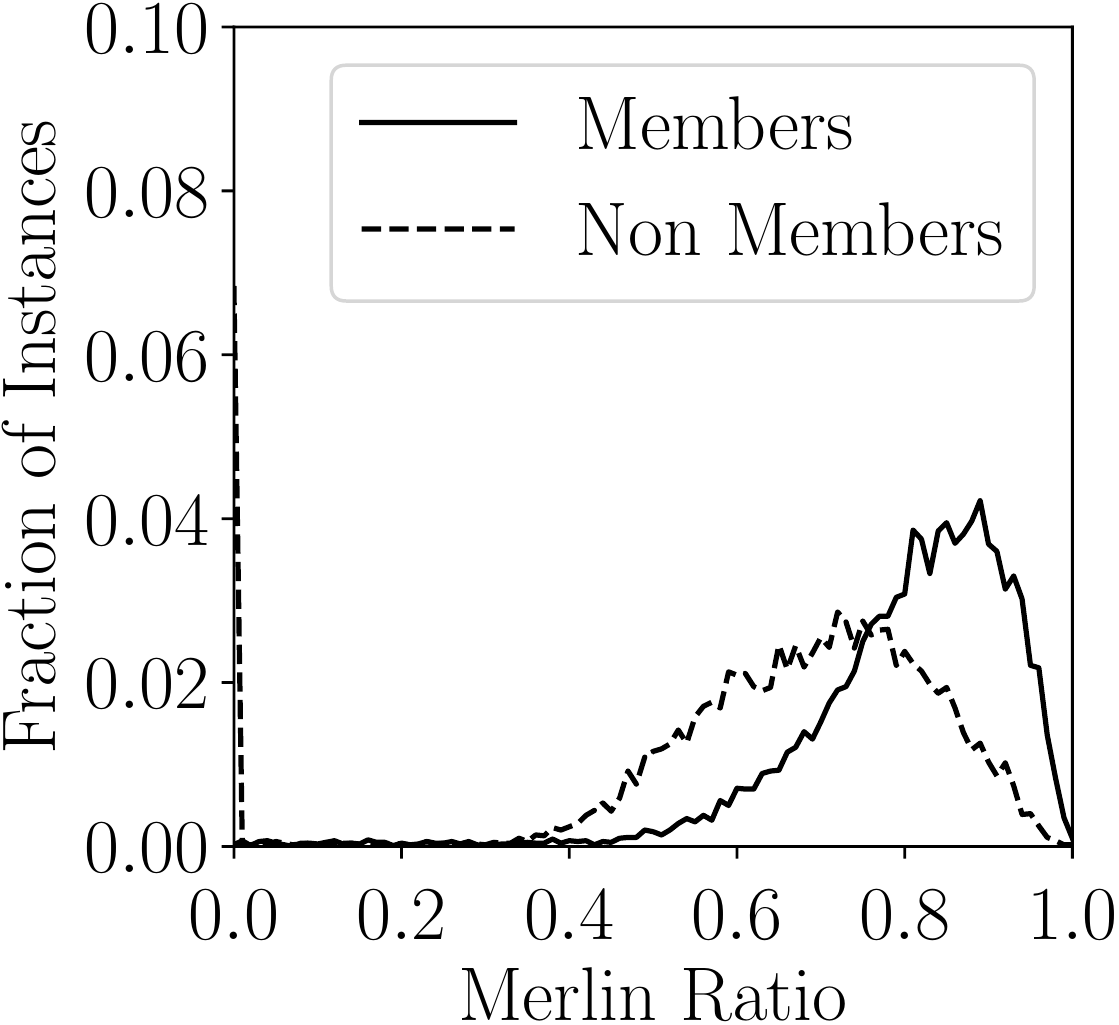}
    \caption{Merlin ratio for \bfdataset{Texas-100}}
    \label{fig:texas_100_mi_2_analysis_a}
    \end{subfigure}\qquad \qquad
    \begin{subfigure}[b]{0.35\textwidth}
    \centering
    \includegraphics[width=0.95\linewidth]{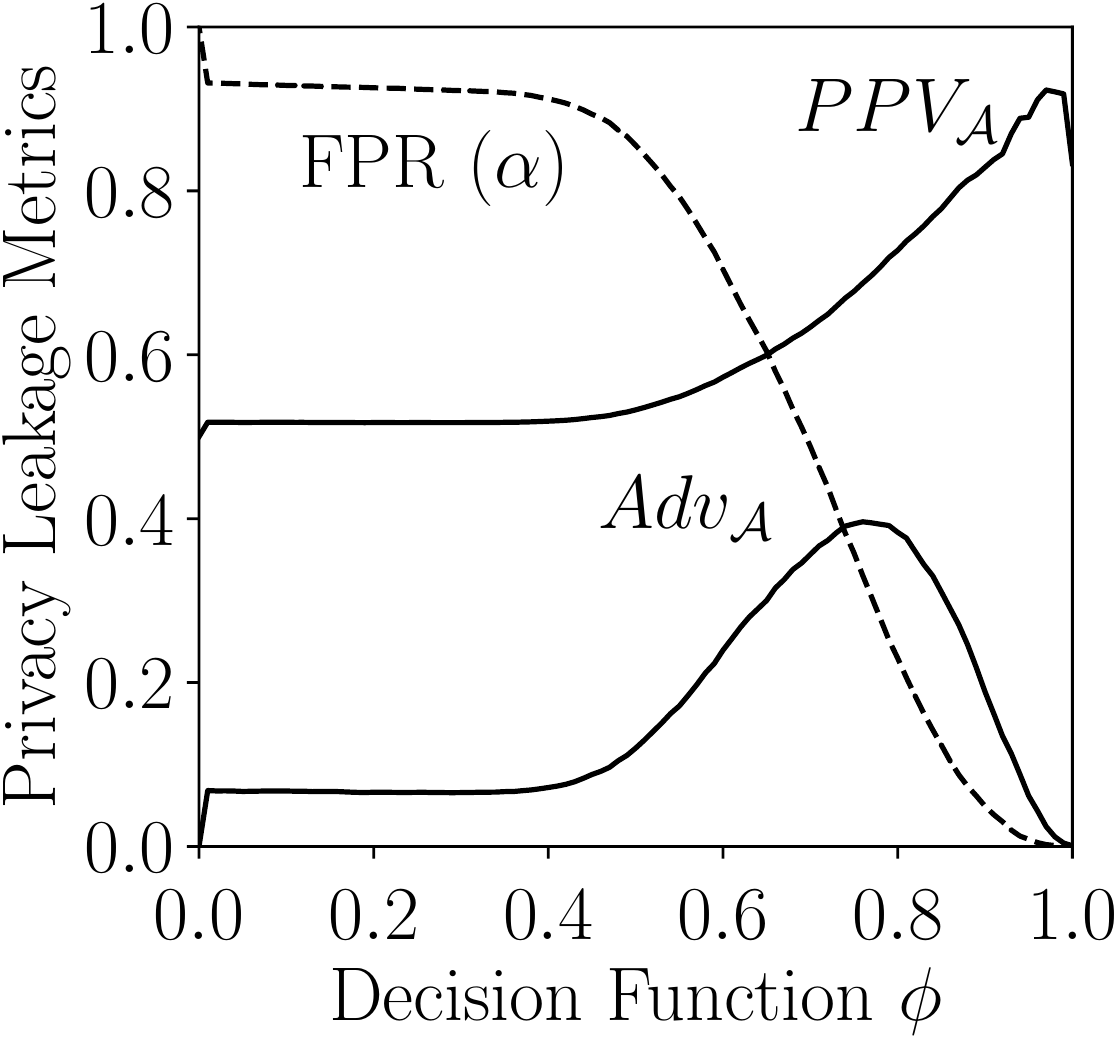}
    \caption{\bfattack{Merlin} on \bfdataset{Texas-100}}
    \label{fig:texas_100_mi_2_analysis_b}
    \end{subfigure}
    \caption{Analysis of \bfattack{Yeom} and \bfattack{Merlin} against non-private models trained on \bfdataset{Texas-100} in the balanced prior setting.}
    \label{fig:texas_100_analysis}
\end{figure*}

\begin{table*}[ptb]
    \centering
    \small
    \begin{tabular}{llS[table-format=2.2]cS[table-format=2.1,separate-uncertainty,table-figures-uncertainty=1]S[table-format=2.1,separate-uncertainty,table-figures-uncertainty=1]}
        & & {$\alpha$} & $\phi$ & {$Adv_\cA$} & {$PPV_\cA$} \\ \hline
        \multirow{5}{*}{\parbox{1.5cm}{\bfattack{Yeom}}} & Fixed FPR & 1.00 & - & {-} & {-} \\
        & Min FPR & 3.00 & \num{0} & 0.4 \pm 0.9 & 12.0 \pm 24.0 \\
        & Fixed $\phi$ & {-} & \num{1.1 \pm 2.0 e-2} & 51.3 \pm 2.6 & 75.0 \pm 1.6 \\
        & Max $PPV_\cA$ & 26.00 & \num{1.8 \pm 0.4 e-3} & 59.2 \pm 11.7 & 76.1 \pm 1.6 \\
        & Max $Adv_\cA$ & 31.00 & \num{6.6 \pm 1.3 e-3} & 62.9 \pm 7.7 & 75.0 \pm 0.6 \\ \hline
        \multirow{5}{*}{\parbox{1.7cm}{\bfattack{Yeom CBT}}} & Min FPR & 0.01 & \rm \num{0}, \num{3.4e-6}, \num{9.2e-2} & 10.2 \pm 2.6 & 92.0 \pm 2.3 \\
        & Max $PPV_\cA$ & 0.01 & \rm \num{0}, \num{3.4e-6}, \num{9.2e-2} & 10.2 \pm 2.6 & 92.0 \pm 2.3 \\
        & Fixed FPR & 1.00 & \rm \num{0}, \num{4.8e-6}, \num{9.2e-2} & 11.2 \pm 2.9 & 91.2 \pm 2.1 \\
        & Fixed $\phi$ & {-} & \rm(\num{0.1}, \num{8.3}, \num{554.8})$\times$\num{e-4} & 49.2 \pm 12.1 & 76.3 \pm 1.4 \\
        & Max $Adv_\cA$ & 52.00 & \rm \num{1.2e-7}, \num{7.3e-3}, \num{4.3} & 61.5 \pm 8.0 & 75.8 \pm 1.2 \\ \hline
        \multirow{5}{*}{\parbox{1.5cm}{\bfattack{Shokri}}} & Min FPR & 0.01 & \num{0.99 \pm 0.00} & 1.0 \pm 0.5 & 72.6 \pm 8.6 \\
        & Max $PPV_\cA$ & 0.70 & \num{0.92 \pm 0.01} & 13.8 \pm 1.1 & 89.4 \pm 1.5 \\
        & Fixed FPR & 1.00 & \num{0.91 \pm 0.01} & 16.0 \pm 1.3 & 88.9 \pm 1.7 \\
        & Max $Adv_\cA$ & 31.00 & \num{0.51 \pm 0.01} & 64.1 \pm 1.2 & 74.7 \pm 0.9 \\ & Fixed $\phi$ & {-} & \num{0.50 \pm 0.00} & 64.0 \pm 1.4 & 74.1 \pm 1.3 \\
        \hline
        \multirow{5}{*}{\parbox{1.6cm}{\bfattack{Shokri CBT}}} & Min FPR & 0.01 & \num{0.6}, \num{0.9}, \num{1.8} & 3.0 \pm 0.4 & 77.5 \pm 4.6 \\
        & Fixed FPR & 1.00 & \num{0.5}, \num{0.9}, \num{1.5} & 9.8 \pm 1.1 & 84.7 \pm 2.5 \\
        & Max $PPV_\cA$ & 1.60 & \num{0.5}, \num{0.8}, \num{1.5} & 13.6 \pm 1.0 & 85.8 \pm 2.0 \\
        & Fixed $\phi$ & {-} & \num{0.5}, \num{0.5}, \num{0.5} & 64.0 \pm 1.4 & 74.1 \pm 1.3 \\
        & Max $Adv_\cA$ & 38.00 & \num{0.0}, \num{0.3}, \num{1.5} & 58.6 \pm 1.3 & 75.5 \pm 1.2 \\ \hline
        \multirow{4}{*}{\parbox{1.7cm}{\bfattack{Merlin}}} & Min FPR & 0.01 & \num{1.00 \pm 0.01} & 0.1 \pm 0.1 & 51.9 \pm 42.4 \\
        & Max $PPV_\cA$ & 0.06 & \num{0.99 \pm 0.00} & 0.3 \pm 0.2 & 92.0 \pm 4.5 \\
        & Fixed FPR & 1.00 & \num{0.95 \pm 0.00} & 4.9 \pm 1.3 & 87.8 \pm 2.7 \\
        & Max $Adv_\cA$ & 36.00 & \num{0.76 \pm 0.00} & 37.8 \pm 1.5 & 68.0 \pm 0.8 \\ \hline
        \bfattack{Morgan} & Max $PPV_\cA$ & {-} & \rm \num{1.2e-4}, \num{5.1e-3}, \num{0.98} & 0.5 \pm 0.2 & 95.7 \pm 4.6 \\
    \end{tabular}
    \caption{Thresholds selected against non-private models trained on \bfdataset{Texas-100} with balanced prior. 
    \rm 
    The results are averaged over five runs such that the target model is trained from the scratch for each run. \bfattack{Yeom CBT} and \bfattack{Shokri CBT} use class-based thresholds, where $\phi$ shows the triplet of minimum, median and maximum thresholds across all classes. All values, except $\phi$, are reported in percentage.}
    \label{tab:texas_100_loss_selection_results_uniform_prior}
\end{table*}

\begin{figure*}[ptb]
    \centering
    \begin{subfigure}[b]{0.35\textwidth}
    \centering
    \includegraphics[width=0.95\linewidth]{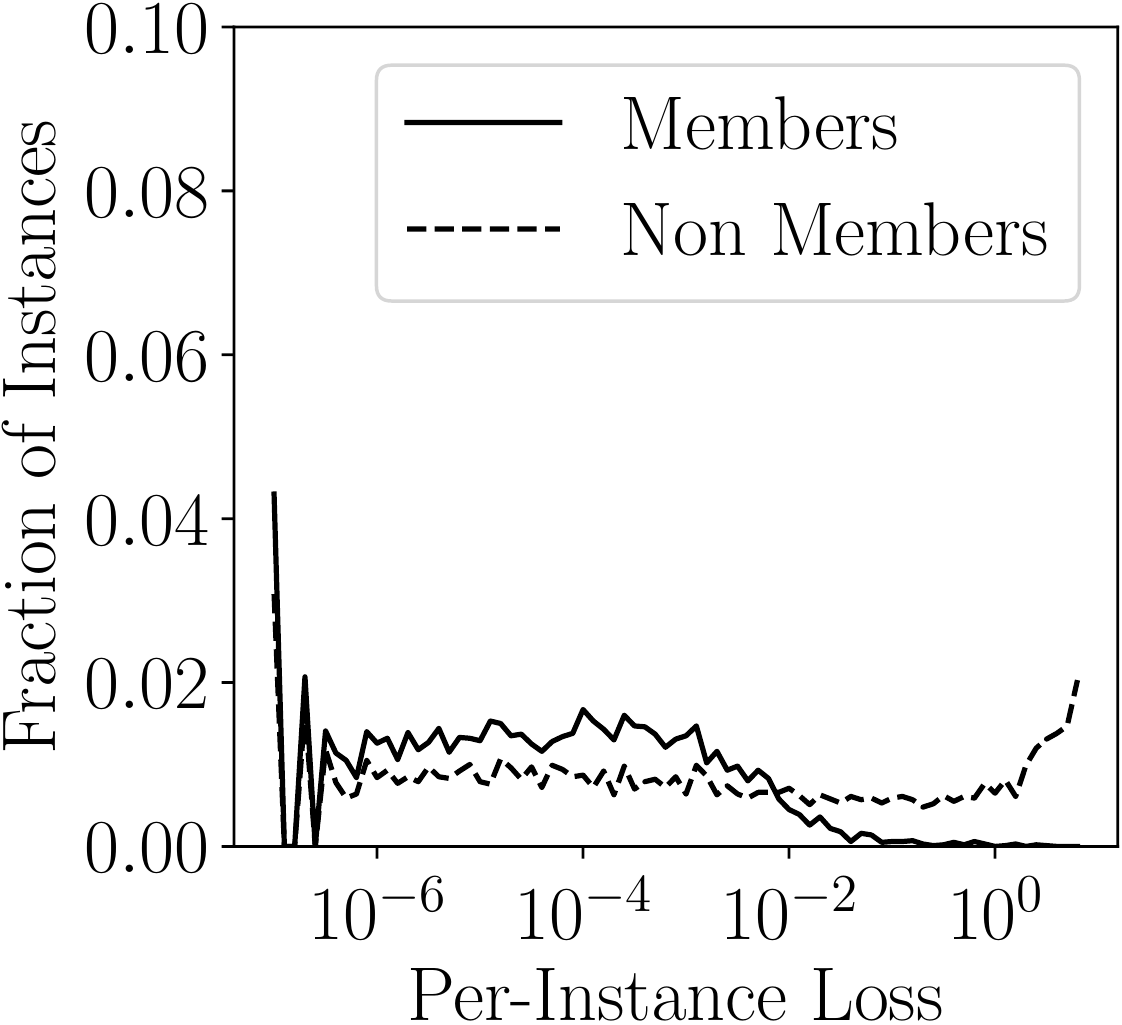}
    \caption{Loss distribution for \bfdataset{RCV1X}}
    \label{fig:rcv1_mi_1_analysis_a}
    \end{subfigure}\qquad \qquad
    \begin{subfigure}[b]{0.35\textwidth}
    \centering
    \includegraphics[width=0.95\linewidth]{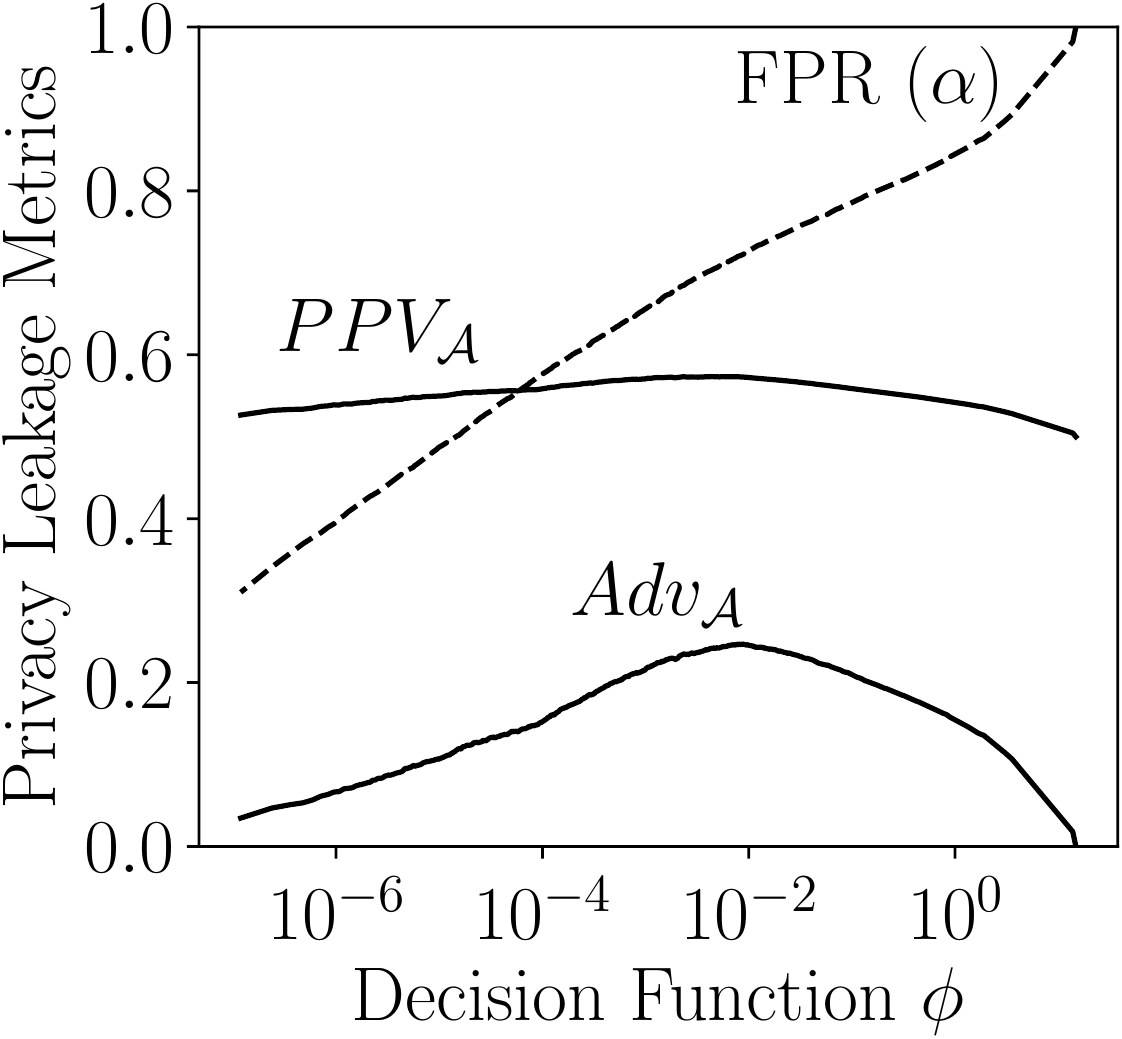}
    \caption{\bfattack{Yeom} on \bfdataset{RCV1X}}
    \label{fig:rcv1_mi_1_analysis_b}
    \end{subfigure}
    \begin{subfigure}[b]{0.35\textwidth}
    \centering
    \includegraphics[width=0.95\linewidth]{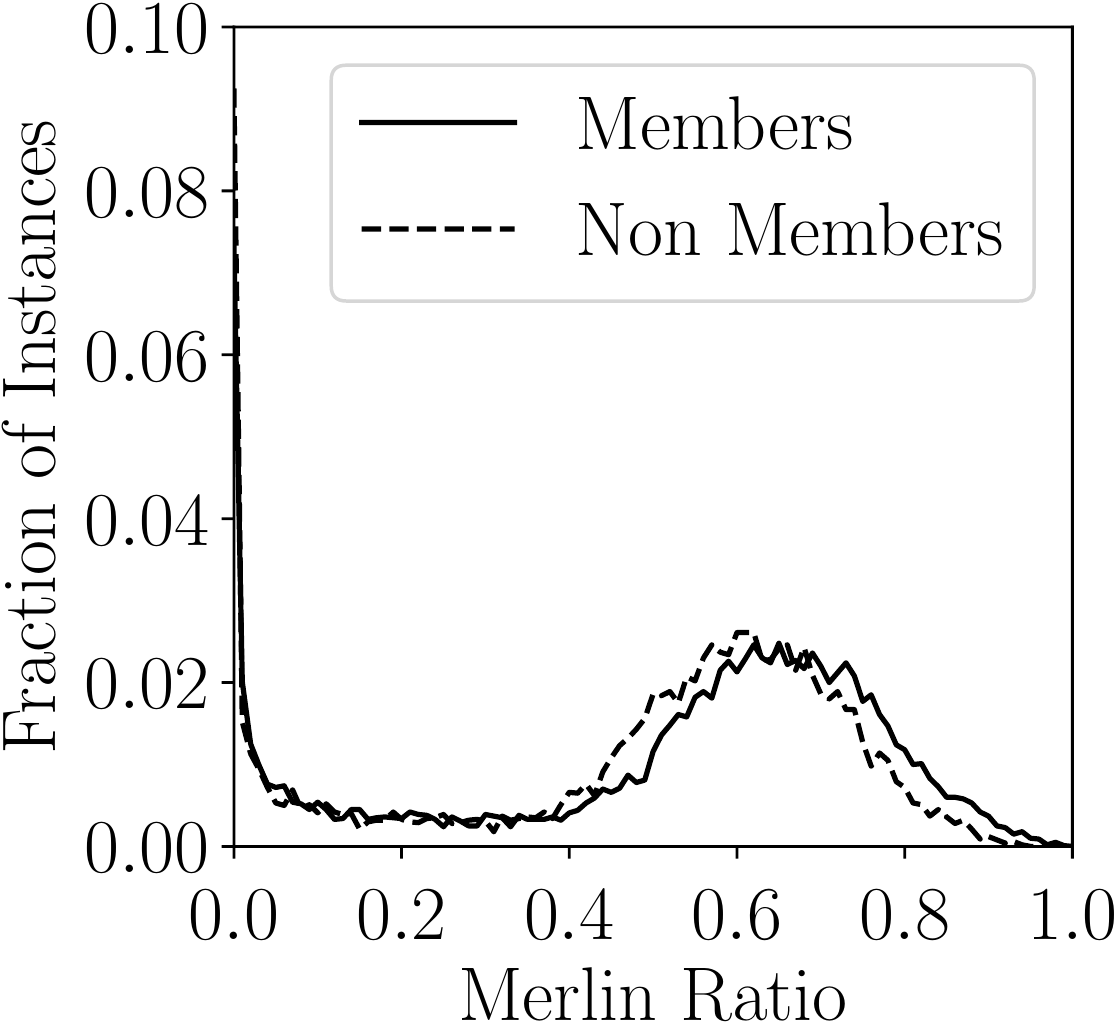}
    \caption{Merlin ratio for \bfdataset{RCV1X}}
    \label{fig:rcv1_mi_2_analysis_a}
    \end{subfigure}\qquad \qquad
    \begin{subfigure}[b]{0.35\textwidth}
    \centering
    \includegraphics[width=0.95\linewidth]{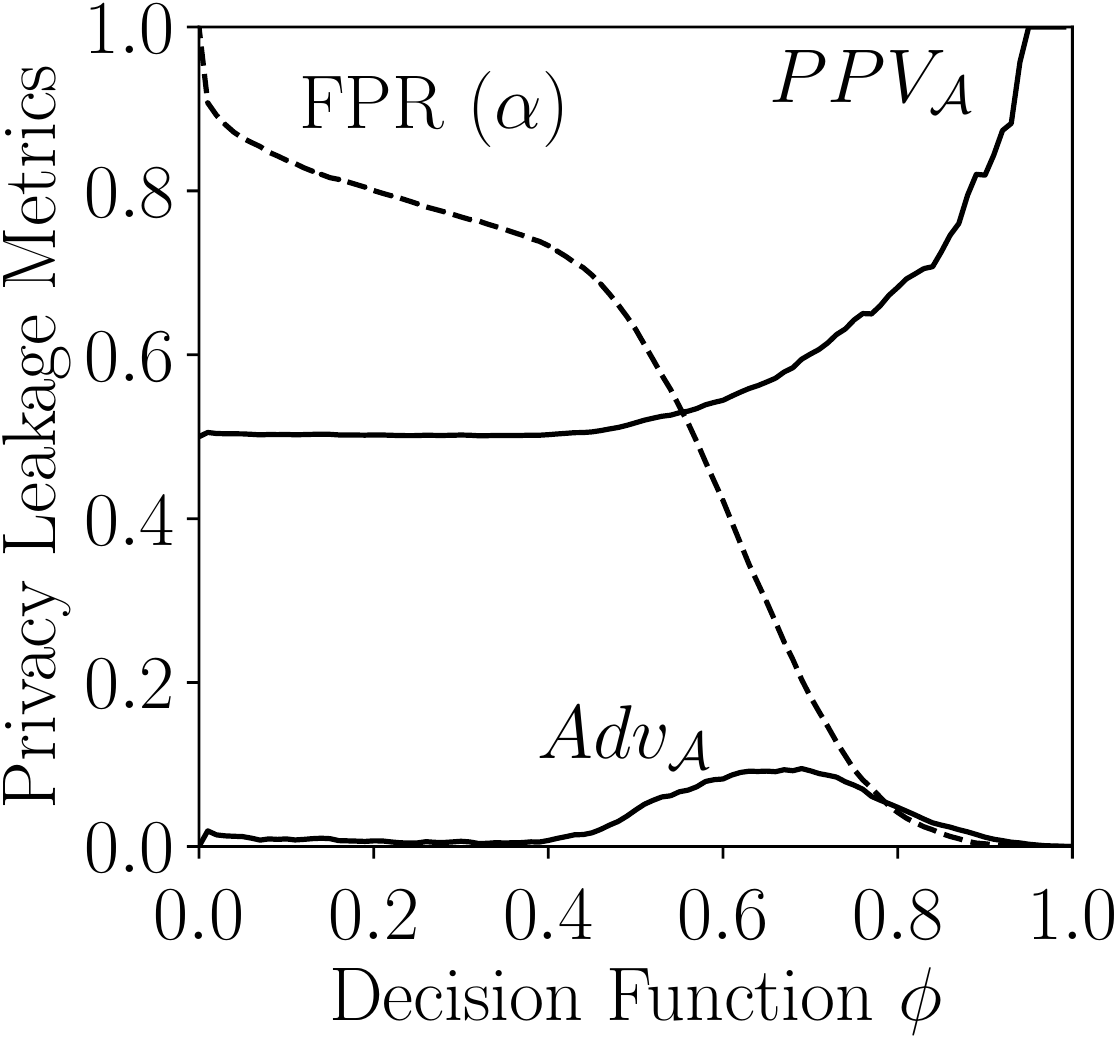}
    \caption{\bfattack{Merlin} on \bfdataset{RCV1X}}
    \label{fig:rcv1_mi_2_analysis_b}
    \end{subfigure}
    \caption{Analysis of \bfattack{Yeom} and \bfattack{Merlin} against non-private models trained on \bfdataset{RCV1X} in the balanced prior setting.}
    \label{fig:rcv1_analysis}
\end{figure*} 

\begin{table*}[tb]
    \centering
    \small
    \begin{tabular}{llS[table-format=2.2]cS[table-format=2.1,separate-uncertainty,table-figures-uncertainty=1]S[table-format=2.1,separate-uncertainty,table-figures-uncertainty=1]}
        & & {$\alpha$} & $\phi$ & {$Adv_\cA$} & {$PPV_\cA$} \\ \hline
        \multirow{5}{*}{\parbox{1.5cm}{\bfattack{Yeom}}} & Fixed FPR & 1.00 & - & {-} & {-} \\
        & Min FPR & 33.00 & \num{0} & 0.4 \pm 0.8 & 10.3 \pm 20.5 \\
        & Max $PPV_\cA$ & 67.00 & \num{0.5 \pm 0.2 e-3} & 24.8 \pm 5.0 & 57.9 \pm 1.0 \\
        & Max $Adv_\cA$ & 70.00 & \num{1.5 \pm 0.6 e-3} & 25.1 \pm 3.2 & 57.7 \pm 0.6 \\
        & Fixed $\phi$ & {-} & \num{3.2 \pm 3.2 e-3} & 26.9 \pm 2.7 & 58.0 \pm 0.8 \\ \hline
        \multirow{5}{*}{\parbox{1.7cm}{\bfattack{Yeom CBT}}} & Min FPR & 0.01 & \rm \num{0}, \num{0}, \num{3.8e-3} & 1.2 \pm 0.3 & 93.1 \pm 3.2 \\
        & Max $PPV_\cA$ & 0.01 & \rm \num{0}, \num{0}, \num{3.8e-3} & 1.2 \pm 0.3 & 93.1 \pm 3.2 \\
        & Fixed FPR & 1.00 & \rm \num{0}, \num{2.4e-8}, \num{3.8e-3} & 1.3 \pm 0.3 & 92.7 \pm 3.5 \\
        & Max $Adv_\cA$ & 70.00 & \rm \num{0}, \num{1.4e-3}, \num{9.0} & 22.4 \pm 3.2 & 59.0 \pm 0.5 \\
        & Fixed $\phi$ & {-} & \rm(\num{0.1}, \num{2.8}, \num{91.1})$\times$\num{e-4} & 21.6 \pm 5.6 & 57.3 \pm 1.2 \\ \hline
        \multirow{5}{*}{\parbox{1.5cm}{\bfattack{Shokri}}} & Min FPR & 0.01 &\num{0.97 \pm 0.01} & 0.6 \pm 0.2 & 91.7 \pm 4.2 \\
        & Max $PPV_\cA$ & 0.01 &\num{0.97 \pm 0.01} & 0.6 \pm 0.2 & 91.7 \pm 4.2 \\
        & Fixed FPR & 1.00 &\num{0.80 \pm 0.01} & 4.6 \pm 0.6 & 84.5 \pm 1.8 \\
        & Fixed $\phi$ & {-} & \num{0.50 \pm 0.00} & 24.0 \pm 0.8 & 57.3 \pm 0.4 \\
        & Max $Adv_\cA$ & 75.00 & \num{0.31 \pm 0.06} & 24.2 \pm 0.5 & 58.0 \pm 0.4 \\ \hline
        \multirow{5}{*}{\parbox{1.6cm}{\bfattack{Shokri CBT}}} & Min FPR & 0.01  & \num{0.5}, \num{1.0}, \num{1.9} & 0.8 \pm 0.2 & 90.9 \pm 3.7 \\
        & Max $PPV_\cA$ & 0.01  & \num{0.5}, \num{1.0}, \num{1.9} & 0.8 \pm 0.2 & 90.9 \pm 3.7 \\
        & Fixed FPR & 1.00 & \num{0.5}, \num{0.9}, \num{1.8} & 1.0 \pm 0.2 & 77.5 \pm 5.6 \\
        & Fixed $\phi$ & {-} & \num{0.5}, \num{0.5}, \num{0.5} & 24.0 \pm 0.8 & 57.3 \pm 0.4 \\
        & Max $Adv_\cA$ & 70.00 & \num{0}, \num{0.6}, \num{1.4} & 20.9 \pm 0.7 & 60.3 \pm 0.8 \\ \hline
        \multirow{4}{*}{\parbox{1.7cm}{\bfattack{Merlin}}} & Min FPR & 0.01 & \num{0.97 \pm 0.01} & 0.2 \pm 0.0 & 98.8 \pm 2.4 \\
        & Max $PPV_\cA$ & 0.01 & \num{0.97 \pm 0.01} & 0.2 \pm 0.0 & 98.8 \pm 2.4 \\
        & Fixed FPR & 1.00 & \num{0.88 \pm 0.00} & 2.6 \pm 0.7 & 81.7 \pm 4.3 \\
        & Max $Adv_\cA$ & 26.00 & \num{0.66 \pm 0.00} & 11.6 \pm 2.3 & 59.5 \pm 2.0 \\ \hline
        \bfattack{Morgan} & Max $PPV_\cA$ & {-} & \rm \num{1.0e-4}, \num{1.5e-3}, \num{0.95} & 0.4 \pm 0.3 & 100.0 \pm 0.0 \\
    \end{tabular}
    \caption{Thresholds selected against non-private models trained on \bfdataset{RCV1X} with balanced prior. 
    \rm 
    The results are averaged over five runs such that the target model is trained from the scratch for each run. \bfattack{Yeom CBT} and \bfattack{Shokri CBT} use class-based thresholds, where $\phi$ shows the triplet of minimum, median and maximum thresholds across all classes. All values, except $\phi$, are reported in percentage.}
    \label{tab:RCV1X_loss_selection_results_uniform_prior}
\end{table*}

\begin{figure*}[ptb]
    \centering
    \begin{subfigure}[b]{0.35\textwidth}
    \centering
    \includegraphics[width=0.95\linewidth]{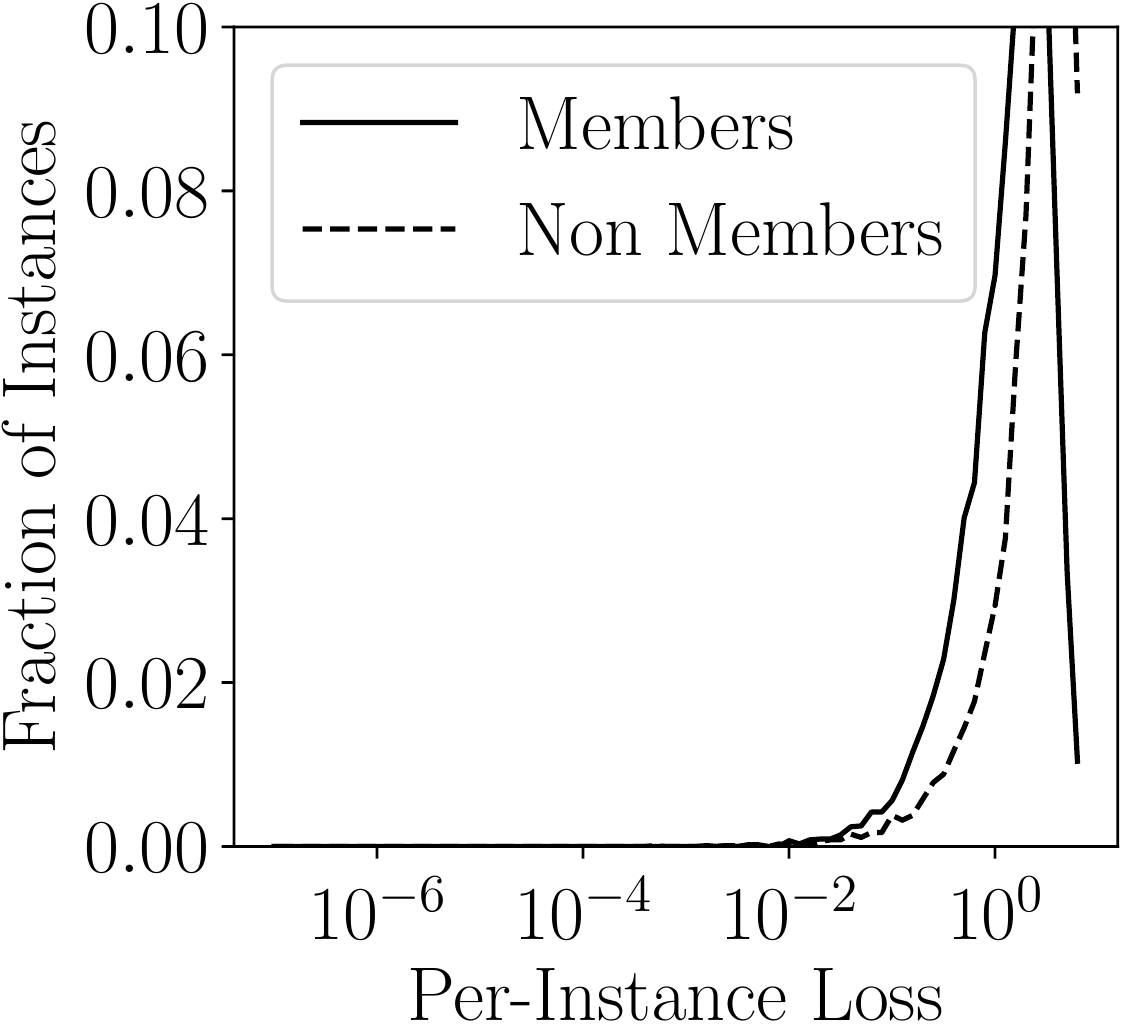}
    \caption{Loss distribution for \bfdataset{CIFAR-100}}
    \label{fig:cifar_100_mi_1_analysis_a}
    \end{subfigure}\qquad \qquad
    \begin{subfigure}[b]{0.35\textwidth}
    \centering
    \includegraphics[width=0.95\linewidth]{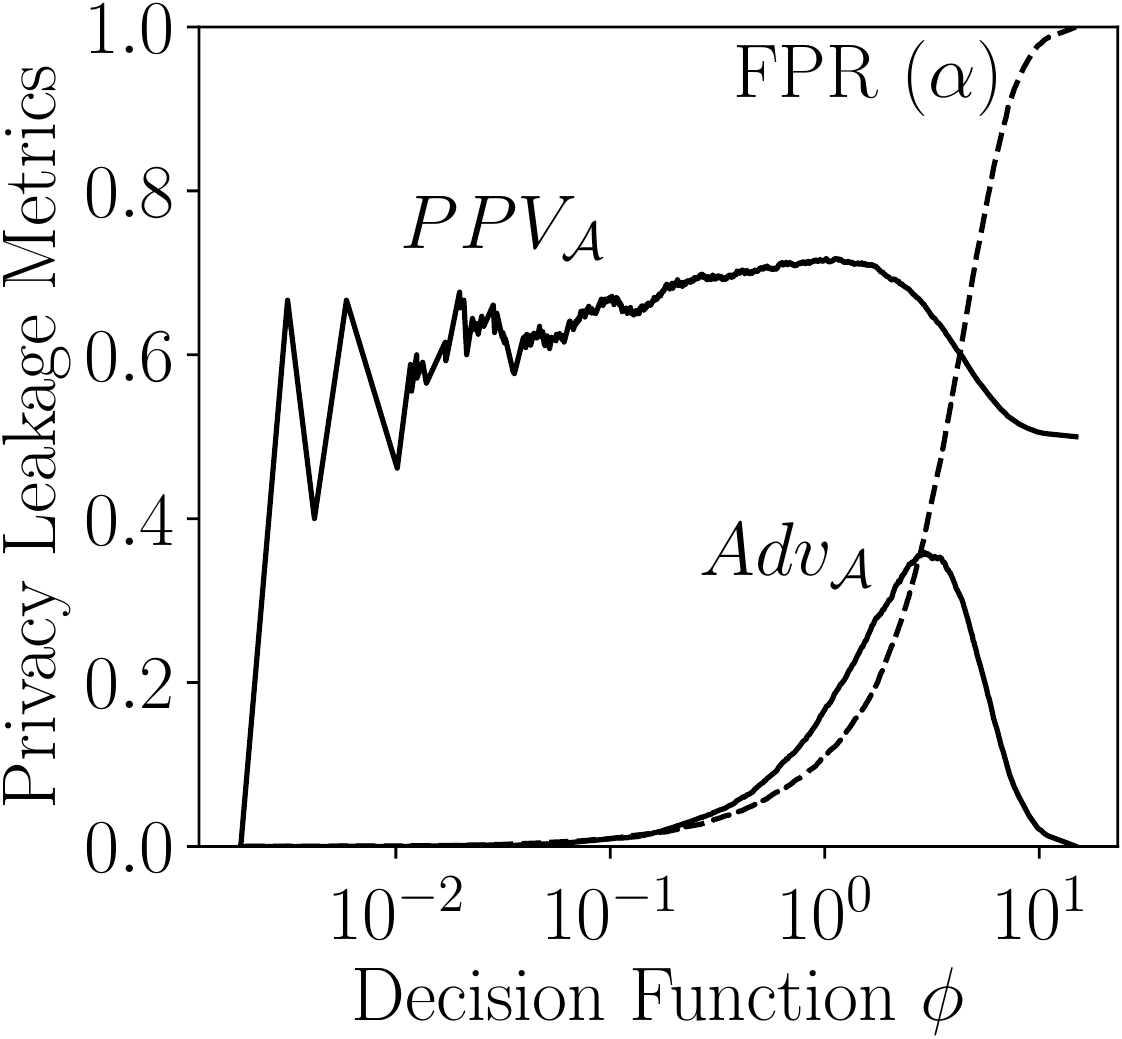}
    \caption{\bfattack{Yeom} on \bfdataset{CIFAR-100}}
    \label{fig:cifar_100_mi_1_analysis_b}
    \end{subfigure}
    \begin{subfigure}[b]{0.35\textwidth}
    \centering
    \includegraphics[width=0.95\linewidth]{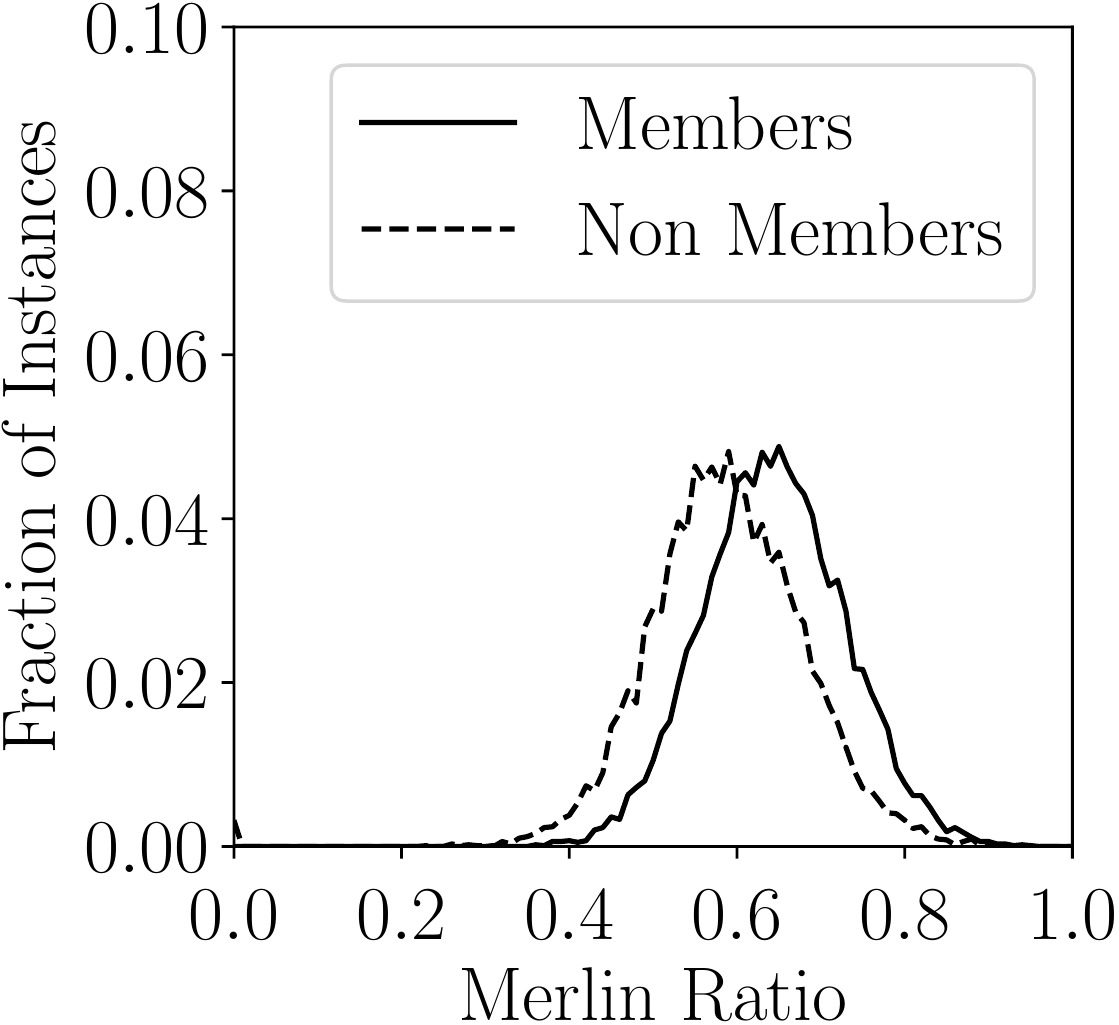}
    \caption{Merlin ratio for \bfdataset{CIFAR-100}}
    \label{fig:cifar_100_mi_2_analysis_a}
    \end{subfigure}\qquad \qquad
    \begin{subfigure}[b]{0.35\textwidth}
    \centering
    \includegraphics[width=0.95\linewidth]{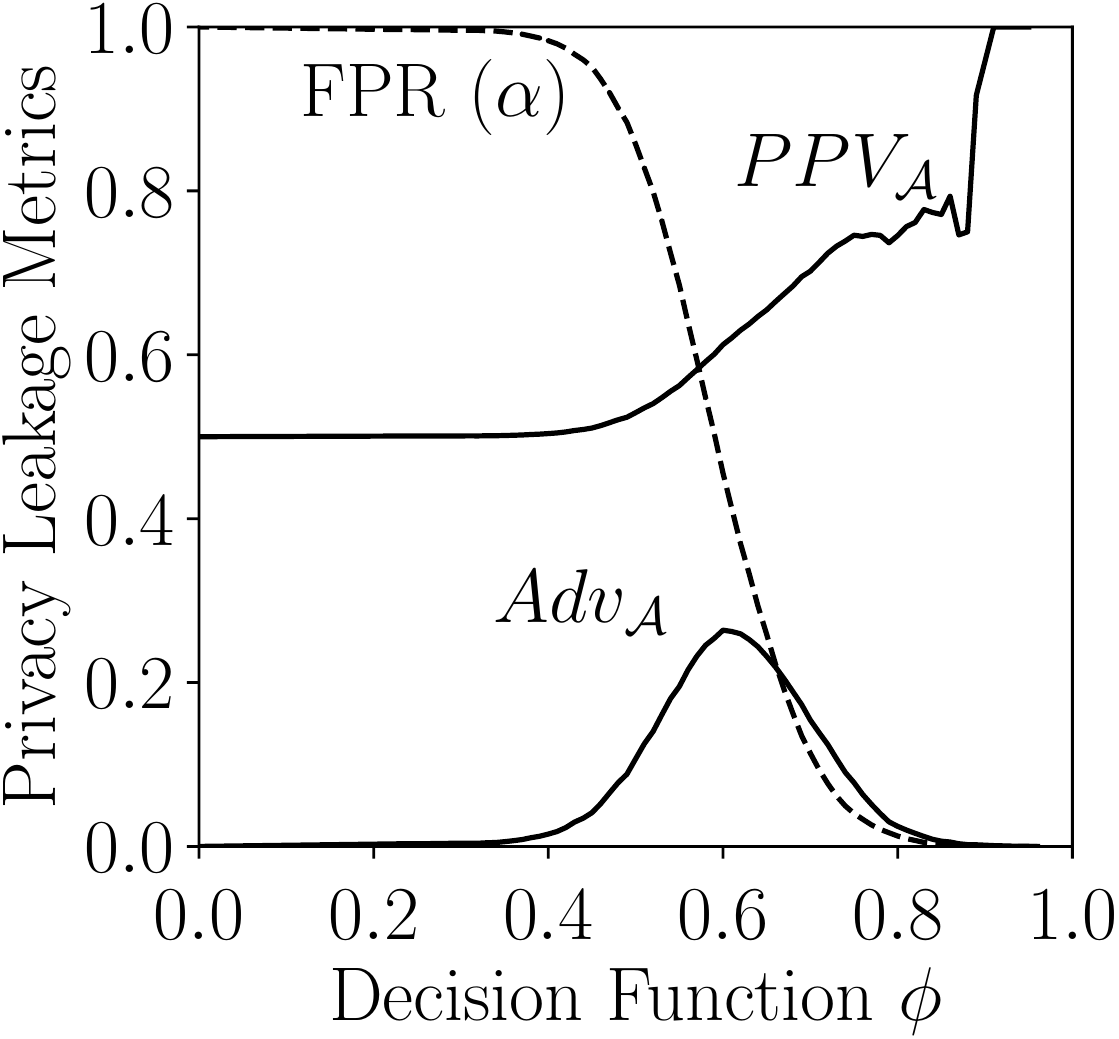}
    \caption{\bfattack{Merlin} on \bfdataset{CIFAR-100}}
    \label{fig:cifar_100_mi_2_analysis_b}
    \end{subfigure}
    \caption{Analysis of \bfattack{Yeom} and \bfattack{Merlin} against non-private models trained on \bfdataset{CIFAR-100} in the balanced prior setting.}
    \label{fig:cifar_100_analysis}
\end{figure*}

\begin{table*}[tb]
    \centering
    \small
    \begin{tabular}{llS[table-format=2.2]cS[table-format=2.1,separate-uncertainty,table-figures-uncertainty=1]S[table-format=2.1,separate-uncertainty,table-figures-uncertainty=1]}
        & & {$\alpha$} & $\phi$ & {$Adv_\cA$} & {$PPV_\cA$} \\ \hline
        \multirow{5}{*}{\parbox{1.5cm}{\bfattack{Yeom}}} & Min FPR & 0.01 & \num{4.3 \pm 2.1 e-3} & 0.0 \pm 0.0 & 33.3 \pm 27.9 \\
        & Fixed FPR & 1.00 & \num{7.2 \pm 0.8 e-2} & 0.7 \pm 0.3 & 65.1 \pm 2.6 \\
        & Max $PPV_\cA$ & 12.00 & \num{1.0 \pm 0.0} & 19.0 \pm 1.6 & 72.7 \pm 0.8 \\
        & Fixed $\phi$ & {-} &\num{2.0 \pm 0.1} & 33.0 \pm 1.6 & 70.3 \pm 1.1 \\
        & Max $Adv_\cA$ & 39.00 & \num{2.9 \pm 0.0} & 37.2 \pm 1.8 & 66.5 \pm 0.6 \\ \hline
        \multirow{5}{*}{\parbox{1.7cm}{\bfattack{Yeom CBT}}} & Min FPR & 0.01 & \rm \num{0}, \num{0.1}, \num{1.8} & 3.4 \pm 0.9 & 81.2 \pm 2.3 \\
        & Max $PPV_\cA$ & 0.01 & \rm \num{0}, \num{0.1}, \num{1.8} & 3.4 \pm 0.9 & 81.2 \pm 2.3 \\
        & Fixed FPR & 1.00 & \rm \num{0}, \num{0.2}, \num{2.0} & 5.1 \pm 0.8 & 81.0 \pm 1.4 \\
        & Fixed $\phi$ & {-} & \rm \num{0.7}, \num{2.0}, \num{3.2} & 34.0 \pm 2.7 & 71.5 \pm 0.7 \\
        & Max $Adv_\cA$ & 40.00 & \rm \num{0.5}, \num{3.0}, \num{4.6} & 37.5 \pm 1.5 & 66.6 \pm 0.6 \\ \hline
        \multirow{5}{*}{\parbox{1.5cm}{\bfattack{Shokri}}} & Min FPR & 0.01 & \num{0.99 \pm 0.00} & 16.5 \pm 1.9 & 64.9 \pm 0.7 \\
        & Max $PPV_\cA$ & 0.01 & \num{0.99 \pm 0.00} & 16.5 \pm 1.9 & 64.9 \pm 0.7 \\
        & Fixed FPR & 1.00 & \num{0.72 \pm 0.03} & 24.6 \pm 0.8 & 63.0 \pm 0.4 \\
        & Fixed $\phi$ & {-} & \num{0.50 \pm 0.00} & 26.0 \pm 0.8 & 62.5 \pm 0.4 \\
        & Max $Adv_\cA$ & 8.00 & \num{0.09 \pm 0.01} & 26.9 \pm 0.9 & 61.3 \pm 0.4 \\ \hline
        \multirow{5}{*}{\parbox{1.6cm}{\bfattack{Shokri CBT}}} & Min FPR & 0.01 & \num{0.3}, \num{0.9}, \num{1.0} & 22.1 \pm 0.6 & 63.7 \pm 0.4 \\
        & Max $PPV_\cA$ & 0.01 & \num{0.3}, \num{0.9}, \num{1.0} & 22.1 \pm 0.6 & 63.7 \pm 0.4 \\
        & Fixed $\phi$ & {-} & \num{0.5}, \num{0.5}, \num{0.5} & 26.0 \pm 0.8 & 62.5 \pm 0.4 \\
        & Fixed FPR & 1.00 & \num{0.1}, \num{0.8}, \num{1.0} & 24.2 \pm 0.6 & 63.4 \pm 0.3 \\
        & Max $Adv_\cA$ & 31.00 & \num{0.0}, \num{0.4}, \num{1.0} & 26.3 \pm 0.9 & 62.4 \pm 0.4 \\ \hline
        \multirow{4}{*}{\parbox{1.7cm}{\bfattack{Merlin}}} & Min FPR & 0.01 & \num{0.92 \pm 0.02} & 0.0 \pm 0.0 & 51.4 \pm 32.0 \\
        & Max $PPV_\cA$ & 0.90 & \num{0.82 \pm 0.00} & 1.6 \pm 0.5 & 75.0 \pm 2.6 \\
        & Fixed FPR & 1.00 & \num{0.82 \pm 0.00} & 1.8 \pm .5 & 74.7 \pm 1.7 \\
        & Max $Adv_\cA$ & 39.00 & \num{0.62 \pm 0.00} & 27.7 \pm 1.3 & 63.3 \pm 0.3 \\ \hline
        \bfattack{Morgan} & Max $PPV_\cA$ & {-} & \rm \num{2.7}, \num{3.7}, \num{0.87} & 0.0 \pm 0.0 & 100.0 \pm 0.0 \\
    \end{tabular}
    \caption{Thresholds selected against non-private models trained on \bfdataset{CIFAR-100} with balanced prior. 
    \rm 
    The results are averaged over five runs such that the target model is trained from the scratch for each run. \bfattack{Yeom CBT} and \bfattack{Shokri CBT} use class-based thresholds, where $\phi$ shows the triplet of minimum, median and maximum thresholds across all classes. All values, except $\phi$, are in percentage.}
    \label{tab:cifar_100_loss_selection_results_uniform_prior}
\end{table*}

\begin{figure*}[tb]
    \centering
\begin{subfigure}[b]{0.3\textwidth}
    \centering
    \includegraphics[width=\linewidth]{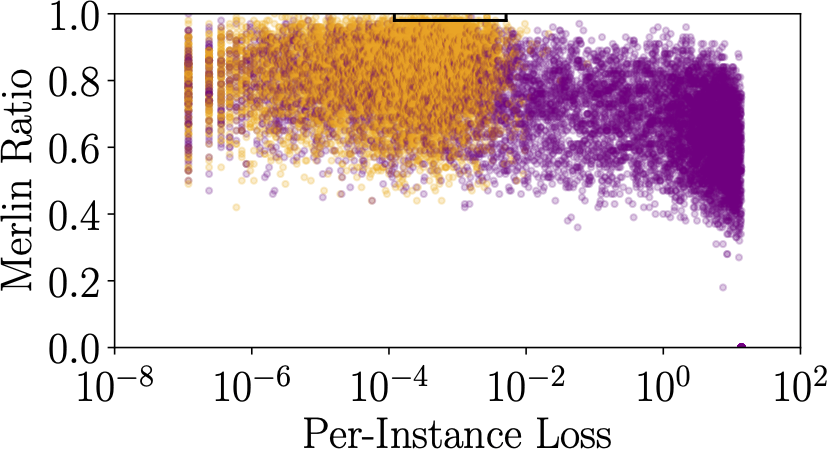}
    \caption{\bfdataset{Texas-100}}\label{fig:morgan_texas_100}
    \end{subfigure}
    \begin{subfigure}[b]{0.3\textwidth}
    \centering
    \includegraphics[width=\linewidth]{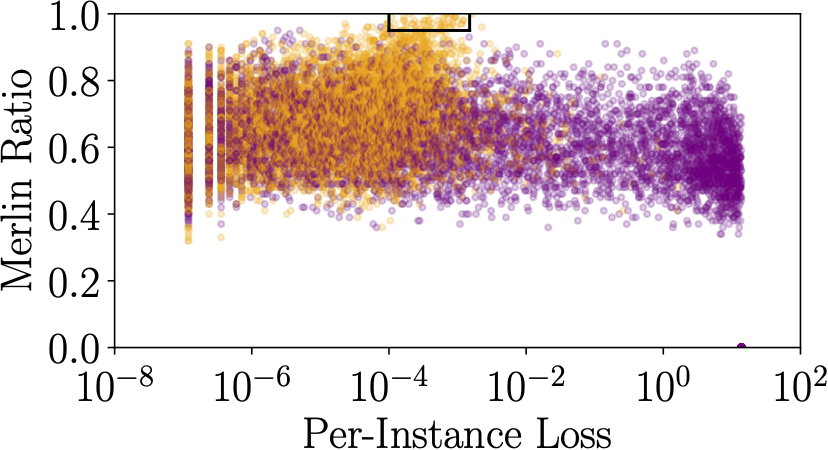}
    \caption{\bfdataset{RCV1X}}\label{fig:morgan_rcv1}
    \end{subfigure}
    \begin{subfigure}[b]{0.3\textwidth}
    \centering
    \includegraphics[width=\linewidth]{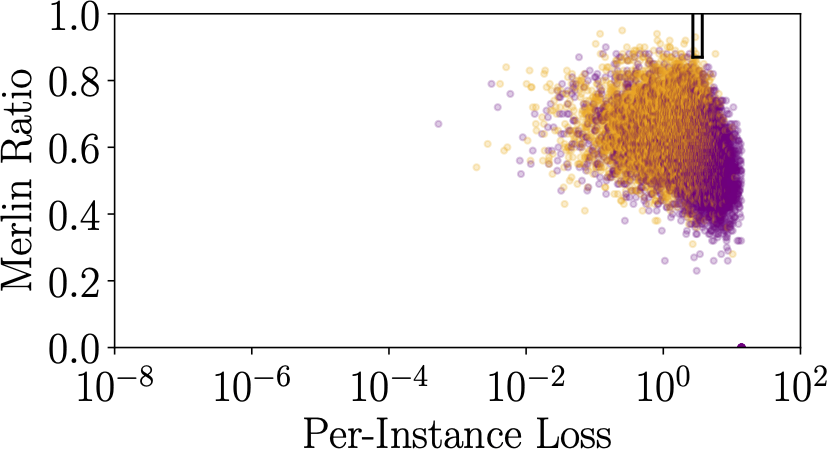}
    \caption{\bfdataset{CIFAR-100}}\label{fig:morgan_cifar_100}
    \end{subfigure}
    \caption{Comparing loss and \bfattack{Merlin} ratio for non-private models trained on different data sets at $\gamma = 1$. \rm Members and non-members are denoted by orange and purple points respectively. Highlighted boxes denote the members identified by \bfattack{Morgan} at max PPV.}
    \label{fig:morgan_plots}
\end{figure*}

\shortsection{Results on \bfdataset{Texas-100}}
We plot the distribution of per-instance loss for a non-private model trained on \bfdataset{Texas-100} in  Figure~\ref{fig:texas_100_mi_1_analysis_a}. A notable difference is that the number of non-members having zero loss is lower than that of \bfdataset{Purchase-100X}. As a result, the false positive rate can be as low as 3\% for this data set. This is depicted in Figure~\ref{fig:texas_100_mi_1_analysis_b} which shows the performance of \bfattack{Yeom} against a non-private model at different thresholds. The trend is similar to what we observe for \bfdataset{Purchase-100X}.

Figure~\ref{fig:texas_100_mi_2_analysis_a} shows the distribution of \bfattack{Merlin} ratio against a non-private model trained on \bfdataset{Texas-100}. The gap between the member and non-member distributions is greater than that of \bfdataset{Purchase-100X} and hence this attack is more effective on this data set. An important indicator is that all members have non-zero \bfattack{Merlin} ratio. The average \bfattack{Merlin} ratio is $0.81 \pm 0.12$ for members whereas it is $0.65 \pm 0.22$ for non-members. Figure~\ref{fig:texas_100_mi_2_analysis_b} shows the performance of \bfattack{Merlin} on non-private model at different count thresholds. These results further validate the effectiveness of selecting a good threshold based on our proposed procedure.  Figure~\ref{fig:morgan_texas_100} shows the scatter plot of per-instance loss and \bfattack{Merlin} ratio for all records. Similar to the case of \bfdataset{Purchase-100X}, more fraction of members are concentrated between \num{1.2e-4} and \num{5.1e-3} loss and have \bfattack{Merlin} ratio greater than \num{0.90}. Table~\ref{tab:texas_100_loss_selection_results_uniform_prior} compares the membership inference attacks across different attack settings on \bfdataset{Texas-100}. As shown, \bfattack{Merlin} achieves higher PPV values than both \bfattack{Yeom} and \bfattack{Shokri}. Using class based thresholds drastically improves PPV for \bfattack{Yeom} such that \bfattack{Yeom CBT} achieves maximum PPV comparable to \bfattack{Merlin}. As with \bfdataset{Purchase-100X}, we observe no benefit of using CBT for \bfattack{Merlin}. While \bfattack{Shokri} achieves 89\% PPV, slightly less than \bfattack{Merlin}, on this data set, using CBT decreases the PPV to 85\%. \bfattack{Morgan} achieves highest PPV among all attacks.

\shortsection{Results on \bfdataset{RCV1X}}
We plot the distribution of per-instance loss of members and non-members for a non-private model trained on \bfdataset{RCV1X} in Figure~\ref{fig:rcv1_mi_1_analysis_a}. While more members are concentrated closer to zero loss than the non-members, we observe that the gap between the two distributions is not as large as with the other data sets. Moreover, $3504 \pm 444$ non-members have zero loss, and hence the minimum possible false positive rate for \bfattack{Yeom} is around 33\%.  Figure~\ref{fig:rcv1_mi_1_analysis_b} shows the performance of \bfattack{Yeom} for different loss thresholds. The maximum PPV that can be achieved using this attack is only around 58\%, at which point the advantage is close to 27\%. Thus while the advantage metric would suggest that there is privacy risk, \bfattack{Yeom} does not pose significant risk as PPV is close to 50\% for balanced prior. \bfattack{Shokri}, on the other hand, achieves a PPV of 91\% (see Table~\ref{tab:leakage_comparison_balanced_prior}) and poses a significant privacy risk.

Figure~\ref{fig:rcv1_mi_2_analysis_a} shows the distribution of \bfattack{Merlin} ratio for a non-private model trained on \bfdataset{RCV1X}. While the gap between distributions is small, the PPV can still be high as depicted in Figure~\ref{fig:rcv1_mi_2_analysis_b}. \bfattack{Merlin} achieves a maximum PPV of around 99\% on an average for threshold values close to 0.97, and hence poses privacy threat even when \bfattack{Yeom} fails. Table~\ref{tab:RCV1X_loss_selection_results_uniform_prior} compares the attacks on \bfdataset{RCV1X} for different attack goals. \bfattack{Yeom} is benefited from using class based thresholds, as the maximum PPV jumps from 58\% to 93\%. However, \bfattack{Merlin} still outperforms \bfattack{Yeom CBT} at maximum PPV setting. As with other data sets, \bfattack{Shokri} does not benefit from CBT technique. Figure~\ref{fig:morgan_rcv1} shows the loss and \bfattack{Merlin} ratio scatter plot on \bfdataset{RCV1X}. Though the members and non-members are less differentiated, \bfattack{Morgan} is still able to identify the most vulnerable members with 100\% confidence (see Table~\ref{tab:RCV1X_loss_selection_results_uniform_prior}).

\begin{table*}[ptb]
    \centering
    \small
    \begin{tabular}{c|cS[table-format=3.0]S[table-format=2.2]S[table-format=2.2,separate-uncertainty,table-figures-uncertainty=1]S[table-format=2.1,separate-uncertainty,table-figures-uncertainty=1]}
        & & {$\epsilon$} & {$\alpha$} & {$\phi$} & {Max $PPV_\cA$} \\ \hline
        \multirow{12}{*}{\rm\bfdataset{Texas-100}} & \multirow{3}{*}{\bfattack{Yeom}} & 1 & 0.05 & \num{1.0 \pm 0.5 e-2} & 58.5 \pm 4.6 \\
        & & 10 & 0.06 & \num{1.0 \pm 0.4 e-5} & 65.5 \pm 19.8 \\
        & & 100 & 0.02 & \num{0.2 \pm 0.2 e-5} & 58.8 \pm 24.0 \\ \cline{2-6}
        & \multirow{3}{*}{\bfattack{Shokri}} & 1 & 52.00 & \num{0.00 \pm 0.00} & 51.7 \pm 1.0 \\
        & & 10 & 22.00 & \num{0.60 \pm 0.01} & 51.5 \pm 1.7 \\
        & & 100 & 18.00 & \num{0.64 \pm 0.00} & 55.4 \pm 1.2 \\ \cline{2-6}
        & \multirow{3}{*}{\bfattack{Merlin}} & 1 & 0.13 & \num{0.92 \pm 0.00} & 61.0 \pm 5.4 \\
        & & 10 & 0.05 & \num{0.94 \pm 0.00} & 67.2 \pm 19.3 \\
        & & 100 & 0.45 & \num{0.92 \pm 0.00} & 59.5 \pm 3.6 \\ \cline{2-6}
        & \multirow{3}{*}{\bfattack{Morgan}} & 1 & {-} & \rm \num{0.3}, \num{1.4}, \num{0.93} & 85.6 \pm 19.8 \\
        & & 10 & {-} & \rm \num{5.4e-2}, \num{0.2}, \num{0.93} & 76.7 \pm 26.1 \\
        & & 100 & {-} & \rm \num{0}, \num{8.4e-5}, \num{0.92} & 68.2 \pm 17.9 \\ \hline
        
        \multirow{12}{*}{\rm\bfdataset{RCV1X}} & \multirow{3}{*}{\bfattack{Yeom}} & 1 & 13.00 & \num{6.0 \pm 1.1 e-4} & 51.7 \pm 0.5 \\
        & & 10 & 60.00 & \num{2.4 \pm 0.3 e-2} & 51.8 \pm 0.2 \\
        & & 100 & 70.00 & \num{3.7 \pm 0.3 e-2} & 53.0 \pm 0.1 \\ \cline{2-6}
        & \multirow{3}{*}{\bfattack{Shokri}} & 1 & 60.00 & \num{0.51 \pm 0.01} & \num{50.6 \pm 0.3} \\
        & & 10 & 10.00 &  \num{0.61 \pm 0.01} &   \num{55.0 \pm 1.8}\\
        & & 100 & 2.20 &  \num{0.72 \pm 0.01} &  \num{60.6 \pm 3.8} \\ \cline{2-6}
        & \multirow{3}{*}{\bfattack{Merlin}} & 1 & 3.00 & \num{0.80 \pm 0.01} & 52.6 \pm 2.0 \\
        & & 10 & 0.10 & \num{0.89 \pm 0.01} & 70.9 \pm 12.9 \\
        & & 100 & 0.04 & \num{0.92 \pm 0.01} & 86.9 \pm 11.6 \\ \cline{2-6}
        & \multirow{3}{*}{\bfattack{Morgan}} & 1 & {-} & \rm \num{0}, \num{5.0e-6}, \num{0.80} & 75.0 \pm 21.1 \\
        & & 10 & {-} & \rm \num{4.7e-5}, \num{3.6}, \num{0.89} & 77.4 \pm 11.0 \\
        & & 100 & {-} & \rm \num{2.7e-5}, \num{12}, \num{0.92} & 89.1 \pm 11.3 \\ \hline
        
        \multirow{12}{*}{\rm\bfdataset{CIFAR-100}} & \multirow{3}{*}{\bfattack{Yeom}} & 1 & 0.11 & \num{4.3 \pm 0.0} & 68.4 \pm 27.1 \\
        & & 10 & 0.11 & \num{1.2 \pm 0.1} & 64.2 \pm 29.4 \\
        & & 100 & 0.80 & \num{1.3 \pm 0.0} & 56.3 \pm 3.2 \\ \cline{2-6}
        & \multirow{3}{*}{\bfattack{Shokri}} & 1 & 0.08 & \num{0.95 \pm 0.02} & 50.1 \pm 0.1 \\
        & & 10 & 0.20 & \num{0.89 \pm 0.02} & 50.1 \pm 0.1 \\
        & & 100 & 0.06 & \num{0.96 \pm 0.02} & 50.2 \pm 0.3 \\ \cline{2-6}
        & \multirow{3}{*}{\bfattack{Merlin}} & 1 & 0.11 & \num{0.85 \pm 0.01} & 57.3 \pm 10.1 \\
        & & 10 & 0.07 & \num{0.79 \pm 0.01} & 66.6 \pm 17.6 \\
        & & 100 & 0.12 & \num{0.77 \pm 0.01} & 62.0 \pm 11.5 \\ \cline{2-6}
        & \multirow{3}{*}{\bfattack{Morgan}} & 1 & {-} & \rm \num{4.5}, \num{4.8}, \num{0.85} & 62.7 \pm 7.7 \\
        & & 10 & {-} & \rm \num{0.7}, \num{3.1}, \num{0.79} & 77.3 \pm 12.6 \\
        & & 100 & {-} & \rm \num{1.4}, \num{2.2}, \num{0.77} & 89.7 \pm 13.5 \\
    \end{tabular}
    \caption{MI attacks against private models trained on different data sets in the balanced prior setting. \rm $\alpha$ and PPV values are in percentage.}
    \label{tab:gamma_1_mi_dp_comparison}
\end{table*}

\shortsection{Results on \bfdataset{CIFAR-100}}
Figure~\ref{fig:cifar_100_mi_1_analysis_a} shows the distribution of per-instance loss for a non-private model trained on \bfdataset{CIFAR-100}. The loss of both members and non-members is high, since the model does not completely overfit on this data set.  Figure~\ref{fig:cifar_100_mi_1_analysis_b} shows the performance of \bfattack{Yeom} for different loss thresholds. Figures~\ref{fig:cifar_100_mi_2_analysis_a} and \ref{fig:cifar_100_mi_2_analysis_b} show the distribution of \bfattack{Merlin} ratio and leakage metrics for different thresholds. Using CBT on \bfattack{Yeom} increases the maximum PPV from 73\% to 81\% (Table~\ref{tab:cifar_100_loss_selection_results_uniform_prior}), exceeding that of \bfattack{Merlin}. \bfattack{Shokri} is less successful on this data set, achieving only 65\% PPV, and does not benefit from the CBT technique. Figure~\ref{fig:morgan_cifar_100} shows the loss and \bfattack{Merlin} ratio of all records on \bfdataset{CIFAR-100}. As shown, members with high \bfattack{Merlin} ratio are distinguishable from non-members. \bfattack{Morgan} is able to identify certain members with 100\% PPV (see Table~\ref{tab:cifar_100_loss_selection_results_uniform_prior}).

\section{Additional Results for Private Models}\label{appendix:private_results}
The plots for private models on all three data sets are similar to that of \bfdataset{Purchase-100X} and do not convey any additional information, hence we do not include them. They only corroborate the fact that adding privacy noise reduces the gap between member and non-member distributions and in turn limits the attack success. Instead, we directly compare the maximum PPV of the attacks against private models trained with varying privacy loss budgets across all three data sets in Table~\ref{tab:gamma_1_mi_dp_comparison}. 
As with \bfdataset{Purchase-100X}, adding noise allows \bfattack{Yeom} to set much smaller thresholds on \bfdataset{Texas-100}. For higher $\epsilon$ values, \bfattack{Yeom} poses some privacy threat but the PPV deviation is high. On \bfdataset{RCV1X}, the $\alpha$ values are still high and hence \bfattack{Yeom} is not successful even for $\epsilon = 100$. On \bfdataset{CIFAR-100}, \bfattack{Yeom} is able to achieve considerably high PPV values for $\epsilon = 1$ and $\epsilon = 10$, but the deviation is very high, indicating that the attack is only successful in some runs. At $\epsilon = 100$, \bfattack{Yeom} fails to pose any threat. \bfattack{Shokri} achieves close to 50\% PPV across all data sets, and hence fails to pose any privacy threat even against models trained with large privacy loss budgets. \bfattack{Merlin} achieves higher PPV than both \bfattack{Yeom} and \bfattack{Shokri} on an average across all data sets. Similar to \bfdataset{Purchase-100X}, \bfattack{Merlin} achieves high PPV values for $\epsilon = 10$ and $\epsilon = 100$ on \bfdataset{RCV1X}. However, it does not achieve high enough PPV on \bfdataset{Texas-100} and \bfdataset{CIFAR-100} to pose serious privacy threat, even for $\epsilon = 100$. On the other hand, \bfattack{Morgan} poses serious privacy threat against models trained with high privacy loss budgets across all the tested data sets.

\end{document}